\let\cP\relax
\let\E\relax
\let\R\relax
\newcommand{\sqrtsum}{\textsc{SqrtSum}}
\let\cite\citep
\newcommand{\tvx}{\tilde{\vx}}
\newcommand{\SW}{\mathsf{SW}}
\newcommand{\delimit}[3]{\newcommand{#1}[1]{\left#2##1\right#3}}
\def\va{{\bm{a}}}
\def\vb{{\bm{b}}}
\def\vc{{\bm{c}}}
\def\vd{{\bm{d}}}
\def\vf{{\bm{f}}}
\def\vh{{\bm{h}}}
\def\vo{{\bm{o}}}
\def\vu{{\bm{u}}}
\def\vv{{\bm{v}}}
\def\vw{{\bm{w}}}
\def\vx{{\bm{x}}}
\def\vy{{\bm{y}}}
\def\vz{{\bm{z}}}
\def\mA{{\mathbf{A}}}
\let\eps\epsilon
\newcommand{\commentsymbol}{\it\color{gray}$\triangleright$~}
\algrenewcommand\algorithmiccomment[1]{\hfill{\commentsymbol#1}}
\newcommand{\ie}{{\em i.e.}\xspace}
\newcommand{\eg}{{\em e.g.}\xspace}
\let\op\operatorname
\newcommand\size{\op{size}}
\newcommand\round{\op{round}}
\let\Root\varnothing
\let\ip\ev
\newcommand\sharpP{{\sf \#P}}
\newcommand{\solcon}{\mathsf{SolCon}}
\newcommand{\cmark}{\ding{51}}%
\newcommand{\xmark}{\ding{55}}%
\newcommand{\supp}{\op{supp}}
\newcommand\ind{\mathbbm 1}
\newcommand{\ioannis}[1]{\textcolor{green}{[Ioannis: #1]}}
\definecolor{briancolor}{rgb}{0, .5, 0}
\newcommand{\brian}[1]{\textcolor{briancolor}{[Brian: #1]}}
\renewcommand{\vec}[1]{\bm{#1}}
\newcommand{\mat}[1]{\mathbf{#1}}
\newcommand{\lesseps}{<_{\epsilon \rightarrow 0^+ }}
\newcommand{\gteps}{>_{\epsilon \rightarrow 0^+ }}
\newcommand{\CLS}{\mathsf{CLS}}
\newcommand{\FIXP}{\mathsf{FIXP}}
\newcommand{\Pol}{\mathsf{P}}
\crefname{question}{question}{questions}
\Crefname{question}{Question}{Questions}
\title{The Complexity of Equilibrium Refinements in Potential Games\footnote{Authors ordered alphabetically. The previous version of this preprint contained results concerning normal-form proper equilibria; these results have now been extended and moved to a separate paper.}}
\author[1]{Ioannis Anagnostides}
\author[1]{Maria-Florina Balcan}
\author[1]{Kiriaki Fragkia}
\author[1,3]{Tuomas Sandholm}
\author[1]{Emanuel Tewolde}
\author[2]{Brian Hu Zhang}
\affil[1]{Carnegie Mellon University}
\affil[2]{Massachusetts Institute of Technology}
\affil[3]{Additional affiliations: Strategy Robot, Inc., Strategic Machine, Inc., Optimized Markets, Inc.}
\affil[ ]{}
\affil[ ]{\texttt{\{ianagnos,ninamf,kiriakif,sandholm,etewolde\}}\texttt{@cs.cmu.edu}, \texttt{zhangbh}\texttt{@csail.mit.edu}}
\begin{document}

\begin{titlepage}
\maketitle
\pagenumbering{gobble}

\begin{abstract}
Refinements of the Nash equilibrium---most notably Selten's trembling-hand perfect equilibrium and Myerson's proper equilibrium---are at the heart of microeconomic theory, addressing some of the deficiencies of Nash's original concept. While the complexity of computing equilibrium refinements has been at the forefront of algorithmic game theory research, it has remained open in the seminal class of potential games; we close this fundamental gap in this paper.

We first establish that computing a pure(-strategy) perfect or proper equilibrium is $\mathsf{PLS}$-complete in concise potential games in normal form. For pure perfect equilibria, we extend this result to general polytope games, which includes extensive-form games. We next turn to more structured classes of games, namely symmetric network congestion and symmetric matroid congestion games. For both classes, we show that a pure perfect equilibrium can be computed in polynomial time, strengthening the existing results for pure Nash equilibria. More broadly, we make a connection between strongly polynomial-time algorithms and efficient perturbed optimization using fractional interpolation. On the other hand, we establish that, for a certain class of potential games, there is an exponential separation in the length of the best-response path between perfect and Nash equilibria. Finally, for mixed strategies, we prove that computing a point geometrically near a perfect equilibrium requires a doubly exponentially small perturbation even in $3$-player potential games in normal form. As a byproduct, this significantly strengthens and simplifies a seminal result of Etessami and Yannakakis (FOCS '07). On the flip side, in the special case of polymatrix potential games, we show that equilibrium refinements are amenable to perturbed gradient descent dynamics, thereby belonging to the complexity class $\mathsf{CLS}$. This provides a principled and practical way of refining the landscape of gradient descent in constrained optimization.
\end{abstract}

\newpage
\tableofcontents

\end{titlepage}
\pagenumbering{arabic}
\section{Introduction}

A forceful critique of the Nash equilibrium~\citep{Nash50:Equilibrium}---the predominant solution concept in game theory---is centered on its non-uniqueness: a game may admit multiple Nash equilibria, with some being more sensible than others. A pathbreaking line of work in economics has put forward \emph{equilibrium refinements} as a compelling antidote to this conundrum. By now, an entire hierarchy of equilibrium concepts has emerged that differ depending on the game representation and the refinement criterion~\citep{VanDamme91:Stability}. Perhaps the most popular refinement of the Nash equilibrium is Selten's \emph{(trembling-hand) perfect equilibrium}~\citep{Selten75:Reexamination}. The impact of this concept is witnessed by the fact that Selten, together with Nash and Harsanyi, went on to win the Nobel prize in economics for this work.

From an algorithmic standpoint, a flurry of results has established that, for the most part, identifying such equilibrium refinements is no harder than computing Nash equilibria (highlighted at greater length in~\Cref{sec:related}). In particular, the rough complexity landscape that has come to light after decades of research is that equilibrium computation---under any of the usual refinements and game representations---in two-player zero-sum games is in $\Pol$, in two-player general-sum games is $\PPAD$-complete, whereas in $n$-player games is $\FIXP$-complete (\Cref{sec:related}).

However, surprisingly, the complexity of equilibrium refinements has remained unexplored in the seminal class of \emph{potential games}~\citep{monderer1996potential,Rosenthal73:Class}. Such games encompass a diverse panoply of strategic interactions including \emph{tacit coordination} among agents with aligned interests, a central problem in the social sciences with ubiquitous applications~\citep{Schelling60:Strategy}; routing problems in large communication networks in the absence of a centralized authority, such as the Internet~\citep{Roughgarden00:How}; and more broadly, resource allocation problems~\citep{Rosenthal73:Class}. Characterizing and indeed enhancing the quality of Nash equilibria in potential games---primarily through the lens of \emph{price of anarchy}~\citep{Koutsoupias99:Worst}---has been at the forefront of the algorithmic game theory agenda, as discussed further in~\Cref{sec:related}. We take a complementary perspective on this problem revolving around equilibrium refinements.

As it turns out, the complexity of computing a Nash equilibrium in potential games differs drastically from that in general games. First, unlike general games, potential games always admit a \emph{pure} Nash equilibrium, that is, a strategy profile in which no player randomizes. \citet{Fabrikant04:Complexity} famously showed that computing a pure Nash equilibrium is complete for the class \PLS, which stands for \emph{polynomial local search}~\citep{Johnson88:How}. Second, computing an approximate \emph{mixed} Nash equilibrium is complete for $\PLS \cap \PPAD = \CLS$~\citep{Fearnley23:Complexity,Babichenko21:Settling}.

\subsection{Overview of our results}

Our contribution here is to characterize the complexity of various equilibrium refinements in potential games. Our main results are summarized in~\Cref{tab:results}.

The first takeaway of our results is that, for the most part, equilibrium refinements in potential games share the same computational properties as Nash equilibria, both on the complexity and the algorithmic front. Indeed, we first show that computing \emph{pure(-strategy)} perfect or proper equilibria in concise potential games in normal form is in \PLS, thereby being polynomial-time equivalent to Nash equilibria. In fact, for perfect equilibria, our $\PLS$ membership applies beyond normal-form games, to general \emph{polytope games}; extensive-form games constitute a canonical member of polytope games. Second, we extend existing positive results for Nash equilibria to structured classes of games---namely matroid congestion games and symmetric network congestion games---to perfect equilibria. Third, for computing \emph{mixed-strategy} perfect and proper equilibria, we establish $\CLS$ membership in polymatrix games. We obtain this through a natural perturbed version of gradient descent that provides a principled and practical way of refining the solution space in constrained optimization problems, complementing a long line of work in contemporary optimization theory (\Cref{sec:related}). Fourth, we show that existing lower bounds on the price of anarchy of Nash equilibria can be extended to both perfect and proper equilibria.

On the flip side, some of our results turn this narrative around. For example, we establish exponential convergence separations between Nash and perfect (or proper) best-response dynamics. Surprisingly, this exponential separation goes both ways: there are games in which convergence to Nash equilibria is fast but to perfect (or proper) equilibria slow, and \emph{vice versa}. Also, on the price of anarchy front, we find that perfect or proper equilibria can be arbitrarily more efficient, and this is so even in comparison to the equilibria typically reached by learning dynamics (such as gradient descent, \Cref{fig:quiver_perfect}).

{
\begin{table}[!ht]
    \centering
    \footnotesize
    \caption{Overview of our main results.}
    \begin{tabular}{@{}l|cc@{}}
\toprule
\textbf{Game class / Problem} & \textbf{Perfect} & \textbf{Proper} \\
\midrule
\multicolumn{3}{l}{\textit{Pure strategies in potential games}} \\
\quad Concise normal form & $\PLS$-c (\Cref{theorem:PLS-complete}) & $\PLS$-c (\Cref{theorem:PLS-complete}) \\
\quad \multirow{1}{*}{Extensive form} & \multirow{1}{*}{$\PLS$-c (\Cref{th:efpe,th:qpe})} & -- \\
\quad Polytope games & $\PLS$-c (\Cref{theorem:polytope-perfect}) & -- \\
\midrule
\multicolumn{3}{l}{\textit{Mixed strategies in potential games}} \\
\quad Polymatrix & $\CLS$ (\Cref{cor:CLS}) & $\CLS$ (\Cref{cor:CLS}) \\
\quad \multirow{2}{*}{3-player identical-interest} & \multicolumn{2}{c}{$\Omega(1)$ distance from $\epsilon$-PE to exact PE even for $\epsilon = 1/2^{2^n}$} \\
 & \multicolumn{2}{c}{(\Cref{theorem:doublyexponentiallysmall,theorem:3playerdoublyexp})} \\
\midrule
\multicolumn{3}{l}{\textit{Tractable subclasses}} \\
\quad Symmetric matroid congestion & $\P$ (\Cref{theorem:polysteps}) & --- \\
\quad Symmetric network congestion & $\P$ (\Cref{theorem:symmetricnetwork}) & --- \\
\midrule
\multicolumn{3}{l}{\textit{Verification complexity}} \\
\quad 3-player identical-interest & $\NP$-hard (\Cref{prop:hard-verif}) & --- \\
\midrule
\multicolumn{3}{l}{\textit{Exponential convergence separations for best-response dynamics}} \\
\quad Poly. to Nash / Exp. to refinement & \Cref{theorem:expsep} & \Cref{cor:exppaths for proper} \\
\quad Poly. to refinement / Exp. to Nash & \Cref{theorem:expsep-opo} & \Cref{cor:exppaths for proper} \\
\midrule
\multicolumn{2}{l}{\textit{Price of anarchy}} \\
\quad Polynomial degree-$d$ congestion games & $d^{\Omega(d)}$ (\Cref{theorem:informal-poa-lower}) & $d^{\Omega(d)}$ (\Cref{theorem:informal-poa-lower}) \\
\bottomrule
\end{tabular}
    \label{tab:results}
\end{table}
}

The second main takeaway of our results is that certain seminal hardness results in general-sum games can be extended to potential games. First, even in a three-player potential game, we show that one needs to take \emph{doubly exponentially} small precision to guarantee geometric closeness to Nash equilibria. (We obtain this as a byproduct of a lower bound relating $\epsilon$-perfect to exact perfect in potential games; \Cref{theorem:doublyexponentiallysmall}.) This significantly refines a key result in the landmark paper of~\citet{Etessami07:Complexity}, which was proven in general-sum games and with a significantly more involved argument. In a similar vein, we establish \NP-hardness of verifying whether a purported perfect equilibrium is indeed one even in potential games. This strengthens the result of \citet{Hansen10:Computational}. Given the structure present in potential games \emph{vis-\`a-vis} general-sum games, it was hitherto highly unclear whether intractability persists for those central problems.

Taken together, our results shed light on the complexity of equilibrium refinements in potential games. Given the tremendous impact of equilibrium refinements in economics and the central role of potential games in algorithmic game theory, we hope our results will stimulate further exploration at this fertile intersection.

\section{Preliminaries}

Before we present our results in more detail, we provide the necessary notation and game theory background, with an emphasis on equilibrium refinements (\Cref{sec:eq-ref}).

\subsection{Game classes}

A \emph{normal-form} game $\Gamma$ consists of a finite set of players $[n] = \{1, \ldots n\}$. Each player has a finite set of actions $\cA_i$ that consists of pure strategies. We denote  $m_i = |\cA_i|$ and $m = \max_{i \in [n]} |\cA_i|$. The utility function of player $i$, denoted as $u_i:\prod_{j=1}^{n} \cA_j \to \R$ maps a strategy profile, $ (a_1, \ldots, a_n) \in \prod_{j=1}^{n} \cA_j := \cA$, to player $i$'s payoff $u_i(a_1, \ldots, a_n)$. We represent $\Gamma$ as the tuple $\Gamma := (\cA_1, \ldots, \cA_n, u_1, \ldots, u_n)$.

Players can mix their strategies by playing a probability distribution $\vx_i = (\vx_{i}(a_i))_{a_i \in \cA_i}$ over their action set. We will use the notation $\Delta(S)$ to denote the set of probability distributions on set $S$, and abbreviate $\Delta([n]) = \Delta(n)$. We denote with $\cX_i = \Delta(\cA_i)$ the set of such mixed strategies for player $i$, which we will call the mixed strategy space for player $i$. By $\cX = \prod_{i=1}^n \cX_i$ we denote the mixed strategy space for a game. For a mixed strategy profile $\vx = (\vx_1, \ldots, \vx_{n}) \in \prod_{i=1}^{n} \Delta(\cA_i)$, we denote by $\vx_{-i}$ the strategy profile of every player except player $i$ and (with an overload in notation), we define the expected utility for player $i$ as
\begin{align*}
    u_i(\vx_1, \ldots, \vx_n) := \sum_{(a_1 \ldots, a_n) \in \cA} u_i(a_1, \ldots, a_n) \prod_{j = 1}^n \vx_j(a_j).
\end{align*}

\paragraph{Potential games} Informally, a  potential game is identified by a potential function which maps strategy profiles to real values and which increases by the same amount as the utility of any unilaterally deviating player. Identical interest games provide a canonical example of potential games, in which the common utility plays the role of the potential.

\begin{definition}[Potential game]
    \label{def:pot}
    A potential game is a game for which there exists a potential function, $\Phi: \cA \to \R$, such that for  any player $i \in [n]$, any strategy $(a_1, \dots, a_n) \in \cA$, and any unilateral deviation $a'_i \in \cA_i$,
    \begin{align*}
        \Phi(a'_i, a_{-i}) - \Phi(a_i, a_{-i}) =  u_i(a'_i, a_{-i}) - u_i(a_i, a_{-i}).
    \end{align*}
\end{definition}

\paragraph{Polymatrix games} Another class of games that we will refer to in some of our results is \emph{polymatrix games}~\citep{Cai16:Zero}.

\begin{definition}[Polymatrix game]
    \label{def:polymatrix}
    We let $G=(V, E)$ denote a simple, undirected graph where $V = [n]$ is the set of vertices and $E$ is the set of edges. Each player $i \in [n]$ has a pure strategy set $\cA_i$ and each edge $\{i, j\} \in E$ defines a two-player normal form game with payoff matrices $\mat{P}_{i, j} \in \Q^{|\cA_i| \times |\cA_j|}$ and $\mat{P}_{j, i} \in \Q^{|\cA_j| \times |\cA_i|}$ for player $i$ and $j$ respectively. A polymatrix game $\Gamma = (G, (\mat{P}_{i, j}, \mat{P}_{j, i})_{\{i, j\} \in E})$ is an $n$-player game with the following utility function, $u_i: \cX \to \R$, for each player $i \in V$
    \begin{align*}
        u_i(\vx_1, \ldots, \vx_n) = \sum_{j \in N_i} \vx_i^\top \mat{P}_{i, j} \vx_j,
    \end{align*}
    where $\vx_i \in \cX_i$ and $N_i$ denotes the set of neighbors of $i \in V$.
\end{definition}

 \paragraph{Congestion Games} A congestion game is a tuple $\Gamma = \left(\cR, (\cA_i)_{i \in [n]}, (d_r)_{r \in \cR} \right)$, where $[n]$ is the set of players, $\cR$ is a set of resources, and the strategy space for a player $i \in [n]$ is a collection of subsets of resources, $\cA_i \subseteq 2^{\cR}$. 
    Each resource $r \in \cR$ has an associated delay function
    $d_r: \N\to \N$. If all players have the same strategy space, we call the congestion game symmetric.

    We write a strategy profile as $S = (S_1, \dots, S_n)$, where each player $i \in [n]$ is playing strategy $S_i \in \cA_i$. Given such a strategy profile $S$, we define $n_r(S) = |\{i \in [n]: r \in S_i\}|$ to be the number of players using resource $r$ in $S$. We assume that each player $i$ is rational and is trying to minimize her own cost, which for a strategy profile $S$, is described by the cost function $c_i(S) = \sum_{r \in S_i} n_r(S)$. 
    
    Given a strategy profile, $S$, a strategy $S^*_i \in \cA_i$ is a best-response of player $i$ to $S$ if $c_i(S^*_i, S_{-i}) \leq c_i(S'_i, S_{-i})$ for all $S'_i \in \cA_i$. A strategy profile, $S^*$, is a pure Nash equilibrium if no player can decrease her cost by changing her strategy. That is, for all players $i \in [n]$ and all strategies $S_i \in \cA_i$, $c_i(S_i, S^*_{-i}) \geq c_i(S^*)$.

    By \citet{Rosenthal73:Class}, we know that congestion games are potential games with potential function \begin{align*}
        \Phi(S) = \sum_{r \in \cR}\sum_{i=1}^{n_r(S)}d_r(i).
    \end{align*}

\paragraph{Extensive-form games} Extensive-form games are common representations of sequential interactions featuring imperfect information~\citep{Shoham08:Multiagent}, represented compactly through a game tree. We present a formal introduction in \Cref{sec:efg}, as it is not essential for the purposes of the main body.

\paragraph{Polytope games} In polytope games each player selects a strategy from a (convex) polytope described explicitly through a set of linear inequalities; the number of vertices is generally exponential---as is the case for the \emph{sequence-form polytope} in extensive-form games---so our approach for the normal-form case needs to be refined. Canonical examples are (network) congestion games~\citep{Roughgarden00:How} and, indeed, extensive-form games. We defer a formal introduction of polytope games to \Cref{sec:polytope}.

\subsection{Equilibrium refinements}
\label{sec:eq-ref}

The standard solution concept in game theory---and the main reference point of equilibrium refinements---is the (mixed) Nash equilibrium~\citep{Nash50:Equilibrium}, recalled below.

\begin{definition}[Nash equilibrium]\label{def:nash}
    A mixed strategy profile $\vx^* = (\vx^*_1, \ldots, \vx^*_n) \in \cX$ is called a (mixed) Nash equilibrium if  it holds that $u_i(\vx^*_i, \vx^*_{-i}) \geq u_i(\vx_i, \vx^*_{-i})$ for all players $i$ and all possible deviations $\vx_{i} \in \cX_i$.
    Equivalently, we can write $u_i(\vx^*_i, \vx^*_{-i}) \geq u_i(a_i, \vx^*_{-i})$ for all players $i$ and all possible pure strategy deviations $a_{i} \in \cA_i$.
\end{definition}

\paragraph{Perfect equilibrium} 

Moving beyond Nash equilibria, we now give a brief overview of the basic equilibrium refinements that we consider, together with some motivating examples specifically in potential games. We begin with the usual normal-form representation of games, and then extend our scope to other more involved representations.

The (trembling-hand) perfect equilibrium was put forward by~\citet{Selten75:Reexamination} so as to exclude certain unreasonable Nash equilibria, such as ones supported on strictly dominated actions away from the equilibrium path. At a high level, perfect equilibria account for players' mistakes---hence the term ``trembling-hand.'' In particular, a strategy profile qualifies as a perfect equilibrium if it arises as the limit point of a sequence of Nash equilibria in a perturbed game wherein each action is played with some positive probability (\Cref{def:PE} formalizes this). 

\begin{figure}[!h]
\centering
\begin{tabular}{c c c}
& \textsf{C1} & \textsf{C2} \\
\textsf{R1} & 1, 1 & 0, 0 \\
\textsf{R2} & 0, 0 & 0, 0 \\
\end{tabular}
\caption{A $2 \times 2$ identical-interest game in normal form. It has exactly two Nash equilibria: (\textsf{R1}, \textsf{C1}) and (\textsf{R2}, \textsf{C2}). Of the two, only (\textsf{R1}, \textsf{C1}) is a perfect equilibrium.}
\label{fig:nash-vs-perfect}
\end{figure}

\Cref{fig:nash-vs-perfect} portrays a simple illustrative example. Of the two equilibria of the game, (\textsf{R1}, \textsf{C1}) is the only sensible one---\textsf{R2} is weakly dominated by \textsf{R1} and \textsf{C2} is weakly dominated by \textsf{C1}. As soon as one accepts that all actions are played with some positive probability (for example, through a mistake or a ``tremble''), (\textsf{R1}, \textsf{C1}) emerges unequivocally as the unique perfect equilibrium---in the limit as the trembles vanish, that is. This is a rather prosaic and brittle example, but we shall analyze more interesting examples in the sequel.

Formally, an \emph{$\epsilon$-perfect equilibrium} requires that every player plays a fully mixed strategy, but only pure strategies that are best responses get played with probability greater than $\epsilon$.


\begin{definition}[$\epsilon$-perfect equilibrium]\label{def:eps-perfect}
    For some $\epsilon >0$ a mixed strategy profile $\vx$ is an $\epsilon$-perfect equilibrium if it is fully mixed and
    \begin{align*}
        u_i(a_i, \vx_{-i}) < u_i(a_i', \vx_{-i}) \implies \vx_i(a_i) \leq \epsilon \; \; \text{for all players} \; i \in [n] \; \text{and actions} \; a_i, a_i' \in \cA_i.
    \end{align*}
\end{definition}

Let $\cX_{i}^{(\epsilon)} \defeq \{\vx_i \in \cX_i \; \text{such that} \; \vx_i(a_i) \geq \epsilon \; \forall a_i \in \cA_i\} $ and $\cX^{(\epsilon)} = \prod_{i=1}^{n} \cX_i^{(\epsilon)}$. We will call a strategy $\vx_i \in \cX^{(\eps)}_i$ {\em $\eps$-pure} if it is as pure as possible given the perturbed game, that is, if $\vx_i(a_i) = \eps$ for all but one $a_i \in \cA_i$. We will call an $\eps$-pure profile $\vx$ an {\em $\eps$-pure equilibrium} if it is $\eps$-perfect. Asserting that players must make mistakes with small probability gives rise to the following notion.

\begin{definition}[Perfect equilibrium]
    \label{def:PE}
    Consider a game $\Gamma$. A {\em trembling-hand perfect equilibrium}, or simply a {\em perfect equilibrium}, of $\Gamma$ is a limit point of a sequence $\{\vx^{(\epsilon)} \in \cX^{(\epsilon)}\}_{\epsilon \to 0^{+}}$ where $\vx^{(\epsilon)}$ is an $\epsilon$-perfect equilibrium of $\Gamma$.
\end{definition}
Due to \citet{Selten75:Reexamination}, we know that every normal form game has at least one perfect equilibrium and every perfect equilibrium must be a Nash equilibrium.

\begin{definition}[$\epsilon$-perturbed game for perfect equilibria]\label{def:perfect-perturbed}
    Consider a finite game $\Gamma$. For some $\epsilon \in (0, 1)$, we define the perturbed game to be $\Gamma^{(\epsilon)} =(\cX_1^{(\epsilon)}, \ldots, \cX_n^{(\epsilon)}, u_1, \ldots, u_n)$.
\end{definition}

From \Cref{def:nash,def:eps-perfect}, it follows that every Nash equilibrium of the perturbed game $\Gamma^{(\epsilon)}$ is an $\eps$-perfect equilibrium of the original game $\Gamma$.

\paragraph{Proper equilibrium}  

Moving on, \emph{proper equilibria}, introduced by another Nobel laureate, namely Roger \citet{Myerson78:Refinements}, were in turn shown to refine perfect equilibria. The high-level idea of the definition is this: although players can make mistakes, as before, they now do so in a somewhat rational way. In particular, some mistakes are more costly than others, and so should be made with a smaller probability. As observed by~\citet{Kohlberg86:strategic}, a proper equilibrium can be obtained as a limit point of Nash equilibria in a sequence of $\epsilon$-perturbed games in which players select strategies from a certain \emph{permutahedron}---namely, the convex hull of permutations of $(1, \epsilon, \epsilon^2, \dots, \epsilon^{m-1})$, up to some normalization factor; this means that the action resulting in the highest utility is to be allotted probability roughly $1$, the second highest roughly $\epsilon$, and so on, so that progressively worse actions end up being played with gradually smaller probability (\Cref{def:proper}).

\begin{figure}[!ht]
    \centering
    \begin{tabular}{c c c c}
    & \textsf{C1} & \textsf{C2} & \textsf{C3} \\
    \textsf{R1} & 1, 1 & 0, 0 & -9, -9 \\
    \textsf{R2} & 0, 0 & 0, 0 & -7, -7 \\
    \textsf{R3} & -9, -9 & -7, -7 & -7, -7\\
    \end{tabular}
    \caption{A $3 \times 3$ identical-interest game in normal form devised by~\citet{Myerson78:Refinements}. }
    \label{fig:perfect-vs-proper}
\end{figure}

A motivating example, due to~\citet{Myerson78:Refinements}, is given in~\Cref{fig:perfect-vs-proper}. As in~\Cref{fig:nash-vs-perfect}, it seems evident that the unique outcome of the game should be (\textsf{R1}, \textsf{C1}); after all, only strictly dominated actions were inserted. And yet, this is not in accordance with the set of perfect equilibria of that game: (\textsf{R2}, \textsf{C2}) is now in fact a perfect equilibrium. (To see this, one can take $\vx^{(\epsilon)}_1 = (\epsilon, 1 - 2 \epsilon, \epsilon)$ and $\vx^{(\epsilon)}_2 = (\epsilon, 1 - 2 \epsilon, \epsilon)$. It then follows that $(\vx_1^{(\epsilon)}, \vx_2^{(\epsilon)} )$ is an equilibrium in the $\epsilon$-perturbed game \emph{\`a la} Selten.) On the other hand, \citet{Myerson78:Refinements} observed that the only proper equilibrium is (\textsf{R1}, \textsf{C1}). Peter Bro Miltersen has anecdotally referred to proper equilibria as ``the mother of all refinements'' in normal-form games~\citep{Etessami21:Complexity}; it will indeed be the most refined concept that we examine.

Formally, we start with the following definition.

\begin{definition}[$\epsilon$-proper equilibrium]
    For some $\epsilon >0$ a mixed strategy profile $\vx$ is an $\epsilon$-proper equilibrium if it is fully mixed and
    \begin{align*}
        u_i(a_i, \vx_{-i}) < u_i(a_i', \vx_{-i}) \implies \vx_i(a_i) \leq \epsilon \vx_i(a_i') \; \; \text{for all players} \; i \in [n] \; \text{and actions} \; a_i, a_i' \in \cA_i.
    \end{align*}
\end{definition}

The definition of an $\epsilon$-proper equilibrium naturally leads to the following notion of a perturbed game due to~\citet{Kohlberg86:strategic}, which differs from that in~\Cref{def:perfect-perturbed}. Here we will overload notation, using $\cX_i^{(\epsilon)}$ to denote the perturbed strategy set for both perfect and proper equilibria, as the meaning will be clear from the context.

\begin{definition}[$\epsilon$-perturbed game for proper equilibria]\label{def:proper-perturbed} \sloppy Consider a finite game $\Gamma$. For some $\epsilon \in (0, 1)$ let the strategy space $\cX_i^{(\epsilon)}$ be
\begin{align*}
    \cX_i^{(\epsilon)} = \text{conv}\left(\biggl\{ \frac{1-\epsilon}{1-\epsilon^{m_i}} ( \epsilon^{\pi(0)}, \epsilon^{\pi(1)}, \ldots, \epsilon^{\pi(m_i-1)}): \pi \in S_{m_i-1}  \biggr\}\right), 
\end{align*}
where conv denotes the convex hull of a set of vectors, $S_{m_i-1}$ is the set of all permutations of $\{0, \ldots, m_i-1\}$ and $\pi \in S_{m_i-1}$ is a permutation i.e., a bijection from $\{0, \ldots, m_i-1\}$ to itself. We then define the perturbed game as $\Gamma^{(\epsilon)} = (\cX_1^{(\epsilon)}, \ldots, \cX_n^{(\epsilon)}, u_1, \ldots, u_n)$.
\end{definition}
In the above definition, there is no need to redefine the utilities, as mixed strategies over mixed strategies of the original game are still just mixed strategies.

\begin{definition}[Proper equilibrium]
    \label{def:proper}
    A mixed strategy profile $\vx$ is a proper equilibrium if it is the limit point of a sequence $\{\vx^{(\epsilon)} \in \cX^{(\eps)}\}_{\eps \to 0^{+}}$ where $\vx^{(\epsilon)}$ is an $\eps$-proper equilibrium of $\Gamma$. 
\end{definition}

Due to \citet{Myerson78:Refinements}, we know that in any normal-form game there exists at least one proper equilibrium. Furthermore, the proper equilibria form a subset of the perfect equilibria, which in turn are a subset of the Nash equilibria. \citet{Kohlberg86:strategic} provide a constructive proof of the existence of a proper equilibrium in a normal-form game. This proof relies on computing a Nash equilibrium of the perturbed game, $\Gamma^{(\epsilon)}$, which in turn corresponds to an $\epsilon$-proper equilibrium of the original game.

\paragraph{Equilibrium refinements in extensive-form games} In extensive-form games, one can employ the previous notion of perfection to the induced normal-form game, giving rise to what is known as \emph{normal-form perfect equilibrium (NFPE)}. But NFPE is not the most attractive refinement in extensive-form games~\citep{Etessami21:Complexity}. We instead examine the more refined notions of \emph{extensive-form perfect equilibria (EFPEs)} and \emph{quasi-perfect equilibria (QPEs)}; these are incomparable with each other, in that an EFPE need not be a QPE and a QPE need not be an EFPE~\citep{Mertens95:Two}. We refrain from formally introducing those concepts at this point. Suffice it to say that, for QPEs, we make use of a characterization due to~\citet{Gatti20:Characterization} in multi-player extensive-form games, showing that they can be obtained as limit points of a sequence of Nash equilibria of a certain class of perturbed games in \emph{sequence form} (\emph{cf.}~\citet{Miltersen10:Computing}). Similarly, for EFPEs, there is again a characterization as limit points of Nash equilibria of a certain class of perturbed games in sequence form~\citep{Farina17:Extensive,Blume91:Lexicographic}; this was used by~\citet{Farina17:Extensive} to place EFPEs in \PPAD. The precise definitions of the perturbed games are deferred to a later section.

To make a connection with the foregoing definitions in normal-form games, we point out that one can compute a quasi-perfect equilibrium of an extensive-form game~\citep{Miltersen10:Computing} by identifying a proper equilibrium of the induced normal-form game~\citep{vanDamme84:relation}, since the former are a superset of the latter.

\section{Multilinear circuits, search problems, and symbolic computation through polynomial interpolation}
\label{sec:model}



In this section, we provide some essential groundwork for our results. We first introduce the notion of a multilinear arithmetic circuit, leading to the notion of a concise potential game (\Cref{def:concise_potential}). We then address a subtle issue in the definition of search problems, which is especially relevant for studying the complexity of equilibrium refinements. Finally, \Cref{sec:symbolc-comp} establishes two key results that we rely upon: i) efficient evaluation of multilinear arithmetic circuits with input a symbolic polynomial via polynomial interpolation, and ii) symbolic comparison of polynomials.

\paragraph{Multilinear arithmetic circuits} Numbers in inputs will always be rational numbers represented as fractions in reduced form, given in binary. In general, we will use $\size(a)$ to denote the representation size of the object $a$, \ie, the number of bits needed to represent it as input. Our inputs, \ie, utility or potential functions, will be multilinear functions $f : \cX \to \R$, usually provided as {\em arithmetic circuits}. To enforce multilinearity, we will only allow addition gates, multiplication gates, and rational constants; and we will insist that different inputs to any given multiplication gate must involve disjoint sets of players and rational constants. (In particular, the sub-DAGs induced by the inputs from a multiplication gate must be disjoint.) For the remainder of the paper, we will call circuits that satisfy the above assumptions {\em multilinear arithmetic circuits}. The output of such a circuit can be evaluated efficiently, as formalized below; the proof is in~\Cref{sec:furtherproofs}.

\begin{restatable}{lemma}{efficmult}
    \label{lemma:efficient-eval}
    Multilinear arithmetic circuits can be evaluated efficiently. That is, given a multilinear arithmetic circuit $f : \R^M \to \R$ and input $\vx \in \Q^M$, the output $f(\vx) \in \Q$ can be computed in time $\poly(\size(f), \size(\vx))$.
\end{restatable}

\begin{remark}
    Multilinear arithmetic circuits immediately capture another natural representation: $f$ is given as an explicit multivariate polynomial.
\end{remark}

\begin{definition}[Concise game]\label{def:concise_game}
    A concise game $\Gamma = (\cA_1, \ldots, \cA_n, u_1, \ldots, u_n)$ is a normal-form game in which each utility function, $u_i: \cX \to \R$, is provided as a multilinear arithmetic circuit.
\end{definition}

\begin{definition}[Concise potential game]\label{def:concise_potential}
    A concise potential game $\Gamma = (\cA_1, \ldots, \cA_n, \Phi)$ is a normal-form potential game in which the potential function, $\Phi: \cX \to \R$, is provided as a multilinear arithmetic circuit.
\end{definition}



The conciseness assumption encompasses many classes of succinct, multi-player games~\citep{Papadimitriou08:Computing}, including polymatrix games, bounded-degree graphical games, and anonymous games with a constant number of actions. We relax some of these assumptions when we study general polytope games.

\subsection{Search problems}
\label{sec:search}


Let $\Sigma = \{0, 1\}$ and $\Sigma^*$ denote the set of all finite binary strings. A {\em search problem} is defined by a relation $R \subseteq \Sigma^* \times \Sigma^*$. An algorithm {\em solves} the search problem $R$ if, when given input $x \in \Sigma^*$, the algorithm outputs some $y$ such that $(x, y) \in R$, or (correctly) asserts that no such $y$ exists.

In the literature on complexity of search problems, it is standard to assume that the relation $R$ is efficiently computable, that is, there is a polynomial-time algorithm that, on input $(x, y)$, checks whether $(x, y) \in R$. The set of search problems with this property is often called \FNP. In this paper, however, we will commonly deal with search problems for which a polynomial-time algorithm exists, yet solutions cannot be efficiently verified. As a simple example of such a problem, consider $$R_\textsf{HALT} = \{ (x, y) : y \text{ is a Turing machine that halts on the empty tape} \}.$$ Then there is a constant-time algorithm for $R_\textsf{HALT}$: given any input $x$, ignore $x$ and output a trivial TM that halts immediately. However,  verifying a solution $(x, y)$ is equivalent to the halting problem, which is undecidable. That is, under the usual definition of $\FNP$, $R_\textsf{HALT} \notin \FNP$. Therefore, in this paper we will define the relevant complexity classes in a way that is slightly different from the usual definitions.\footnote{Our definition of \FNP\ is also used by \citet{Johnson88:How}, who call this class $\NP_S$, where the $S$ stands for {\em search}, but to our knowledge is rarely or never used by authors since then. Since reductions are defined the same way regardless, to our knowledge, all known results surrounding the complexity of search problems hold for these modified definitions as well.}
\begin{definition}[\citealp{Johnson88:How}]
    \label{def:TMnondeter}
    A search problem $R$ is in \FNP\ if there exists a nondeterministic polynomial-time Turing machine $M$ such that
    \begin{enumerate}
        \item the language recognized by $M$ is $D_R := \{ x : \exists y \text{ s.t. } (x, y) \in R\}$, and 
        \item on any accepting path of $M$, the TM outputs (to its tape) a $y$ satisfying $(x, y) \in R$.
    \end{enumerate}
\end{definition}
This definition does {\em not} imply the usual verifier definition of \FNP: for example, $R_\textsf{HALT} \in \FNP$ by this definition, despite not being efficiently verifiable. However, we argue that it is in a sense more natural: is $R_\textsf{HALT}$ really a hard problem (not even in \FNP\ under the more common definition), when it has a trivial constant-time algorithm?
In our paper, we will later see that many problems surrounding potential games are of this nature, for example, it is easy to find a perfect equilibrium of a potential game with a constant number of players (\Cref{prop:constantplayers}), but hard to verify such an equilibrium (\Cref{prop:hard-verif}).

Other classes of search problems can be defined by specifying complete problems for them. 
One way to define such problems is using Boolean circuits. We define a Boolean circuit $C: \{0, 1\}^k \to \{0, 1\}^k$ as a function that can use the logic gates (AND), (OR),  and (NOT), denoted as $\land$, $\lor$, and $\lnot$ respectively.

\paragraph{Complexity classes} We continue by defining the complexity class $\PLS$ \cite{Johnson88:How} that contains \emph{local search problems}. Informally, $\PLS$ contains all local search problems with neighborhoods that are searchable in polynomial time. 
\begin{definition}\label{def:pls}
    The complexity class $\PLS$ is the class of all total search problems that are reducible to {\sc LocalOpt}: given 
    two Boolean circuits $ S, V: [2^k] \to [2^k]$, find $a \in \cA$ such that $V(S(a)) \le V(a)$.
\end{definition}

The complexity class $\CLS$ \cite{Daskalakis11:Continuous, Fearnley23:Complexity} contains all \emph{continuous local search problems}. Informally, it contains all search problems seeking an approximate local optimum of a continuous function.

\begin{definition}
    The class $\CLS$ is the class of all total search problems that are reducible to {\sc Continuous-LocalOpt}: given a precision parameter $\epsilon >0$, well-behaved\footnote{Informally, a circuit is well-behaved if it does not allow repeated squaring. For more details, we refer the reader to~\citet{Fearnley23:Complexity} and~\url{https://people.csail.mit.edu/costis/CLS-corrigendum.pdf}.} circuits $p: [0, 1]^k \to [0, 1]$ and  $g: [0, 1]^k \to [0, 1]$, and a Lipschitz constant $L>0$, find $\vx \in [0, 1]^k$ such that $p(g(\vx)) \geq p(\vx) -\epsilon$. Alternatively, find $\vx, \vy \in [0, 1]^k$ such that $|p(\vx)-p(\vy)| \geq L\|\vx-\vy\|$ or $\|g(\vx)-g(\vy)\| \geq L\|\vx-\vy\|$.
\end{definition}

\subsection{Evaluating circuits on polynomials and comparison of polynomials}
\label{sec:symbolc-comp}

A key primitive that underpins our $\PLS$ membership results concerns the evaluation of circuits on inputs $\vx$ that are themselves polynomials of a single variable $\eps$. We first argue that such evaluations are well-behaved, in the sense that they will not result in intermediate computations with high representation size. Our proof makes use of polynomial interpolation.

\begin{lemma}\label{lem:polynomial circuit}
    \label{lemma:interpolation}
    Let $f : \R^M \to \R$ be a polynomial of degree $d_f$ represented by an arithmetic circuit, such that evaluating $f(\vx)$ for a rational vector $\vx \in \Q^M$ takes time $\poly(\size(f), \size(\vx))$. Let $\vx \in \Q[\eps]^M$ be a vector of length $M$ each of whose entries is a polynomial in $\eps$ of degree at most $d_x$. Then the output $f(\vx) \in \Q[\eps]$ can also be computed in time $\poly(d_f, d_x, \size(f), \size(\vx))$ 
\end{lemma}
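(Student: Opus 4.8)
The plan is to exploit the fact that substituting the polynomial vector $\vx(\eps)$ into $f$ yields a \emph{univariate} polynomial in $\eps$ of controlled degree, and then to recover that polynomial by interpolation rather than by symbolic evaluation of the circuit. Concretely, set $g(\eps) \defeq f(\vx(\eps))$. Since $f$ has degree $d_f$ and each coordinate $x_i(\eps)$ has degree at most $d_x$, the composition $g$ is a single-variable polynomial of degree at most $D \defeq d_f d_x$. Hence $g$ is uniquely determined by its values at any $D + 1$ distinct points, and it suffices to produce those values and then interpolate.

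First I would fix the interpolation nodes to be the small integers $\eps_j = j$ for $j = 0, 1, \dots, D$. For each $j$, evaluate the rational vector $\vx(\eps_j) \in \Q^M$ coordinate-wise (each $x_i$ is a given polynomial in $\eps$, so this is routine), and then compute the rational number $g(j) = f(\vx(\eps_j))$ by invoking the efficient-evaluation hypothesis on $f$ (equivalently, \Cref{lemma:efficient-eval} when $f$ is a multilinear arithmetic circuit). Having obtained the $D+1$ value pairs $(j, g(j))$, I would recover the coefficients of $g$ by Lagrange interpolation, writing $g(\eps) = \sum_{j=0}^{D} g(j)\, \ell_j(\eps)$ with $\ell_j(\eps) = \prod_{k \neq j} (\eps - k)/(j - k)$. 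This expresses $f(\vx)$ as an explicit element of $\Q[\eps]$.

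The hard part will be the bit-complexity bookkeeping, since both the number of evaluation points and the sizes of all intermediate rationals must stay polynomial in $d_f, d_x, \size(f), \size(\vx)$. The degree bound $D \le d_f d_x$ already guarantees only polynomially many evaluation points. For each point, because the node $\eps_j = j$ is an integer at most $D$ and each $x_i(\eps)$ is a degree-$\le d_x$ polynomial with coefficients of size at most $\size(\vx)$, the value $x_i(j)$ is a rational of size $\poly(\size(\vx), d_x, \log D)$; consequently $\size(\vx(\eps_j))$ is polynomial and the evaluation hypothesis yields $g(j)$ in polynomial time with polynomial representation size. The remaining point is that Lagrange interpolation at the integer nodes $0, 1, \dots, D$ produces coefficients of polynomial size: the denominators $\prod_{k \neq j}(j-k) = \pm\, j!\,(D-j)!$ have bit-length $O(D \log D)$, and the integer coefficients of $\prod_{k \neq j}(\eps - k)$ are bounded in magnitude by the coefficients of $\prod_{k=0}^{D}(\eps-k)$, whose bit-length is $\poly(D)$.

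Summing these polynomially many rational multiples of polynomials of polynomial size gives $g \in \Q[\eps]$ with polynomial-size coefficients, and the entire computation consists of polynomially many arithmetic operations on polynomial-size rationals. Thus $f(\vx)$ is computed in time $\poly(d_f, d_x, \size(f), \size(\vx))$, as claimed. The only genuine subtlety is ensuring the degree bound is used to cap the number of evaluations; everything else is a standard, if careful, size estimate, and choosing integer nodes (rather than arbitrary rationals) is what keeps the interpolation denominators under control.
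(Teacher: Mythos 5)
Your proposal is correct and follows essentially the same route as the paper's proof: observe that $g(\eps) = f(\vx(\eps))$ is a univariate polynomial of degree at most $d_f d_x$, evaluate it at the integer points $0, 1, \dots, d_f d_x$ using the efficient-evaluation hypothesis, and recover the coefficients by Lagrange interpolation. Your additional bit-complexity bookkeeping for the interpolation step is a welcome elaboration of details the paper leaves implicit, but the argument is the same.
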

\begin{proof}
    The composition $g = f(\vx) : \eps \mapsto f(\vx(\eps))$ is a polynomial $g : \R \to \R$, of degree at most $d_x \cdot d_f$. It is thus uniquely determined by $1+ d_x \cdot d_f$ values. Thus, $g$ can be evaluated as follows: compute $g(\eps) = f(\vx(\eps))$ for $\eps = 0, 1, 2, \dots, d_x \cdot d_f$, and then recover $g$ via polynomial interpolation, namely, 
    \begin{align*}
        g(\eps) = \sum_{j=0}^{d_x \cdot d_f} g(j) \prod_{i \ne j} \frac{\epsilon - i}{j - i}. \tag*\qedhere
    \end{align*}
\end{proof}

Another central primitive in our results concerns comparing polynomials. Specifically, let $p, q$ be univariate polynomials. We use the notation $p(\epsilon) \lesseps q(\epsilon)$ if there exists a sufficiently small $\epsilon_0 > 0$ such that $p(\epsilon) < q(\epsilon)$ for all $\epsilon \in (0, \epsilon_0)$. We observe that there is a simple, efficient algorithm for solving this problem.

\begin{lemma}\label{lem:integer potential}
    For any $L > 0$, there is a function $\psi : \Q[\eps] \to \N$ such that, for any two polynomials $p, q \in \Q[\eps]$ of representation size at most $L$, (1)  $\psi(p), \psi(q)$ are computable in $\poly(L)$ time, and (2) $p \lesseps q$ if and only if $\psi(p) < \psi(q)$.
\end{lemma}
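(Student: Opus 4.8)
The plan is to construct the map $\psi$ so that it encodes the ordered sequence of coefficients of a polynomial in a way that makes the lexicographic comparison induced by $\lesseps$ into a numerical comparison of natural numbers. The key observation is the following standard fact about $\lesseps$: for two polynomials $p(\epsilon) = \sum_{k=0}^{L} p_k \epsilon^k$ and $q(\epsilon) = \sum_{k=0}^{L} q_k \epsilon^k$, we have $p \lesseps q$ if and only if the coefficient sequences $(p_0, p_1, \dots)$ and $(q_0, q_1, \dots)$ differ and, at the first (lowest) index $k$ where they differ, $p_k < q_k$. This is because for $\epsilon \to 0^+$, the sign of $p(\epsilon) - q(\epsilon)$ is governed by the lowest-order nonvanishing coefficient of the difference. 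Thus $\lesseps$ is exactly the lexicographic order on the coefficient vectors read from the constant term upward.

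With this in hand, the first step is to bound the quantities involved. Since $p, q$ have representation size at most $L$, their degrees are at most $L$ and each coefficient $p_k$ is a rational whose numerator and denominator are bounded in magnitude by $2^{L}$. To reduce to comparison of integers, I would first clear denominators: multiply each polynomial by a common positive integer $D$ (for instance the product of all denominators, or $(2^L)!$ as a uniform bound) so that all coefficients become integers bounded in absolute value by some $B = 2^{\poly(L)}$. Crucially, multiplying both $p$ and $q$ by the \emph{same} positive constant does not change the relation $p \lesseps q$; but since $\psi$ must be defined on each polynomial independently, I would instead use a fixed denominator bound depending only on $L$ (valid because all inputs have size at most $L$), so that the same scaling is applied to every input and the order is preserved.

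The second step is to turn the shifted-integer coefficient vector into a single natural number whose numerical order agrees with the lexicographic order on coefficient vectors (lowest index most significant). Shift each integer coefficient by $B$ so that it lands in $\{0, 1, \dots, 2B\}$, giving nonnegative digits $\hat p_k := D p_k + B$. Then define $\psi(p) := \sum_{k=0}^{L} \hat p_k \cdot (2B+1)^{\,L - k}$, reading the coefficients as digits in base $2B+1$ with the \emph{constant} term $\hat p_0$ as the most significant digit. Because each digit lies in $\{0, \dots, 2B\}$, no carries occur and numerical comparison of the base-$(2B+1)$ representations coincides with the lexicographic comparison of the digit sequences; by the characterization of $\lesseps$ above, this gives $p \lesseps q \iff \psi(p) < \psi(q)$, establishing property (2). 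For property (1), extracting the coefficients from the circuit/fraction representation, clearing denominators, shifting, and forming the base-$(2B+1)$ number are all arithmetic on integers of magnitude $2^{\poly(L)}$, hence computable in $\poly(L)$ time.

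The main obstacle to watch is the independence requirement: $\psi$ acts on a single polynomial, yet the relation $\lesseps$ is only preserved under scaling if the \emph{same} denominator and the \emph{same} padding base $2B+1$ are used for both $p$ and $q$. This is resolved by making $D$ and $B$ depend only on the universal size bound $L$ (not on the particular input), so that $\psi$ is a single well-defined function applying uniform clearing and padding; one must verify that the chosen $L$-dependent bounds genuinely dominate the denominators and coefficient magnitudes of every size-$\le L$ polynomial, and that the degree padding (treating lower-degree polynomials as having trailing zero coefficients up to index $L$) is consistent across inputs. Once these uniform bounds are fixed, correctness is immediate from the no-carry property of the base-$(2B+1)$ encoding.
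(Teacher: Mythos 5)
Your overall architecture is the same as the paper's: observe that $\lesseps$ is lexicographic comparison of the coefficient vectors (lowest-degree coefficient most significant), shift the coefficients to make them nonnegative, and read them off as digits in a sufficiently large base so that numerical comparison of the resulting natural numbers reproduces the lexicographic order. That part is fine.

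The genuine gap is in the step where you ``clear denominators'' with a scaling factor $D$ that must simultaneously (a) be a common multiple of \emph{every} possible denominator of a size-$\le L$ rational and (b) depend only on $L$ so that $\psi$ is well defined on a single polynomial. No such $D$ of polynomial bit-length exists: a size-$L$ rational can have \emph{any} integer up to $2^L$ as its denominator, so a universal exact common denominator must be divisible by $\operatorname{lcm}(1,\dots,2^L) \approx e^{2^L}$, and your suggested choices (the product of all denominators, or $(2^L)!$) have exponentially many bits in $L$. Consequently the integers $D p_k$ and the value $\psi(p)$ are not computable in $\poly(L)$ time, contradicting property (1); and your claim that the scaled coefficients are bounded by $B = 2^{\poly(L)}$ fails. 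You flag exactly this as ``the main obstacle to watch'' but the resolution you propose (a fixed $L$-dependent bound) does not actually exist for \emph{exact} denominator clearing. The paper's fix is to abandon exactness: multiply each coefficient by the polynomial-size constant $4^L$ and \emph{round} to the nearest integer. Since two distinct rationals with $L$-bit denominators differ by at least $4^{-L}$, the rounded values $\round(4^L\alpha)$ and $\round(4^L\beta)$ are distinct and ordered consistently with $\alpha,\beta$, which is all the digit encoding needs. Substituting this rounding step for your denominator-clearing step repairs the argument and yields essentially the paper's map $\psi\bigl(\sum_i \alpha_i\eps^i\bigr) = \sum_i \round\bigl(4^L(\alpha_i+2^L)\bigr)\cdot 16^{L(L-i)}$.
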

\begin{proof}
   Comparing two polynomials $p, q$ with $\lesseps$ is equivalent to comparing their coefficients in lexicographic order. Since $\size(p) \le L$, the degree of $p$ is also at most $L$. First, consider comparing rational numbers $\alpha, \beta$ with size at most $L$. Such a rational number can have at most a $L$-bit denominator; therefore, $\alpha - \beta$ has denominator at most $4^L$. Thus, if $\alpha - \beta \ne 0$, then $\round(4^L \alpha) - \round(4^L \beta) \ne 0$. Moreover, we have $4^L |\alpha| < 4^L \cdot 2^L = 8^{L}$, so $0 \le 4^L(\alpha + 2^L) < 2 \cdot 8^L \le 16^L$. Thus, it suffices to take the map $\psi$ defined by
   \begin{align*}
       \psi\qty(\sum_{i=0}^L \alpha_i \eps^i) = \sum_{i=0}^L \round(4^L (\alpha_i + 2^L)) \cdot 16^{L(L-i)}. \tag*\qedhere
   \end{align*}
\end{proof}

\section{Main results}

This section contains a detailed overview of our results concerning the complexity of different equilibrium refinements in potential games under different game representations. We focus on highlighting the main results and providing detailed sketches of the key ideas in the proofs.

We first focus on \emph{concise} potential games. As discussed earlier, these are represented through \emph{multilinear arithmetic circuits}; multilinearity is enforced structurally by insisting that different inputs to any multiplication gate must involve disjoint sets of players. Multilinear arithmetic circuits can be evaluated efficiently on rational inputs (\Cref{lemma:efficient-eval}); thus we circumvent the issue of repeated squaring present when using general, unfettered arithmetic circuits~\citep{Fearnley23:Complexity}. We rely on conciseness because the explicit representation of a game---given as the entire payoff tensor---grows exponentially in the number of players. Also, we note that for explicitly represented identical-interest games, there is a straightforward algorithm for computing a Pareto-optimal perfect equilibrium (and other equilibrium refinements): identify a strategy profile corresponding to a maximum entry of the payoff tensor \emph{in the perturbed game} (\Cref{prop:constantplayers})---this algorithm quickly becomes inefficient as the number of players grows.

\subsection{Hardness of verification}

Despite this simple fact, and somewhat paradoxically, we show that even in $3$-player identical-interest games \emph{verifying} whether a pure strategy profile is a perfect equilibrium is \NP-hard.

\begin{theorem}
    \label{prop:hard-verif}
    Verifying whether a pure strategy is a perfect equilibrium in a $3$-player identical-interest game is \NP-hard.
\end{theorem}

This strengthens the hardness result of~\citet{Hansen10:Computational} (subsequently refined by~\citealp{Hansen19:Real}) pertaining to $3$-player general-sum games. Similar to their proof, the starting point of our reduction is the problem of computing the team minimax equilibrium (TME) value in adversarial team games (defined in~\Cref{sec:lowerboundverif}). We show that a gap-amplified hard TME instance allows a reduction to go through even when the players in the resulting game have identical interests. We present a formal proof of this result in \Cref{sec:lowerboundverif}.

Prompted by~\Cref{prop:hard-verif}, we now return to a subtle issue in the definition of $\FNP$ concerning the ability to verify purported solutions in polynomial time. As discussed in~\Cref{sec:search}, following the original treatment of~\citet{Johnson88:How}, we take $\FNP$ to contain all relations that can be accepted by a nondeterministic polynomial-time Turing machine (formalized in~\Cref{def:TMnondeter}). This definition does \emph{not} imply the usual efficient verification property (\Cref{sec:search}). In particular, \Cref{prop:hard-verif} does not preclude placing perfect equilibria---and refinements thereof---in $\FNP$.

\subsection{Complexity results for pure equilibria} 

Indeed, we begin by showing the following characterizations.

\begin{theorem}
    \label{theorem:PLS-complete}
    Computing a pure perfect equilibrium in concise potential games is $\PLS$-complete. The same holds for proper equilibria.
\end{theorem}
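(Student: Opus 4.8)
The plan is to prove both membership in $\PLS$ and $\PLS$-hardness for computing a pure perfect equilibrium (and a pure proper equilibrium) in concise potential games.

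For $\PLS$-membership, the natural approach is to exhibit a local-search formulation whose local optima are exactly the pure perfect equilibria. The key observation is that, since these are \emph{potential} games, we should leverage the potential function $\Phi$ as the objective of the local-search instance, paralleling the classical argument of~\citet{Fabrikant04:Complexity} for pure Nash equilibria. The difficulty is that a perfect equilibrium is defined as a limit of equilibria of $\epsilon$-perturbed games, so we cannot directly use the unperturbed potential. First I would argue that, for concise potential games, the perturbed game (whether \emph{\`a la} Selten for perfect, or via a properly scaled permutahedron/lexicographic perturbation for proper) is again a potential game whose potential $\Phi_\epsilon$ is a perturbation of $\Phi$; its pure local optima (best-response-stable profiles) are the pure equilibria of the perturbed game. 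The crux is to define the neighborhood and objective so that a local optimum corresponds to a profile that remains best-response-stable as $\epsilon \to 0^+$, i.e.\ a pure profile surviving the vanishing-tremble limit. I expect the right device is a \emph{lexicographic} refinement of $\Phi$: order profiles first by the unperturbed potential, breaking ties by the higher-order terms in $\epsilon$ (the coefficients of $\epsilon, \epsilon^2, \dots$), so that a lexicographic local maximum of this refined potential is precisely a pure perfect (resp.\ proper) equilibrium. Because the circuit is multilinear and hence efficiently evaluable on rational inputs (\Cref{lemma:efficient-eval}), and because the lexicographic comparison only involves polynomially many low-order coefficients that can be extracted symbolically in $\epsilon$, both the objective and the neighborhood are polynomial-time computable, placing the problem in $\PLS$.

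For $\PLS$-hardness, the plan is a reduction from computing a pure Nash equilibrium in concise potential games, which is $\PLS$-complete by~\citet{Fabrikant04:Complexity}. The goal is to build, from an arbitrary potential game $G$, a new concise potential game $G'$ whose pure perfect (resp.\ proper) equilibria are in polynomial-time correspondence with the pure Nash equilibria of $G$. Since every perfect equilibrium is a Nash equilibrium, it suffices to augment $G$ so that the refinement is not actually restrictive on pure profiles—for instance by adding, for each player, a strictly dominated ``dummy'' action or a gadget that guarantees every pure Nash equilibrium of $G$ already survives the trembling-hand and proper perturbations, while keeping $G'$ a potential game with an efficiently computable potential. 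One clean way to achieve this is to make each action's payoff strictly ordered off-equilibrium so that no weak domination occurs, ensuring that every pure Nash equilibrium is automatically perfect and proper; the reduction must preserve conciseness and multilinearity of the representing circuit.

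The main obstacle I anticipate is the upper-bound direction: correctly formalizing the lexicographic-potential local-search instance so that its local optima coincide \emph{exactly} with pure perfect and pure proper equilibria, and verifying that the comparison between neighboring profiles—which in principle involves the full $\epsilon$-expansion of the perturbed potential—can be decided in polynomial time using only finitely many leading coefficients. Handling proper equilibria is subtler than perfect ones, because the permutahedron/lexicographic perturbation of~\citet{Kohlberg86:strategic} couples the tremble magnitudes to the \emph{ranking} of each player's action payoffs; I expect the careful point to be showing that, restricted to pure profiles and the vanishing-$\epsilon$ limit, this ranking-dependent perturbation still admits a finite, efficiently computable lexicographic surrogate that a local-search oracle can evaluate.
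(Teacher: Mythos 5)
Your membership argument is essentially the paper's: the paper runs \emph{symbolic} best-response dynamics on the perturbed game, evaluates the perturbed potential $\Phi(\vx_a)$ as a polynomial in $\eps$ via interpolation on the multilinear circuit (\Cref{lemma:interpolation}), and encodes the lexicographic order $\lesseps$ on these polynomials into an integer-valued $\PLS$ objective (\Cref{lem:integer potential}); your ``lexicographic refinement of $\Phi$'' is exactly this. One small correction: you do not need the local optima to coincide \emph{exactly} with the pure perfect equilibria --- for $\PLS$ membership it suffices that every local optimum of the constructed instance \emph{is} a perfect equilibrium, which is the only inclusion the paper proves. For proper equilibria you correctly identify the genuine difficulty (the permutahedron couples tremble magnitudes to the payoff ranking) but you do not resolve it; the paper's resolution is that the perturbed best response over the permutahedron is always attained at a vertex obtained by \emph{sorting} the actions by (symbolic) utility, so along the dynamics every player's strategy assigns each action a probability of the form $\eps^k$, the relevant utilities remain polynomials in $\eps$ of polynomially bounded degree, and the same lexicographic machinery applies verbatim. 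Without this step your membership proof for proper equilibria is incomplete.

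The hardness direction is where your proposal goes astray. No gadget is needed: since every pure perfect (resp.\ proper) equilibrium is a pure Nash equilibrium, and since pure perfect and proper equilibria always \emph{exist} in concise potential games (a byproduct of the convergence of the symbolic dynamics in the membership proof), the identity map on instances together with the identity map on solutions is already a valid reduction from the $\PLS$-complete problem of finding a pure Nash equilibrium. This is all the paper does. Your proposed construction --- adding dummy actions or perturbing payoffs so that ``every pure Nash equilibrium of $G$ already survives the trembling-hand perturbation'' --- is not only unnecessary but would not work as described: adding a strictly dominated action to each player does not promote non-perfect Nash equilibria to perfect ones (in the game of \Cref{fig:nash-vs-perfect}, $(\textsf{R2},\textsf{C2})$ remains non-perfect after such an augmentation), and strictly reordering off-equilibrium payoffs to eliminate weak domination can change the set of pure Nash equilibria and need not preserve the potential structure. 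The reduction does not require a bijection between the Nash equilibria of $G$ and the perfect equilibria of $G'$; it only requires that any returned solution of the target problem can be mapped back to a solution of the source problem, which the subset relation gives you for free.
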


$\PLS$-hardness is readily inherited from existing, well-known results concerning pure Nash equilibria~\citep{Fabrikant04:Complexity}, so we discuss the proof of $\PLS$ membership. It is established by executing \emph{symbolic} best-response dynamics. Specifically, for a symbolic parameter $\epsilon > 0$, we run best-response dynamics on a perturbed game parameterized by $\epsilon$, which differs depending on the underlying equilibrium refinement notion. First, we have shown that by having access to a multilinear arithmetic circuit, one can still evaluate utilities symbolically in terms of $\epsilon$ through the use of polynomial interpolation (\Cref{lemma:interpolation}). Furthermore, when $\epsilon$ is taken arbitrarily small, it is possible to compare polynomials by contrasting their coefficients in lexicographic order (\Cref{lem:integer potential}). We are thus able to efficiently implement each iteration of the symbolic best-response dynamics when $\epsilon$ is small enough. Finally, the \emph{potential} that arises from these dynamics can itself be thought of as a polynomial in $\epsilon$, so that the key local improvement property attached to $\PLS$ can be established lexicographically with respect to the coefficients of the potential. We present a formal proof of \Cref{theorem:PLS-complete} in \Cref{sec:nfpe} and \Cref{sec:computingproper} for perfect and proper equilibria respectively.

As a non-trivial corollary of~\Cref{theorem:PLS-complete}, it follows that pure perfect and pure proper equilibria always exist in potential games; an earlier work in the economics literature by~\citet{Carbonell14:Refinements} had already shown the existence of a pure perfect equilibrium in potential games.

Taking a step further, we establish $\PLS$ membership beyond normal-form games. We first show that computing a pure EFPE or pure QPE in extensive-form games is $\PLS$-complete (\Cref{th:efpe,th:qpe}). The high-level argument is similar to that of~\Cref{theorem:PLS-complete}, but with an added complication concerning symbolic best responses in the perturbed game---be it the one arising from EFPEs or QPEs. This step is straightforward in normal-form games: for perfect equilibria it boils down to computing a maximum entry of the utility vector, while for proper equilibria---where the perturbed strategy set is a permutahedron---it reduces to sorting. In extensive-form games, symbolic best responses for both EFPEs and QPEs can be computed in polynomial time through a bottom-up traversal of the tree.

\paragraph{Polytope games} Turning to the more general setting of polytope games, we first make a conceptual contribution by introducing a notion of perfect equilibria in that setting. It is based on perturbing toward a point in the strict (relative) interior of the set (\Cref{def:pert-polytope}). We show that inclusion in $\PLS$ for perfect equilibria holds more broadly by merely assuming access to a linear optimization oracle for each strategy set---that is, a best response oracle.

\begin{restatable}{theorem}{polytopepls}
    \label{theorem:polytope-perfect}
    Finding a perfect equilibrium of a concise polytope potential game is in \PLS.
\end{restatable}

\subsection{Polynomial-time convergence in structured classes of games} Notwithstanding the $\PLS$-hardness of computing a pure-strategy Nash equilibrium, the question of carving out classes of games wherein best-response dynamics converge in polynomial time has received ample interest in the literature (\emph{e.g.}, we point to~\citet{Ackermann08:Impact,Caragiannis11:Efficient,Caragiannis17:Short}, and references therein). This begs the question: do these positive results carry over to perfect equilibria?

To begin with, we provide an affirmative answer in \emph{symmetric matroid congestion} games. Matroid congestion games were introduced by~\citet{Ackermann08:Impact} and have been studied extensively since then \cite{Balcan13:Price,Harks16:Logarithmic,Jong16:Efficiency,Hao24:Price}. The key premise is that each player's strategy set comprises the bases of an underlying matroid (\Cref{def:matroid,def:matroidcongestion}); the special case where the rank of the matroid is $1$---meaning that the strategy set of each player comprises single resources---is known as \emph{singleton games}~\citep{Ieong05:Fast}. On top of that, we posit symmetric strategy sets among the players; the reason we need this symmetry assumption will become clear shortly. Under these preconditions, we give a polynomial-time algorithm for computing perfect equilibria.

\begin{theorem}
    \label{theorem:polysteps}
    In symmetric matroid congestion games, symbolic best-response dynamics converge to a perfect equilibrium after a polynomial number of steps.
\end{theorem}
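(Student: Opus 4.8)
The plan is to show that the symbolic best-response dynamics for perfect equilibria are, combinatorially, nothing more than ordinary best-response dynamics on a suitably \emph{perturbed} symmetric matroid congestion game, and then to import a \emph{strongly} polynomial convergence bound for such games. Concretely, I would fix the perturbation defining perfect equilibria so that every player trembles according to the same full-support distribution $\eta(\epsilon)$ over the bases of the common matroid, placing the remaining free probability mass on a single chosen base; since the best response to the resulting linear objective is an extreme point, the state of the dynamics is a profile of chosen bases $(B_1, \dots, B_n)$, exactly as in the unperturbed game. Because expected utility is linear, the expected potential $\mE[\Phi](\epsilon)$ remains an exact potential for these dynamics (a polynomial in $\epsilon$), and a fixed point, taken in the limit $\epsilon \to 0^+$, is a perfect equilibrium by the machinery behind \Cref{theorem:PLS-complete}. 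It therefore suffices to bound the number of symbolic best-response steps.

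The crux---and the reason symmetry is needed---is that this perturbed game is \emph{itself} a symmetric matroid congestion game. The strategy sets are unchanged (bases of the shared matroid), so the matroid structure is preserved. For the cost structure, note that player $i$'s symbolic cost for a base $B$ decomposes as $\sum_{e \in B} w_e^{(i)}(\epsilon)$, where $w_e^{(i)}(\epsilon) = \mE[c_e(1 + \ell_{-i,e})]$ is the expected marginal cost and $\ell_{-i,e}$ is the random number of other players realizing $e$. The indicator that $e$ lies on player $j$'s realized base is Bernoulli with one of two probabilities, according to whether $e$ lies on $j$'s chosen base; under symmetry both probabilities are common across players, so the law of $\ell_{-i,e}$ depends only on how many other players chose a base containing $e$. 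Consequently $w_e^{(i)}(\epsilon)$ depends only on the load of $e$, and the perturbed game is a genuine symmetric matroid congestion game with $\epsilon$-dependent, load-monotone cost functions. Without symmetry this fails: the perturbed marginal cost would depend on the \emph{identities} of the players using $e$, breaking the congestion-game template on which the convergence bound relies.

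Each symbolic best response is then a minimum-weight base computation under the lexicographic order on the polynomials $w_e^{(i)}(\epsilon)$, solved by the matroid greedy algorithm. To extract these polynomials efficiently despite the exponentially many bases in the support of $\eta$, I would avoid computing tremble marginals by base counting (which is $\sharpP$-hard in general matroids) and instead specify the trembles through a full-support distribution whose resource marginals $q_e(\epsilon)$ form an interior point of the matroid base polytope and are thus directly available. Each coefficient of $w_e^{(i)}(\epsilon)$ is then recovered by polynomial interpolation (\Cref{lemma:interpolation}) from a polynomial number of evaluations of the relevant two-probability Poisson-binomial convolution; this is the fractional interpolation step, and symmetry ensures a single such computation serves all players.

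It remains to count the steps, which is where I expect the main obstacle to lie. For every sufficiently small numerical value $\epsilon_0 > 0$, the perturbed game is a bona fide symmetric matroid congestion game, so by \citet{Ackermann08:Impact} its best-response dynamics reach equilibrium within $\mathrm{poly}(n, m, r)$ steps, where $m$ is the number of resources and $r$ the matroid rank. The key point is that this bound must be \emph{uniform} in $\epsilon_0$: the perturbed costs have bit-complexity growing like $\epsilon_0^{-1}$ as $\epsilon_0 \to 0^+$, so a guarantee depending on cost magnitudes would be useless. I would therefore establish, following the combinatorial nature of the Ackermann--R\"oglin--V\"ocking argument, that the step count is \emph{strongly} polynomial---a function of $n, m, r$ alone, insensitive to the numerical costs. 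Granting this, the bound holds uniformly; and since the symbolic dynamics, comparing polynomials lexicographically as $\epsilon \to 0^+$, coincide with the numerical $\epsilon_0$-dynamics for all sufficiently small $\epsilon_0$, they inherit the same $\mathrm{poly}(n, m, r)$ step bound. This strong-polynomiality-to-perturbation transfer is precisely the connection flagged in the introduction, and verifying its robustness to the symbolic tie-breaking is the delicate part of the argument.
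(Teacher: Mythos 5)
Your proposal is correct and follows essentially the same route as the paper: show that, thanks to symmetry, the perturbed game is itself a symmetric matroid congestion game (the perturbed delay of a resource depends only on how many players' chosen bases contain it, not on their identities), and then invoke the combinatorial---hence $\epsilon$-uniform---step bound of \citet{Ackermann08:Impact}. The only cosmetic difference is that the paper sidesteps your marginal-computation worry by assuming the counts $B_r = \sum_{S \in B}\mathbbm{1}\{r \in S\}$ are given as part of the input, rather than by redesigning the tremble distribution.
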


In proof, it suffices to show that the perturbed game is itself a matroid congestion game---since one can then appeal to the result of~\citet{Ackermann08:Impact}, which is agnostic to the utility structure of the game. To do so, the challenge is that the delay functions in the perturbed game are considerably more convoluted, so much so that, in fact, the perturbed game may no longer be a congestion game! The basic reason for this is that, in the perturbed game, the delay function can depend not just on the number of players using it, but also on their identities. Yet, we observe that this cannot happen when players have the same strategy set, paving the way to~\Cref{theorem:polysteps}. We defer the formal proof of this result along with some relevant background on matroid congestion games to \Cref{sec:matroid}.

The second class of games we consider is \emph{symmetric network congestion} games. Here, players' strategies correspond to paths linking a designated source to its destination. For this class of games, \citet{Fabrikant04:Complexity} gave a polynomial-time algorithm for finding a global optimum of the potential function---and thereby a pure Nash equilibrium---based on min-cost flow. We provide an analogous reduction mapping perfect equilibria to solving a \emph{symbolic} min-cost flow problem, which in turn can be solved using known combinatorial algorithms or linear programming. We prove \Cref{theorem:symmetricnetwork} in \Cref{sec:network}.
\begin{theorem}
    \label{theorem:symmetricnetwork}
    In symmetric network congestion games, there is a polynomial-time algorithm for finding a perfect equilibrium.
\end{theorem}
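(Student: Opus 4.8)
The plan is to mirror the min-cost flow reduction of \citet{Fabrikant04:Complexity} for the potential minimizer, but to run it on a \emph{perturbed} delay structure whose costs are treated as polynomials in an infinitesimal $\epsilon$. Recall the unperturbed reduction: in a symmetric network congestion game with source $s$, sink $t$, and load-dependent delays $d_e(\cdot)$, Rosenthal's potential of a pure profile is $\Phi = \sum_e \sum_{k=1}^{n_e} d_e(k)$, where $n_e$ is the load on $e$; one minimizes $\Phi$ by splitting each edge $e$ into $n$ unit-capacity parallel copies carrying the marginal costs $d_e(1), \dots, d_e(n)$ and routing $n$ units of integral min-cost flow from $s$ to $t$, whose path decomposition is a potential-minimizing pure profile. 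My first step is to write down the analogue for the perturbed game that defines perfect equilibria: treating $\epsilon$ as a symbolic infinitesimal and using polynomial interpolation (\Cref{lemma:interpolation}) to evaluate utilities symbolically, I would express the induced perturbed delays $d_e^{\epsilon}(\cdot)$ — and hence the perturbed potential $\Phi_\epsilon$ — as polynomials in $\epsilon$.

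The second step is to argue that minimizing $\Phi_\epsilon$ retains the exact min-cost flow structure, now with the copies of edge $e$ carrying the symbolic marginal costs $d_e^{\epsilon}(1), \dots, d_e^{\epsilon}(n)$. The essential point, exactly as in the matroid case (\Cref{theorem:polysteps}), is that a perturbed delay could a priori depend on the \emph{identities} of the players traversing an edge rather than merely on their number, which would destroy the congestion — and therefore the flow — structure; the symmetry of the strategy sets is precisely what rules this out, so that the perturbed game remains a symmetric network congestion game with well-defined per-load symbolic delays.

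The third step is to solve this symbolic min-cost flow in polynomial time. I would invoke a standard combinatorial routine (successive shortest augmenting paths, or cycle cancelling), but with every cost, sum, and comparison performed on polynomials in $\epsilon$ ordered lexicographically by their coefficients starting from the lowest-degree term; by the argument underlying \Cref{theorem:PLS-complete}, this lexicographic order agrees with the numerical order for all sufficiently small $\epsilon > 0$, so the routine returns a flow that is optimal for the entire germ $\epsilon \to 0^+$. Equivalently, one can encode the instance as a linear program with $\epsilon$-polynomial objective and solve it symbolically.

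Finally, because all capacities are integral the symbolic min-cost flow admits an integral optimum, which decomposes into $n$ paths and hence a \emph{pure} profile; this profile minimizes $\Phi_\epsilon$ for every small $\epsilon$, so it is a Nash equilibrium of the perturbed game, and its $\epsilon \to 0^+$ limit is by definition a perfect equilibrium. The main obstacle I anticipate lies in the second step: verifying rigorously that the perturbation of the game translates into a clean, per-load symbolic delay function so that minimizing $\Phi_\epsilon$ is genuinely a single min-cost flow instance, and confirming that the integral (hence pure) structure of the optimum survives the symbolic refinement and indeed selects a perfect, rather than merely Nash, equilibrium in the limit.
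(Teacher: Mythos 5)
Your proposal is correct and follows essentially the same route as the paper: transfer the perturbation into symbolic per-load delay functions $\tilde{d}_e^{(\eps)}$ (which is where the symmetry assumption and the edge counts $B_e$ enter), reduce to min-cost flow on the graph with $n$ unit-capacity parallel copies per edge carrying the symbolic marginal costs, solve the resulting symbolic linear program, and use integrality to extract a pure potential-minimizing profile whose $\eps \to 0^+$ limit is a perfect equilibrium. The only cosmetic difference is that the paper commits to the symbolic LP formulation (citing that symbolic LPs are efficiently solvable) rather than a combinatorial augmenting-path routine, but both are valid instantiations of the same argument.
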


\paragraph{Strongly polynomial-time algorithms and perturbed optimization} A recurring question that arises throughout the line of work on equilibrium refinements---and our paper in particular---is whether an algorithm can be made to run efficiently under a symbolic class of instances. What we have been implicitly using so far is that certain basic operations---such as comparisons and additions---can still be performed symbolically when $\epsilon$ is small enough (\Cref{lemma:interpolation,lem:integer potential}). We make a significantly more general connection: any \emph{strongly polynomial-time algorithm}---which comprises basic arithmetic operations, namely $\{+, *, /, < \}$---can be made to run \textit{symbolically}; further background and definitions is provided in~\Cref{sec:strongly-poly}.

\begin{restatable}{theorem}{strongpoly}
    \label{theorem:stronglypolynomial}
    Let $\vx \in \Q[\epsilon]^M$ be a vector each of whose entries is a polynomial in $\epsilon$ of degree at most $d_x$. Suppose further that the output of each gate of $\cA$ with input $\vx$ is a piecewise rational function such that each piece has degree at most $d_{\cA} = \poly(d_x, |\cG|)$. Then there is a polynomial-time algorithm that computes the output of the circuit symbolically for any sufficiently small $\epsilon > 0$.
\end{restatable}

Most of our positive results---with the one exception of linear programming---can be seen as instances of this more general connection. The basic idea is to execute the underlying algorithm on multiple inputs in the neighborhood of $\epsilon \approx 0$, and then use fractional interpolation to derive the exact form of the symbolic output. On the other hand, weakly polynomial-time algorithms do not share this property; the simplest example where this becomes evident is binary search over the real numbers.

\subsection{Best-response paths: exponential separations}

The landscape that has emerged so far is that, in potential games, computing pure perfect (and proper) equilibria is polynomial-time equivalent to computing pure Nash equilibria. On the other hand, our next result shows that, for certain potential games, the length of the symbolic best-response path in the perturbed game can be exponentially larger than the length of the best-response path in the original game. (The formal proofs from this subsection are in \Cref{sec:longpaths}.)

\begin{theorem}
    \label{theorem:expsep}
    There is a class of identical-interest games with the following property:
    \begin{itemize}
        \item From any starting point, best-response dynamics converge to a Nash equilibrium in polynomially many steps, whereas
        \item there exist starting points such that symbolic perfect (or proper) best-response dynamics can take exponentially many steps.
    \end{itemize}
\end{theorem}

This class of games is predicated on the usual reduction from the local version of \textsc{MaxCut} with respect to the \textsc{Flip} neighborhood, but with a crucial twist: every node is to be represented by a triplet of players, each of whom has a say on which side of the cut the corresponding node should belong to. In particular, the assignment of that node is decided by the majority. In addition, we create a small incentive for the triplets to be unanimous; since the underlying game needs to be identical-interest, this added term in the utility accounts for the cumulative number of unanimous triplets. A subtle issue in our construction, whose role will become clear shortly, is that the penalty for non-unanimous triplets is only present when there are at least $2$ non-unanimous triplets. The upshot is that best-response dynamics quickly get trapped into spurious equilibria in which all but at most $1$ triplet are unanimous; these are spurious in the sense of failing to account for the local structure of \textsc{MaxCut}. On the other hand, \emph{symbolic} best-response dynamics can only converge to local optima of \textsc{MaxCut}.

In more detail, when we execute best-response dynamics and there are at least $2$ non-unanimous triplets, at some point we will update a player whose action differs from that of the majority in that triplet. That player's best response will be to switch to the side of the majority, for there is an incentive to maximize the number of unanimous triplets. From then onward, players in that newly unanimous triplet will retain their action under best-response dynamics given that the side of the cut is decided by the majority. As a result, we will end up with at most $1$ non-unanimous triplet, and such a triplet can perform a best-response update only once.

We now turn to the second part of~\Cref{theorem:expsep}. Following the above reasoning, we can assume that there is at most $1$ non-unanimous triplet. Thereupon, we claim that after the players in that triplet perform an update under symbolic best-response dynamics, there will be unanimity, and what is more, the elected node will be on the side of the cut that (at that point in time) locally maximizes the weight. In fact, the key claim is that those two invariances will be maintained throughout, which means that symbolic best-response dynamics can only converge to local optima of $\textsc{MaxCut}$ under the $\textsc{Flip}$ neighborhood. The proof proceeds as follows. Let us consider a unanimous triplet in which the elected node is not on the optimal side of the partition. Under symbolic best-response dynamics, where the player accounts for other players' trembles, there are two conflicting forces \emph{neither of which is present under best-response dynamics}: choosing the other side of the partition will incur a penalty since it breaks unanimity, but will also result in switching the side of the node. The crucial point here is that both of these forces are \emph{first-order}, in the sense that they both manifest in the coefficient of the degree-$1$ term in the polynomial that measures progress; this is where we make use of the fact that the penalty for non-unanimity is only introduced in the presence of at least $2$ non-unanimous triplets, so that it only introduces a higher order effect. The claim then follows by selecting the penalty term to be sufficiently small, so that the local structure of $\textsc{MaxCut}$ outweighs the cost of breaking unanimity.

Our next result turns~\Cref{theorem:expsep} on its head: it shows that, for certain potential games, symbolic best-response dynamics can be fast while best-response dynamics can be slow.

\begin{theorem}
    \label{theorem:expsep-opo}
    There is a class of identical-interest games with the following property:
    \begin{itemize}
        \item From any starting point, symbolic perfect (resp.\ proper) best-response dynamics converge to a perfect (resp.\ proper) equilibrium in polynomially many steps, whereas
        \item there exist starting points such that best-response dynamics can take exponentially many steps.
    \end{itemize}
\end{theorem}

As before, we start from the usual reduction from $\textsc{MaxCut}$, but we now introduce two additional players, each with two actions---say $\textsf{d}$ and $\textsf{e}$. When either of the two additional players selects $\textsf{d}$, the game proceeds as usual. But when $(\textsf{e}, \textsf{e})$ is selected, every player gets to collect a large reward $M \gg 1$ plus a bonus proportional to the number of players that belong to the left partition. Now, if the two additional players start from $(\textsf{d}, \textsf{d})$, it follows that best-response dynamics can only converge to local optima of $\textsc{MaxCut}$. This is not so under symbolic best-response dynamics: the possibility of a tremble immediately entices both additional players to action \textsf{e}, whereupon all players will end up entering the left partition---this is so because of the bonus given under $(\textsf{e}, \textsf{e})$.

\Cref{theorem:expsep-opo} is not a computational result; to be sure, finding a Nash equilibrium is no harder than finding a perfect (or proper) equilibrium, but it does present a counter-intuitive exponential separation under a natural class of algorithms.

\subsection{Complexity results for mixed strategies} The foregoing results fully characterize the complexity of computing a \emph{pure} equilibrium refinements in potential games under different game representations. We now turn our attention to mixed strategies. 

We begin with the class of polymatrix (multi-player) games (\Cref{def:polymatrix}). Combining~\Cref{theorem:PLS-complete} with the known $\PPAD$ inclusion due to~\citet{Hansen18:Computational},\footnote{\citet{Hansen18:Computational} showed inclusion for computing \emph{$\epsilon$-symbolic} proper equilibria per the perturbation of~\citet{Kohlberg86:strategic}; one can verify in polynomial time that any such purported solution is indeed an $\epsilon$-symbolic proper equilibrium.} together with the fact that $\CLS = \PPAD \cap \PLS$~\citep{Fearnley23:Complexity}, we arrive at the following consequence.

\begin{corollary}
    \label{cor:CLS}
    Computing a perfect or proper equilibrium of a potential, (multi-player) polymatrix game in normal form is in $\CLS$.
\end{corollary}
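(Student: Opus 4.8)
The plan is to place the search problem into both $\PLS$ and $\PPAD$, and then invoke the identity $\CLS = \PPAD \cap \PLS$ of~\citet{Fearnley23:Complexity}. Accordingly, I would split the argument into two independent membership claims for the \emph{same} total search relation $R(G, \vx)$, which holds when $\vx$ is a perfect (respectively proper) equilibrium of the polymatrix potential game $G$; totality is immediate since perfect and proper equilibria always exist.

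For $\PLS$ membership, the first step is to observe that a polymatrix game (\Cref{def:polymatrix}) is a special case of a concise potential game in the sense required by~\Cref{theorem:PLS-complete}. Indeed, each player's utility is a sum of bilinear forms, one per edge of the interaction graph, and every product gate in the natural circuit for this utility multiplies the strategy of one player by that of a distinct neighbor; the inputs to each multiplication therefore involve disjoint sets of players, so the circuit is multilinear and can be evaluated efficiently on rational inputs by~\Cref{lemma:efficient-eval}. Consequently~\Cref{theorem:PLS-complete} applies verbatim, and the pure perfect (respectively proper) equilibrium produced by its symbolic best-response local search is a valid witness for $R$. Since a pure proper equilibrium is in particular perfect, both the perfect and the proper relations are covered.

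For $\PPAD$ membership, I would invoke the inclusion of~\citet{Hansen18:Computational}, which places the computation of an $\epsilon$-symbolic proper equilibrium---defined through the permutahedral perturbation of~\citet{Kohlberg86:strategic}---in $\PPAD$; mapping the resulting object back to the corresponding proper equilibrium places the proper relation in $\PPAD$, and since a proper equilibrium is again perfect, the same output also witnesses the perfect relation. This is a bona fide $\PPAD$ membership because an $\epsilon$-symbolic proper equilibrium is efficiently checkable \emph{as such} (the footnoted observation), even though deciding the unperturbed equilibrium property may be hard.

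The step I expect to be the main obstacle, and the one deserving care, is reconciling the two solution concepts so that $\PPAD \cap \PLS$ applies to one and the same relation. The tension is that~\Cref{prop:hard-verif} shows that directly verifying ``$\vx$ is a perfect equilibrium'' is $\NP$-hard, so one cannot naively treat $R$ as an efficiently verifiable $\FNP$ relation in the textbook sense. The resolution is that membership in $\PLS$ and in $\PPAD$ is established through \emph{reductions} whose solution maps are guaranteed to output genuine perfect (respectively proper) equilibria, and such reductions never require verifying $R$ directly; this is exactly consistent with the broader definition of $\FNP$ adopted in this paper (following~\citet{Johnson88:How}). Once the perfect and proper relations are seen to lie simultaneously in $\PPAD$ and in $\PLS$ in this sense, the identity $\CLS = \PPAD \cap \PLS$ immediately yields the claim.
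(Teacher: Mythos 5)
Your proposal is correct and follows essentially the same route as the paper: $\PLS$ membership via the symbolic best-response argument of \Cref{theorem:PLS-complete} (after noting polymatrix utilities are multilinear circuits), $\PPAD$ membership via the $\epsilon$-symbolic proper equilibrium result of \citet{Hansen18:Computational}, and the identity $\CLS = \PPAD \cap \PLS$. Your added care about applying the intersection to one and the same relation despite the verification hardness of \Cref{prop:hard-verif} is a point the paper handles only implicitly through its \citet{Johnson88:How}-style definition of $\FNP$, but it does not change the argument.
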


It is generally believed that even computing a Nash equilibrium of an identical-interest polymatrix game is $\CLS$-hard; for example, \citet{Hollender25:Complexity} recently showed $\CLS$-hardness when the team is facing multiple independent adversaries. If that conjecture turns out to hold, \Cref{cor:CLS} would imply $\CLS$-completeness for both perfect and proper equilibria. Be that as it may, there is a precise sense in which perfect and proper equilibria in potential polymatrix games are polynomial-time equivalent to Nash equilibria: it can be shown that for a sufficiently small value of $\epsilon$, representable in polynomially many bits, computing a perfect or proper equilibrium reduces to finding a Nash equilibrium of a perturbed potential game (\Cref{prop:polymatrix-round}). We note that this argument does not go through beyond polymatrix games~\citep{Etessami14:Complexity}, and that more generally, the proof of~\Cref{prop:polymatrix-round} makes use of the general theory of~\citet{Farina17:Extensive} pertaining to perturbed LCPs.

\Cref{prop:polymatrix-round} implies that a perfect equilibrium can be computed by running a perturbed variant of gradient descent, thereby furnishing a more direct $\CLS$ membership proof. Interestingly, we observe that the \emph{symbolic} version of gradient descent can get stuck (\Cref{prop:symbolicfails}), so the previous claim only holds by setting $\epsilon$ to a sufficiently small numerical value, representable with polynomially many bits. 

At this point, it is worth revisiting the example of~\Cref{fig:nash-vs-perfect} (\Cref{sec:eq-ref}). An unsatisfactory feature of that example is that while a nonsensible Nash equilibrium exists, it is unlikely to be reached by natural learning algorithms. In particular, in that example, if one initializes gradient descent at random---a common practice in optimization with strong theoretical guarantees in the unconstrained setting (\Cref{sec:related})---the dynamics converge to the optimal equilibrium \emph{almost surely}. But we observe that this is not always the case even in $2 \times 2$ potential games. 

An example game is given in~\Cref{fig:randomGD}. The corresponding gradient descent dynamics are illustrated in~\Cref{fig:quiver_perfect}. We see that vanilla gradient descent often converges to suboptimal Nash equilibria. On the other hand, executing the dynamics on the perturbed game converges to the welfare-optimal point. From a price of anarchy perspective, the example of~\Cref{fig:nash-vs-perfect} already shows that perfect equilibria can be arbitrarily better than Nash equilibria. \Cref{fig:quiver_perfect} takes a step further: it reveals that the \emph{average} price of anarchy of gradient descent---in the sense of~\citet{Sakos24:Beating}---can still be far from the welfare attained by perfect equilibria. On the flip side, we will later discuss how to use~\Cref{theorem:PLS-complete} in conjunction with the framework of~\citet{Roughgarden14:Barriers} to obtain non-trivial lower bounds for the price of anarchy of perfect equilibria.

\begin{figure}
\centering
\begin{tabular}{c c c}
& \textsf{C1} & \textsf{C2} \\
\textsf{R1} & 12, 2 & 2, 2 \\
\textsf{R2} & 11, 1 & 0, 0 \\
\end{tabular}
\caption{A $2 \times 2$ potential game in normal form, the column player's utility function defines the potential. Any strategy profile in which \textsf{R1} is played with probability $1$ is a Nash equilibrium. The only perfect equilibrium is $(\textsf{R1}, \textsf{C1})$. }
\label{fig:randomGD}
\end{figure}

\begin{figure}[!ht]
    \centering
    \includegraphics[scale=0.6]{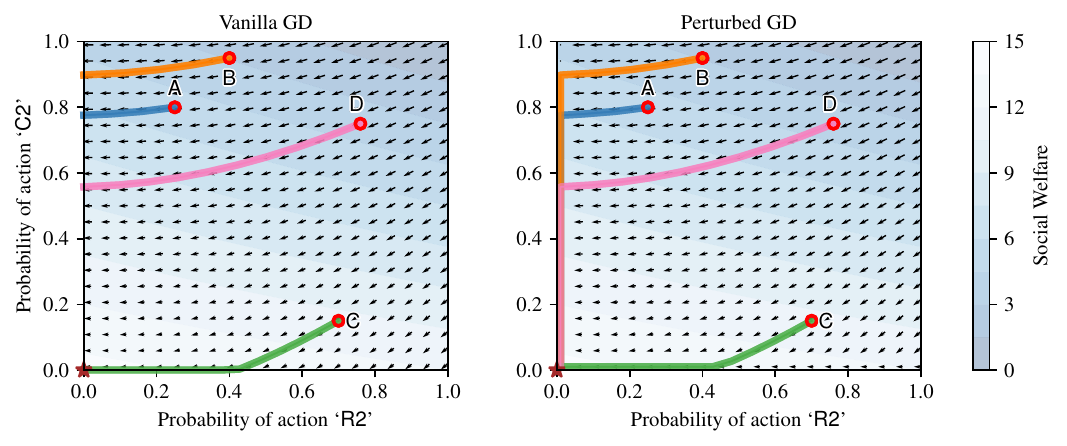}
    \caption{Vanilla gradient descent (left) on the game of~\Cref{fig:randomGD} versus gradient descent on the perturbed game with perturbation magnitude $\epsilon \defeq 0.01$ (right) under $4$ different initializations. The welfare-optimal point and unique perfect equilibrium $(\textsf{R1}, \textsf{C1})$ appears in the bottom-left corner. We see that gradient descent in the unperturbed game often converges to Nash equilibria with much lower welfare than the optimal one.}
    \label{fig:quiver_perfect}
\end{figure}

\paragraph{Beyond polymatrix games} The landscape changes dramatically when moving to general potential games. Indeed, it is not even known whether computing an exact mixed Nash equilibrium in potential games---even though one supported on rational numbers always exists---lies in $\PPAD$, let alone any of its refinements. Instead, based on existing results, we can only place mixed Nash equilibria in $\PLS \cap \FIXP$. Characterizing the exact complexity of this problem is an important open problem in this area; to our knowledge, even the relation between $\FIXP$ and $\PLS$ is unexplored. So what can we hope to prove for the complexity of equilibrium refinements?

\citet{Etessami14:Complexity} have shown that, in general-sum games, as long as one takes the perturbation parameter $\epsilon$ to be \emph{doubly exponentially small}, one immediately recovers a point that is close geometrically to an exact perfect equilibrium. Now, if one posits general arithmetic circuits, one can produce doubly exponentially small values through repeated squaring, thereby reducing perfect equilibria to Nash equilibria. But, of course, this is not possible in the standard model of computation. The main question here is whether one \emph{needs} to take $\epsilon$ doubly exponentially small in potential games; after all, such games are very structured. Our next main result shows that it is indeed necessary.

\begin{theorem}
    \label{theorem:doublyexponentiallysmall}
    For every positive integer $n$ there exists a normal-form potential game $\Gamma_n$ with $4n + 1$ players and two actions per player such that, for all $\eps \in [1/2^{2^n}, 1/2]$, the perturbed game $\Gamma^{(\eps)}_n$ admits a Nash equilibrium that is distance $1/2$ away in $\ell_\infty$-norm from any Nash equilibrium of $\Gamma_n$.
\end{theorem}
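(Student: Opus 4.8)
The plan is to build $\Gamma_n$ as a \emph{square-root cascade}: a chain of $n$ gadgets in which each equilibrium relation halves an exponent, so that a base probability of order $\epsilon$ is amplified to $\epsilon^{1/2^n}$ at the top of the chain. Since a potential game has the same best responses as the identical-interest game given by its potential $\Phi$ (the player-specific dummy terms $D_i(s_{-i})$ satisfy $u_i(a_i,s_{-i}) - u_i(a_i',s_{-i}) = \Phi(a_i,s_{-i}) - \Phi(a_i',s_{-i})$ and so do not affect any unilateral deviation), I would design $\Gamma_n$ directly as an identical-interest game and specify $\Phi$ as a low-degree multilinear polynomial over the $4n+1$ binary players. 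Writing $p_k$ for the probability that the level-$k$ players play their second action, the cascade is engineered so that in any mixed equilibrium the joint ``both play the second action'' event of the two players at level $k$ has probability $p_k^2$, and an auxiliary enforcer gadget forces the indifference relation $p_k^2 = p_{k-1}$, i.e. $p_k = \sqrt{p_{k-1}}$. A final output player is set up to prefer its second action exactly when $p_n \ge 1/2$.

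I would then analyze the two regimes. For the unperturbed game, a small default incentive toward the first action makes the base players play it purely, so $p_0 = 0$; the cascade relation then collapses every $p_k$ to $0$ and the output player strictly prefers its first action, so \emph{every} Nash equilibrium of $\Gamma_n$ has the output player pure on its first action. For the perturbed game $\Gamma_n^{(\epsilon)}$ with $\epsilon \in [1/2^{2^n}, 1/2]$, I would exhibit the spurious profile $p_0 = \epsilon$ and $p_k = \epsilon^{1/2^k}$, with the output player on its second action. Because $\epsilon < 1$, the values $\epsilon^{1/2^k}$ increase in $k$ and lie in $[\epsilon, 1-\epsilon]$ (the mild boundary case where the ceiling $1-\epsilon$ binds is handled separately: there the relevant probability is only pushed higher, which still triggers the output player). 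The base players sit at the floor because they weakly prefer their first action, the cascade indifferences hold by construction, and the output player fires because $p_n = \epsilon^{1/2^n} \ge 1/2$ precisely when $\epsilon \ge 1/2^{2^n}$.

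Comparing the spurious equilibrium (output plays its second action with probability $1-\epsilon \ge 1/2$) against every equilibrium of $\Gamma_n$ (output pure on its first action) yields an $\ell_\infty$ distance of at least $1/2$ in the output coordinate, which is the claim; the player count is four per level for the $n$ cascade gadgets plus the single output player, giving $4n+1$, and $\Phi$ is a constant-degree multilinear polynomial, so the representation is concise. The main obstacle is the gadget realizing the square-root relation inside an identical-interest game using only $O(1)$ players per level: I need the two players of a level to stay symmetric so their joint second-action probability is exactly $p_k^2$, an enforcer whose indifference pins $p_k^2 = p_{k-1}$, and these intermediate mixings to be \emph{sustainable} in the perturbed game yet to \emph{collapse} to the all-first-action profile in the unperturbed game. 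Balancing a default pull toward the first action against the cascade reward (scaled by a large weight $W$) so that both regimes behave correctly, and verifying that the chained indifferences are simultaneously consistent along the entire cascade, is where the real work lies; the exponent bookkeeping that pins the threshold at exactly $1/2^{2^n}$ then follows from $p_n = \epsilon^{1/2^n} \ge 1/2 \iff \epsilon \ge 1/2^{2^n}$.
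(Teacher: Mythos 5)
Your high-level mechanism---repeated squaring along a chain so that the perturbation floor $\eps$ is amplified into a doubly-exponential threshold, realized inside an identical-interest game by writing down the potential directly---is the same one the paper uses. But the paper anchors the chain at the \emph{top} with the constant $x_0 = x_0' = 1/2$ and squares \emph{downward} ($x_i = x_{i-1}x_{i-1}'$, so $x_i = 1/2^{2^i}$ until the floor $\eps$ catches it), via the potential $\Phi = \sum_i\left[(t-c_i)(x_i - x_{i-1}x_{i-1}') + (d_i - \frac{1}{2})(x_i - x_i')\right] - 2x_n - 2x_n' - 2nt$. Your chain is anchored at the \emph{bottom} by the floor $p_0 = \eps$ and square-roots \emph{upward}, and this orientation creates a concrete problem you have not resolved: for most of the required range $\eps \in [1/2^{2^n}, 1/2]$ the values $p_k = \eps^{1/2^k}$ exceed $1-\eps$ (already $p_1 = \eps^{1/2} > 1-\eps$ for moderate $\eps$), so the ceiling of the perturbed strategy set binds at many levels. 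Your parenthetical fix---clamp to $1-\eps$, ``the probability is only pushed higher''---does not yield an equilibrium under the natural gadget: with $p_k$ clamped to $1-\eps$, the enforcer's incentive $p_k^2 - p_{k-1} = (1-\eps)^2 - p_{k-1}$ is typically \emph{negative} (for $\eps = 1/2$ this is $1/4$ versus $p_{k-1} \geq \eps^{1/2} \approx 0.707$), which drives the enforcer to the opposite boundary and in turn pushes the level-$k$ players \emph{away} from the ceiling, a contradiction. The paper's downward orientation never meets the ceiling, since $\max\{\eps, x_{i-1}^2\} \leq 1/2$ throughout.

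The second gap is the unperturbed direction. You claim every exact Nash equilibrium of $\Gamma_n$ has the output player pure on its first action, but the cascade relations $p_k^2 = p_{k-1}$ hold only when the enforcers strictly mix; in an exact equilibrium an enforcer may be pure, its indifference condition is then vacuous, and the relation can fail---potentially sustaining equilibria with large $p_n$. Likewise nothing yet forces the two level-$k$ players to be symmetric in an arbitrary equilibrium, so their joint second-action probability need not equal $p_k^2$. You flag this gadget engineering as ``where the real work lies,'' but it is precisely the content of the theorem. Note that the paper proves something weaker and easier here: it never determines a designated output coordinate in every exact equilibrium; it shows only that every exact equilibrium must have $c_i = 1$ or $d_i \in \{0,1\}$ for \emph{some} $i$ (if all $d_i$ are interior then $x_i = x_i'$; the always-negative gradients force $x_n = x_n' = t = 0$; hence some $x_i < x_{i-1}^2$, making that $c_i$'s gradient positive), while the exhibited perturbed equilibrium has every $c_i = \eps$ and every $d_i = 1/2$. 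Weakening your target claim in the same way would remove much of the difficulty; as written, the collapse argument is not established.
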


As an immediate byproduct of independent interest, this theorem shows that, in potential games, even when $\epsilon$ is doubly exponentially small, an $\epsilon$-approximate Nash can still be far from an exact Nash in geometric distance. (This holds because an exact Nash equilibrium in the perturbed game induces an approximate Nash in the original game with approximation proportional to the perturbation.) This was hitherto only known in three-player general-sum games due to the tour de force of~\citet{Etessami07:Complexity}. Besides holding for a significantly more structured class of games, \Cref{theorem:doublyexponentiallysmall} is established based on a simple and succinct argument; this is unlike the original proof of~\citet{Etessami07:Complexity}, which is especially intricate.

We now sketch the main idea of our argument. It is instructive to begin with general polynomial optimization problems---without restricting to multilinear polynomials. The starting observation is that one can perform repeated squaring by considering the function $f : [0, 1]^{n} \to \R$ given by
\begin{align*}
    f(\vx) = \qty(2x_1 - 1)^2 + \qty(x_2 - x_1^2)^2 + \dots + (x_{n-1} - x_{n-2}^2)^2 - x_n x_{n-1},
\end{align*}
which is to be minimized. It is not hard to show that the point $\vx = (1/2, 1/4, 1/16, \dots, 1/2^{2^{n-2}}, 0)$ is an $\eps$-KKT point with $\eps$ being proportional to $1/2^{2^{n-2}}$. However, every exact KKT point has $x_{n-1} > 0$ and therefore $x_n = 1$. The name of the game now is to make this argument go through while using only multilinear polynomials. To do so, we construct the potential function
\begin{align*}
        \Phi(\vx, \vx', \vc, \vd, t) = \sum_{i=1}^n \qty[\qty(t - c_i) (x_i - x_{i-1} x_{i-1}') + \qty(d_i - \frac12) (x_i - x_i')] - 2x_n - 2x'_n - 2n \cdot t,
\end{align*}
where $\vx, \vx', \vc, \vd \in [0, 1]^n$ and $t \in [0, 1]$; for simplicity of notation, we set $x_0 = x_0' \defeq 1/2$. The sole purpose of the player $t$ is to have a value that is guaranteed to be $\eps$, since by construction $t$ always has negative gradient.  The basic idea is that $\vx'$ is intended to be a copy of $\vx$, and it is essential to make sure that, in equilibrium, $\vx \approx \vx'$. The argument boils down to proving the following two claims.

\begin{itemize}
    \item In every (exact) Nash equilibrium of $\Gamma_n$, there is some $i \in [n]$ such that $d_i \in \{0, 1\}$ or $c_i = 1$.
    \item For $\epsilon \in [1/2^{2^{n}}, 1/2]$, the perturbed game $\Gamma_n^{(\epsilon)}$ has an equilibrium in which $\vec{d} = \frac{1}{2} \vec{1}$ and $\vec{c} = \epsilon \vec{1}$.
\end{itemize}

We provide the details of the argument in~\Cref{sec:doublyexpo}. Also, we show how to embed this construction even in a $3$-player potential game in normal form (\Cref{theorem:3playerdoublyexp}).

\subsection{Price of anarchy of perfect and proper equilibria}

We have seen through some simple motivating examples in~\Cref{sec:eq-ref} that perfect equilibria can lead to significantly higher welfare \emph{vis-\`a-vis} Nash equilibria. On the flip side, we also provide lower bounds for the price of anarchy with respect to perfect and proper equilibria by leveraging the previously established $\PLS$ membership (\Cref{theorem:PLS-complete}). This makes use of the elegant framework of~\citet{Roughgarden14:Barriers} that relies on hardness of approximation for the underlying optimization problem. We provide a concrete application in polynomial congestion games---perhaps the most well-studied class from a price of anarchy perspective---by leveraging the hardness result of~\citet{Paccagnan24:Congestion}.

\begin{theorem}[Precise version in~\Cref{theorem:poa-lower}]
    \label{theorem:informal-poa-lower}
    In polynomial degree-$d$ congestion games, the price of anarchy of pure perfect or pure proper equilibria is at least $d^{\Omega(d)}$.
\end{theorem}
\section{Further related work}
\label{sec:related}

We conclude by expanding on additional related work. We begin by covering many of the advances in the complexity of equilibrium refinements. We then connect our work with prior literature on improving the quality of equilibria in potential games and constrained optimization problems more broadly.

\paragraph{Complexity of equilibrium refinements} \citet{Hansen10:Computational} proved that verifying whether a strategy profile is a perfect equilibrium even in $3$-player games is \NP-hard (and \sqrtsum-hard); this of course stands in stark contrast to (exact) Nash equilibria that trivially admit polynomial-time verifiers. (For two-player games, it is known \citep{VanDamme91:Stability} that a strategy profile is a perfect equilibrium if and only if it is undominated, which can be in turn ascertained in polynomial time via linear programming.)

\citet{Etessami14:Complexity} showed that approximating a perfect equilibrium---in the sense of being close in $\ell_\infty$ distance to an exact one, forming a ``strong approximation'' guarantee in the parlance of~\citet{Etessami07:Complexity}---in games with at least three players is polynomial-time equivalent to approximating Nash equilibria, thereby being $\FIXP_a$-complete. $\FIXP_a$ is a complexity class introduced by~\citet{Etessami07:Complexity} that contains search problems reducible to (strongly) approximating a Brouwer fixed point of a function given by an algebraic circuit with gates $+, -, *, /, \max, \min$. On the other hand, computing an exact perfect equilibrium in two-player games is \PPAD-complete; this follows by carefully analyzing exact pivoting algorithms~\citep{Stengel02:Computing}.

\citet{Etessami21:Complexity} characterized the complexity of various refinements in multi-player extensive-form games of perfect recall. Namely, \emph{sequential equilibria (SEs)}~\citep{Kreps82:Sequential},\footnote{The definition of a sequential equilibrium is somewhat cumbersome, being predicated on a set of beliefs; we do not expand on that definition since it is not relevant for our purposes.} which refine both Nash equilibria and \emph{subgame-perfect equilibria (SPEs)}; extensive-form perfect equilibria (EFPEs), which refine SEs; normal-form perfect equilibria (NFPEs); and quasi-perfect equilibria (QPEs)~\citep{vanDamme84:relation}. The most refined of those notions are QPEs and EFPEs.

\citet{Hansen18:Computational} examined the complexity of proper equilibria. They showed that even in two-player games in normal form, the task of verifying the proper equilibrium conditions is \NP-complete; this is in contrast to perfect equilibria~\citep{Hansen10:Computational}. For multi-player games, they showed that strongly approximating a proper equilibrium is $\FIXP_a$-complete, while computing a \emph{symbolic} proper equilibrium in polymatrix games is \PPAD-complete; the latter strengthens an earlier result due to~\citet{Sorensen12:Computing}. Relatedly, \citet{Hansen21:Computational} studied the complexity of \emph{quasi-proper equilibria} in extensive-form games, a refinement of quasi-perfect equilibria~\citep{vanDamme84:relation}.

A general technique for proving membership in $\FIXP$ was recently developed by~\citet{FilosRatsikas21}; among others, they provided a simple proof handling proper equilibria. Even more recently, \citet{FilosRatsikas24:PPAD} introduced a simple framework for proving $\PPAD$ membership, and gave several new results on the complexity of equilibrium refinements.

It should be noted that the complexity landscape is drastically different in (two-player) zero-sum games. For example, \citet{Miltersen06:Computing} showed that an exact proper equilibrium can be computed in polynomial time through linear programming, and similar positive results have been established for other equilibrium refinements as well. Notably, \citet{Miltersen08:Fast} showed that a (normal-form) proper equilibrium can be computed in polynomial time. A more practical algorithm for zero-sum games was developed by~\citet{Farina18:Practical}. \citet{Bernasconi24:Learning} developed learning algorithms for EFPEs in zero-sum games.

\paragraph{Improving the quality of equilibria in potential games} Our work can also been viewed as part of the research agenda endeavoring to identify better equilibria in potential games~\citep{Anshelevich08:Price,Balcan13:Circumventing,Gemp22:D3C,Sakos24:Beating}, thereby circumventing the \emph{price of anarchy}---the ratio between the welfare-optimal state and the worst-case welfare attained at a Nash equilibrium. Yet the viewpoint we take in this paper is quite different. One related work by~\citet{Balcan18:Diversified} shows that \emph{diversified} strategies---ones that do not put too much probability mass on a single action---can lead to higher-welfare equilibria; the idea of diversification closely ties to the perfect equilibrium perturbation. Another related concept is ``price of uncertainty,'' introduced by~\citet{Balcan13:Price}. It deals with the impact of small fluctuations in the behavior of best response dynamics. Such fluctuations can be due to players' mistakes, not unlike perfect equilibria. 
 
As we have discussed already, one straightforward observation---which follows readily even from the simple example of~\Cref{fig:nash-vs-perfect}---is that the price of anarchy defined with respect to perfect equilibria can be arbitrarily smaller than that with respect to Nash equilibria. These type of considerations are not without precedent: \citet{Leme12:Curse} introduced the ``sequential price of anarchy.'' Informally, it relates the outcome of the worst possible \emph{subgame perfect equilibrium} of all sequential versions of the game. They showed that the sequential price of anarchy can be significantly more favorable in certain classes of games.

\paragraph{Optimization perspective} On a similar note, there has been tremendous interest on understanding how to avoid certain undesirable stationary points in constrained optimization problems. In unconstrained problems, it is known that a perturbed version of gradient descent converges to second-order stationary points, thereby avoiding \emph{strict} saddle points~\citep{Jin17:How, Lee19:First}. The situation is more complex in the constrained setting. \citet{Nouiehed18:Convergence} showed that gradient descent with random initialization can often fail to converge to second-order stationary points; in fact, they showed that even checking the conditions of second-order stationarity is \NP-hard. Our work contributes to this line of work by identifying a strict subset of stationary points that can be reached through a perturbed variant of gradient descent. It is worth referring to~\citet{vanderLann99:Existence} for a notion of ``perfect stationary point.'' In particular, \citet{Dang15:Interior} came up with an interior-point-type algorithm for finding such refinements.
\section{Conclusions and future research}

Our main contribution in this paper was to characterize the complexity of equilibrium refinements in potential games. Our results span various game representations---ranging from normal-form games to general polytope games---and equilibrium refinements, primarily centered on (normal-form) perfect and proper equilibria. Taken together, our results paint a comprehensive picture of the complexity landscape.

Moreover, some of our technical results have important implications well beyond equilibrium refinements. Most notably, we significantly strengthened and simplified a seminal result of~\citet{Etessami07:Complexity}: even in three-player potential games, we showed that a doubly exponentially small precision is necessary to be geometrically close to an equilibrium. 

There are several interesting avenues for future research. First, the complexity of computing a strong approximation to a \emph{mixed} Nash equilibrium in potential games remains an outstanding open problem. 

\begin{question}
    \label{question:FIXP}
    Is computing an exact mixed Nash equilibrium in potential games $\FIXP \cap \PLS$-complete? Is the strong approximation thereof $\FIXP_a \cap \PLS$-complete.
\end{question}

It is clearly in both $\FIXP$ and $\PLS$,\footnote{This is so when one adopts the definition of $\PLS$ we use in this paper, which does not rest on being able to verify any possible purported solution; otherwise, verifying whether a point is close to an exact mixed Nash equilibrium is a hard problem.} but nothing is known beyond $\CLS$-hardness. In light of our results, a $\FIXP \cap \PLS$-hardness result would immediately resolve the complexity of both perfect and proper equilibria. To our knowledge, the relation between $\FIXP$ and $\PLS$ is entirely unexplored; what is the right analog of $\FIXP$ for problems in $\CLS$?

It would also be interesting to expand the positive results we obtained in~\Cref{sec:positive}. For example, what can be said about matroid congestion games without the symmetry assumption? And can those results for perfect equilibria be extended to proper equilibria?

Finally, are there further implications of the connection we made between strongly polynomial-time algorithms and perturbed optimization (\Cref{theorem:stronglypolynomial})? We believe there is fertile ground in employing our techniques to other optimization problems that do not necessarily have a game-theoretic flavor. Our framework can be used to refine the solution quality in problems such as shortest paths and maximum flow by accounting for some uncertainty regarding the underlying costs. Naturally, there has been much work in that direction, but the viewpoint stemming from equilibrium refinements seems to be new.

\section*{Acknowledgments}

We are indebted to Ratip Emin Berker for numerous insightful discussions throughout this project. I.A. thanks Ioannis Panageas, Jingming Yan, and Alexandros Hollender for many discussions pertaining to~\Cref{question:FIXP}. I.A. also thanks Gabriele Farina for a helpful discussion concerning~\Cref{lemma:NPP}. K.F. thanks Vince Conitzer, whose course on ``Foundations of Cooperative AI" inspired the initial ideas that led to this work. T.S. is supported by the Vannevar Bush Faculty Fellowship ONR N00014-23-1-2876, National Science Foundation grants RI-2312342 and RI-1901403, ARO award W911NF2210266, and NIH award A240108S001. E. T. thanks the Cooperative Al Foundation and PhD Fellowship, Macroscopic Ventures and Jaan Tallinn’s donor-advised fund at Founders Pledge for financial support.

\bibliographystyle{plainnat}
\bibliography{refs}

\clearpage

\appendix


\clearpage

\section{Lower bound on verification of perfect equilibrium}
\label{sec:lowerboundverif}

The first question is whether a purported perfect equilibrium in a potential game can be verified efficiently. Efficient verification is obvious for Nash equilibria, but, surprisingly, \citet{Hansen10:Computational} showed that even in three-player general-sum games the problem becomes \NP-hard. In this section, we strengthen the result of~\citet{Hansen10:Computational} by showing that even in a three-player \emph{identical-interest} games \NP-hardness persists.

\begin{theorem}\label{th:verify}
    It is \NP-hard to check whether a given strategy profile is a perfect equilibrium, even when the game is an identical-interest game with three players and the given profile is pure.
\end{theorem}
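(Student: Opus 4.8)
The plan is to reduce from the decision version of the team-minimax-equilibrium (TME) value problem in three-player adversarial team games, where two \emph{team} members share a payoff $U$, a third \emph{adversary} player minimizes it, and we must decide whether $V := \max_{x_1,x_2}\min_z U(x_1,x_2,z) \ge \tau$. Following \citet{Hansen10:Computational}, this value problem is $\NP$-hard; taking a constant number of independent copies yields a \emph{gap-amplified} variant in which YES instances satisfy $V \ge \tau_{\mathrm{hi}}$ and NO instances satisfy $V \le \tau_{\mathrm{lo}}$ for fixed constants $\tau_{\mathrm{lo}} < \tau < \tau_{\mathrm{hi}}$. The central conceptual point is that, although the target game will be identical-interest, the two quantifiers inherent in perfection already supply the adversarial structure: by \Cref{def:PE}, verifying that a pure profile is perfect asks whether there \emph{exist} trembles under which each designated action is a best response, and ``being a best response'' is itself a \emph{universal} statement over deviations. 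I will engineer the game so that the team's $\max$ is carried by the existentially chosen trembles of players $1,2$, while the adversary's $\min$ is carried by the for-all over player $3$'s deviations.

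Concretely, I would build a three-player identical-interest game with common utility $u$, giving each player a distinguished ``dummy'' action $s$ alongside the actions of the corresponding player in the team game; the designated pure profile is $a^\ast = (s,s,s)$. Payoffs are laid out block-wise. Whenever player $3$ plays its dummy, set $u(a_1,a_2,s)$ equal to $-\tau$ if $a_1,a_2$ are both team actions and $0$ otherwise; whenever player $3$ plays a genuine adversary action $z$, set $u(a_1,a_2,z) = -U(a_1,a_2,z)$, with $U$ extended by $U(a_1,a_2,z)=0$ if either team coordinate is the dummy. In a tremble, players $j\in\{1,2\}$ place probability $\epsilon\,x_j(a)$ on each team action $a$, so the normalized trembles encode a team strategy $x_j$, while player $3$ trembles arbitrarily. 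One reads off that player $1$'s dummy payoff is $0$ for every opponent profile while any team-action deviation yields $-\Theta(\epsilon)$; hence (symmetrically for player $2$) their dummy actions are automatically best responses \emph{irrespective} of the trembles. The only binding condition is player $3$'s: to leading order its dummy yields $-\tau\epsilon^2$ and a deviation to $z$ yields $-\epsilon^2 U(x_1,x_2,z)$, so $s$ is a best response for player $3$ exactly when $\min_z U(x_1,x_2,z) \ge \tau$.

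I would then show $a^\ast$ is perfect $\iff V \ge \tau$. For a YES instance, take the constant-proportion trembles induced by an optimal team strategy with $\min_z U \ge \tau_{\mathrm{hi}} > \tau$; player $3$'s dummy is then a strict best response for all sufficiently small $\epsilon$, players $1,2$ are handled as above, and the sequence converges to $a^\ast$, so $a^\ast$ is perfect. For a NO instance, $V \le \tau_{\mathrm{lo}} < \tau$ forces, for \emph{every} choice of trembles, some adversary action $z$ with $U(x_1,x_2,z) < \tau$, hence a strictly profitable deviation for player $3$; thus no converging tremble sequence can witness perfection.

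The main obstacle, and the reason the gap is indispensable, is the NO direction: perfection permits \emph{arbitrary} tremble schedules, not just the product-form, constant-proportion ones used for YES. I must therefore argue that no exotic schedule—where, say, the two team players vanish at different polynomial rates in $\epsilon$, or player $3$'s own trembles interact with the block structure—can rescue player $3$'s dummy in a NO instance. The clean separation of orders (the $\Theta(\epsilon^2)$ leading terms against $O(\epsilon^3)$ and higher corrections) is what makes this manageable: the effective team strategy extracted from \emph{any} tremble sequence still lies in the team simplex and hence obeys $\min_z U \le V \le \tau_{\mathrm{lo}}$, and the constant gap $\tau_{\mathrm{hi}} - \tau_{\mathrm{lo}}$ dominates all lower-order slack so that the leading-order comparison decides perfection regardless of the schedule. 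Establishing this order-robustness carefully is the crux of the argument.
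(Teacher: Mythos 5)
Your reduction follows the same skeleton as the paper's: start from the gap version of the TME value problem in three-player adversarial team games, add one dummy action per player, and ask whether the all-dummy profile is perfect. But your payoff gadget is genuinely different, and the difference matters. The paper sets the common utility equal to the \emph{adversary's} team-game value only when nobody plays the dummy, keeps it at $r$ in most dummy-involving profiles, and inserts an intermediate penalty $r-\delta/2$ precisely to make the dummy a strict best response for the team members; the whole argument then runs on a leading-order-in-$\eps$ comparison. You instead zero out the utility whenever either team coordinate is the dummy and charge $-\tau$ (resp.\ $-U$) only when both team players tremble onto real actions. This buys you two things the paper's gadget does not have: the team members' incentives become trivial (the dummy yields exactly $0$ while any deviation yields $-\Theta(\eps)\tau<0$, for small enough adversary tremble mass), and player~3's comparison becomes an \emph{exact} cancellation, $-p_1p_2\tau$ versus $-p_1p_2\,U(x_1,x_2,z)$, where $p_j$ is player $j$'s total tremble mass onto team actions and $x_j$ the conditional distribution. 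In particular, the ``crux'' you flag at the end---robustness to exotic tremble schedules in the NO direction---is a non-issue in your own construction: for \emph{any} fully mixed profile, player~3's dummy is a best response iff $\min_z U(x_1,x_2,z)\ge\tau$ for the induced conditional product distribution, and every such product distribution satisfies $\min_z U\le V\le\tau_{\mathrm{lo}}<\tau$ in a NO instance, forcing player~3's dummy probability below $\eps$ and killing convergence. So the NO direction is actually the clean half; the gap is really consumed in the YES direction, to absorb the $O(\eps)$ loss from forcing the optimal team strategy to have full support.

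Two caveats. First, your claim that ``a constant number of independent copies yields a gap-amplified variant'' with \emph{constant} thresholds is unjustified---parallel repetition does not obviously amplify additive gaps for team max--min values over product strategies---but it is also unnecessary: the inverse-polynomial gap $\delta=\Omega(1/m^2)$ from \citet{Borgs08:Myth} (the result the paper itself invokes) suffices, since your comparisons are exact in $\eps$ and any strictly positive gap does the job; just set $\tau$ in the middle of the promised interval. You should drop the amplification step. Second, you should normalize so that $\tau>0$ (and note that the adversary's tremble mass vanishes), so that team-action deviations for players~1 and~2 are genuinely unprofitable; this is a one-line fix. With those repairs the proposal is a correct, and in places cleaner, alternative to the paper's proof.
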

We dedicate the rest of this section to proving the theorem. \citet{Hansen10:Computational} showed that this problem is \NP-hard for general-sum games; we adapt their proof. We reduce from the problem of computing the team minimax equilibrium in a three-player team game:
\begin{definition}
    An {\em adversarial team game} is an $n$-player game in which $n-1$ players (the ``team'') have the same utility function $u$, and the remaining player (the ``adversary'') has utility $-u$. The {\em team minimax equilibrium (TME) value} is the smallest number $r$ such that the team can force the adversary's utility to be at most $r$ by playing an uncorrelated mixed strategy profile.\footnote{Following \citet{Hansen10:Computational}, values will be from the perspective of the adversary.}  
\end{definition}
    
Computing even approximately the TME value of an adversarial team game is \NP-hard.
\begin{theorem}[\cite{Borgs08:Myth}]
    It is \NP-hard to approximate the TME value of a three-player adversarial team game with $m$ actions per player to within an additive $\Omega(1/m^2)$. That is, given a game and a value $r$, it is \NP-hard to distinguish whether the TME value is $\ge r+\delta$ or $\le r-\delta$, where $\delta = \Omega(1/m^2)$.
\end{theorem}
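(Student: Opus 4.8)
The plan is to prove this as a polynomial-time reduction from a padded version of \textsc{Clique}, engineering the three-player team game so that the team minimax value is (roughly) the reciprocal of the size of the largest clique the two team members can jointly ``commit'' to. Given a graph $G = (V, E)$ with $|V| = m$ and a target clique size $k$, I would first pad $G$ (for instance by adjoining a large universal clique) so that, without loss of generality, $k = \Theta(m)$; this padding preserves the reduction and is exactly what yields the additive gap $\delta = \Theta\big(1/(k(k-1))\big) = \Omega(1/m^2)$. The two team players $1, 2$ each have action set $V$, and their product mixed strategy $(p, q)$ is meant to encode a uniform distribution over a purported $k$-clique.

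The adversary (player $3$) is given two families of ``audit'' actions. A \emph{vertex probe} for each $v \in V$ pays the adversary $p_v$, and symmetrically one paying $q_v$, so that holding the adversary low forces each team player to spread its mass out; against a uniform distribution on a set of size $s$ this contributes value $1/s$. A \emph{non-edge catch} for each pair $\{u, v\} \notin E$ pays the adversary $M \cdot (p_u q_v + p_v q_u)$ for a large constant $M$, so that placing noticeable joint mass on a non-edge is heavily penalized; this is what forces the joint support to look like a clique. In the \textsc{Yes} case, taking $p = q = \tfrac1k \mathbf{1}_C$ for a $k$-clique $C$ makes every non-edge catch vanish (no non-edge lies inside $C$) and makes every vertex probe equal to exactly $1/k$, so the value is $1/k$. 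Choosing the threshold $r = \tfrac12\big(\tfrac1k + \tfrac1{k-1}\big)$ and $\delta = \tfrac12\big(\tfrac1{k-1} - \tfrac1k\big) = \Omega(1/m^2)$, the \textsc{Yes} case lands at exactly $r - \delta$.

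The substance is the \textsc{No} case: when every clique of $G$ has size at most $k-1$, I must show that \emph{every} product strategy $(p, q)$ leaves the adversary a best response of value at least $1/(k-1) = r + \delta$. The argument has two parts. First, because $M$ is large, any $(p, q)$ putting nonnegligible mass on non-edges already yields a large value, so it suffices to consider strategies whose joint support essentially avoids non-edges. Second, for such strategies one argues that the effective joint support behaves like a clique, hence has size at most $k-1$, forcing some vertex probe to read at least $1/(k-1)$. The main obstacle—and the delicate heart of these team-minmax hardness proofs—is precisely this second step: since the team plays \emph{uncorrelated} strategies, the two players could try to cheat by spreading over \emph{different} vertex sets $C_1 \neq C_2$ whose cross-pairs happen to be edges even though neither $C_i$ is itself a clique, thereby spreading farther than a genuine clique would allow. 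Ruling this out requires additional gadgetry (for instance a ``diagonal'' reward penalizing disagreement between the two players, pinning $p \approx q$ and thereby collapsing the bipartite-completeness condition to genuine cliqueness), together with a careful quantitative accounting to certify the bound at the exact scale $1/(k-1)$ rather than something weaker.

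Finally, I would emphasize why the uncorrelated (product) structure of the team is essential to the hardness: were the two members allowed to correlate, the minimization would range over a single joint distribution $\mu$ and $\min_\mu \max_{\text{probe}}(\cdot)$ would be a linear program, solvable in polynomial time and returning a fractional (clique-cover-type) relaxation rather than the integral clique number. It is exactly the nonconvexity of the product-of-simplices feasible region that encodes the combinatorial hardness, and the $\Theta(1/m^2)$ scale of $\delta$ is inherited from the fact that a spread-out team player assigns each vertex probability $\Theta(1/m)$, so the difference between spreading over $k$ versus $k-1$ vertices registers at scale $1/(k(k-1)) = \Theta(1/m^2)$.
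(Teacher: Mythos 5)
First, a framing note: the paper does not prove this statement at all --- it is imported verbatim as a black box from \citet{Borgs08:Myth}, whose proof reduces from graph $3$-coloring, not from \textsc{Clique}. So your proposal has to stand entirely on its own, and it has two genuine gaps. The first is the one you name yourself but do not repair: nothing in your construction prevents the two team members from spreading over \emph{different} sets $C_1 \neq C_2$ all of whose cross-pairs are edges. This is a real failure mode, not a technicality --- in a complete bipartite graph $K_{t,t}$ (clique number $2$) the product strategy placing the two players uniformly on the two sides makes every non-edge catch pay $0$ and every vertex probe pay $1/t$ --- and your proposed ``diagonal'' gadget cannot work at any scale. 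Under the \emph{honest} strategy (both players uniform on the same $k$-clique) the two players disagree with probability $1 - 1/k$, so any adversary action rewarding disagreement is essentially a best response in the \textsc{Yes} case and destroys it; dually, agreement-based quantities are inherently bilinear ($\sum_v p_v q_v = 1/k$ under honest play), so they cannot pin $p \approx q$ at the linear scale on which your probes operate. This is precisely why the cited proof works with colorings: there, team inconsistency (``same vertex, different colors'') is an event local to a single vertex, detectable by a single adversary action at the same probability scale as every other event in the game.

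The second gap is quantitative and unacknowledged. Your probes are linear in the team's probabilities (about $1/s$ against a uniform distribution on $s$ vertices) while your catches are bilinear (about $M/s^2$), and this scale mismatch forces $M$ to grow polynomially, which destroys the claimed gap once payoffs are normalized; additive inapproximability is only meaningful relative to a bounded payoff range (otherwise any gap can be scaled up for free --- the games of \citet{Borgs08:Myth} are win--lose). Concretely, in the \textsc{No} case consider the team strategy $p = q$ putting mass $1-\gamma$ uniformly on a $(k-1)$-clique $C$ and mass $\gamma$ uniformly on $V \setminus C$. Every non-edge has an endpoint outside $C$, so the best catch pays at most $2M\gamma/\bigl((k-1)(m-k+1)\bigr)$, while the best probe pays $\max\bigl\{(1-\gamma)/(k-1),\, \gamma/(m-k+1)\bigr\}$; choosing $\gamma \approx (m-k+1)/(2M)$ drives the adversary's value to roughly $\frac{1}{k-1}\bigl(1 - \frac{m-k+1}{2M}\bigr) < \frac{1}{k-1}$. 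Hence your claimed \textsc{No}-case bound of $1/(k-1)$ is false for \emph{every} finite $M$, and retaining any positive \textsc{Yes}/\textsc{No} gap at all forces $2M > (k-1)(m-k+1) = \Omega(m^2)$ in the padded regime where $k$ and $m-k$ are both $\Theta(m)$. After rescaling payoffs to $[0,1]$, the gap collapses to $O(1/m^4)$ rather than $\Omega(1/m^2)$ (note also that your adversary has $\Theta(m^2)$ actions, so even the ``$m$ actions per player'' bookkeeping needs care). Repairing the mismatch by making probes quadratic --- paying only when \emph{both} players land on $v$ --- restores comparability of scales but makes the different-support cheat fatal, since disjoint supports then zero out every probe. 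The known proof avoids both problems simultaneously: all payoffs are $0/1$, and all relevant events, probes and catches alike, occur at the same $\Theta(1/m^2)$ scale.
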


We are now ready to prove \Cref{th:verify}. Given a three-player team game and a purported value $r$ for the team, construct the following identical-interest game: add an action $\bot$ for each player. Utilities are defined as follows.
\begin{enumerate}
    \item If no one plays $\bot$, everyone gets the value that the {\em adversary} would have gotten in the team game. 
    \item Otherwise, if the adversary and {\em exactly} one team member plays $\bot$, then the utility is $r - \delta/2$.
    \item Otherwise, the utility is $r$.
\end{enumerate}We now prove both directions of the reduction.
\begin{lemma}
    If the TME value is $\le r - \delta$, then $\vx^\bot := (\bot, \bot, \bot)$ is a perfect equilibrium.
\end{lemma}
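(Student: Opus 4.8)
The plan is to exhibit, for every small $\eps>0$, an $\eps$-perfect equilibrium $\vx^{(\eps)}$ of the constructed identical-interest game that converges to $\vx^\bot=(\bot,\bot,\bot)$ as $\eps\to 0^+$; by \Cref{def:PE} this certifies that $\vx^\bot$ is a perfect equilibrium. Recall from \Cref{def:eps-perfect} that an $\eps$-perfect profile need only be fully mixed and assign probability at most $\eps$ to every action that is strictly suboptimal against the opponents' play. Hence it suffices to produce fully mixed trembles converging to $\vx^\bot$ for which $\bot$ is a (weak) best response of every player, placing each real action (inherited from the team game) at probability at most $\eps$. Concretely, I would let each of the two team members (players $1,2$) spread its off-$\bot$ mass according to a team-minimax optimal strategy $(\vp,\vq)$, which by the hypothesis $\mathrm{TME}\le r-\delta$ holds the adversary's value to at most $r-\delta$ against every pure response; the adversary (player $3$) may spread its off-$\bot$ mass arbitrarily (say uniformly). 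Writing $\gamma_i\to 0^+$ for the total probability player $i$ puts on real actions, I then verify that $\bot$ is a best response for each player.

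For the two team members the argument is first-order and does not use the TME hypothesis. If player $1$ switches from $\bot$ to any real action while the others tremble near $\bot$, then with probability $\approx 1$ exactly one team member is at $\bot$ and the adversary is at $\bot$, so the common payoff drops from $r$ (the all-$\bot$ value under case~3) to $r-\delta/2$ (case~2). Expanding the expected utilities yields $u_1(\bot,\vx^{(\eps)}_{-1})-u_1(a_1,\vx^{(\eps)}_{-1})=\delta/2-O(\gamma)$, which is strictly positive for small trembles, and symmetrically for player $2$. Thus $\bot$ is the unique best response of each team member, so assigning their real actions probability below $\eps$ is consistent with $\eps$-perfection.

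The crux, and the step where the TME bound enters, is the adversary's best response. When player $3$ compares $\bot$ against a real action $a_3$, the event that a single team member trembles (triggering the case-2 penalty) occurs at first order in $\gamma_1+\gamma_2$; the design of the construction ensures that this penalty is felt \emph{equally} in the utility of $\bot$ and of $a_3$, so the first-order contributions cancel in the difference $u_3(\bot,\vx^{(\eps)}_{-3})-u_3(a_3,\vx^{(\eps)}_{-3})$. What survives is the second-order event in which \emph{both} team members tremble onto real actions: only then does case~1 apply and the adversary collects its team-game value. Carrying out the expansion, I expect to obtain $u_3(\bot,\vx^{(\eps)}_{-3})-u_3(a_3,\vx^{(\eps)}_{-3})=\gamma_1\gamma_2\,\bigl(r-\bar v(a_3)\bigr)+O(\gamma^3)$, where $\bar v(a_3)$ is the expected value the adversary obtains against $(\vp,\vq)$ by playing $a_3$. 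By team-minimax optimality of $(\vp,\vq)$ and the hypothesis, $\bar v(a_3)\le r-\delta<r$ for every $a_3$, so the difference is strictly positive once $\eps$ is small; hence $\bot$ is the adversary's best response as well.

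With $\bot$ a best response for all three players, the real actions are each strictly dominated by $\bot$ at the trembled profile and may legitimately carry probability at most $\eps$, so $\vx^{(\eps)}$ is $\eps$-perfect; as $\eps\to 0^+$ we have $\vx^{(\eps)}\to\vx^\bot$, establishing that $\vx^\bot$ is a perfect equilibrium. The main obstacle is precisely the adversary's analysis: one must verify that the first-order case-2 penalty cancels between the adversary's safe and real actions, so that the comparison is decided at second order by whether the adversary's value against the team's defensive strategy exceeds $r$. A minor technical point I would handle is keeping the profile fully mixed while the team's real-action weights track $(\vp,\vq)$ (e.g.\ by mixing in a negligible uniform component), which perturbs each $\bar v(a_3)$ by $o(1)$ and preserves the strict inequality $\bar v(a_3)<r$.
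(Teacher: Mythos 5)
Your high-level strategy is the same as the paper's: exhibit an explicit fully mixed tremble converging to $\vx^\bot$, settle the two team members' incentives at zeroth order (a unilateral switch away from $\bot$ triggers the $r-\delta/2$ payoff of case~2 while staying at $\bot$ yields $r$), and invoke the TME hypothesis only for the adversary. The divergence is in the adversary's accounting, and that is where your argument has a genuine gap. Your pivotal claim is that the case-2 penalty incurred when a single team member trembles ``is felt equally in the utility of $\bot$ and of $a_3$,'' so that the first-order terms cancel and the sign of $u_3(\bot,\cdot)-u_3(a_3,\cdot)$ is decided by the second-order quantity $\gamma_1\gamma_2\,(r-\bar v(a_3))$. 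But case~2 of the construction fires only when \emph{the adversary itself plays} $\bot$: if the adversary plays a real action $a_3$ and exactly one team member is at $\bot$, the profile falls into case~3 and pays $r$, with no penalty. For the game as defined, the expansion is
\begin{align*}
u_3(\bot,\vx^{(\eps)}_{-3})-u_3(a_3,\vx^{(\eps)}_{-3}) \;=\; -\bigl[\gamma_1(1-\gamma_2)+\gamma_2(1-\gamma_1)\bigr]\tfrac{\delta}{2} \;+\; \gamma_1\gamma_2\bigl(r-\bar v(a_3)\bigr),
\end{align*}
whose negative term is of order $\gamma_1+\gamma_2$ and dominates the positive term of order $\gamma_1\gamma_2$ as the trembles vanish. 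So the cancellation you rely on does not occur for the stated payoff cases, and the comparison comes out with the wrong sign; ``the first-order contributions cancel'' is precisely the step that needed to be derived from the case structure, and for the literal case structure it is false.

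For what it is worth, your second-order accounting is exactly what one obtains if case~2 is weakened to ``exactly one team member plays $\bot$'' with no condition on the adversary's action, and under that modification both your argument and the converse lemma go through cleanly. You should also be aware that the paper's own proof resolves the adversary's comparison differently---it asserts a \emph{first-order} separation, $u_3(\bot,\cdot)=r\pm O(\eps^2)$ versus $u_3(a_3,\cdot)\le(1-\eps)r+\eps(r-\delta)\pm O(\eps^2)$---which is itself hard to reconcile with independent per-player trembles of magnitude $\eps$ (the event that both team members leave $\bot$ has probability $\Theta(\eps^2)$, not $\eps$, while the event that exactly one does has probability $\Theta(\eps)$). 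In short, your write-up is not a correct proof of the lemma for the game as defined: the central cancellation is asserted rather than checked against the payoff cases, and making the argument rigorous requires either justifying that cancellation or adjusting the construction so that the single-tremble penalty is charged symmetrically across the adversary's actions.
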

\begin{proof}
    Let $\vx^*$ be any strategy profile in which the team plays according to the TME. Consider the trembling sequence 
    \begin{align*}
        \vx^{(\eps)} \defeq (1 - \eps - \eps^2) \cdot \vx^\bot + \eps \cdot \vx^* + \eps^2 \cdot \vf
    \end{align*}
    where $\vf$ is any fully mixed strategy profile. Clearly, $\vx^{(\eps)} \to \vx^\bot$ as $\eps \to 0$. For team members, $\bot$ is a strict best response to $\vx^\bot$ (since any other action has expected utility roughly $r - \delta/2$), so team members are best responding in the limit $\eps \to 0$. For the adversary, the value of playing $\bot$ is $r \pm O(\eps^2)$, and the value of playing any other action is upper-bounded by $(1 - \eps) r + \eps (r - \delta) \pm O(\eps^2) = r - \eps \delta \pm O(\eps^2) < r \pm O(\eps^2)$ for $\eps$ sufficiently small, so $\bot$ is a best response. Thus, $\vx^\bot$ is a perfect equilibrium.
\end{proof}
\begin{lemma}
    If the TME value is $\ge r + \delta$, then $\vx^\bot$ is {\em not} a perfect equilibrium.
\end{lemma}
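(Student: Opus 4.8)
The plan is to prove something strictly stronger than required: that in \emph{every} fully mixed strategy profile the former adversary---call her player $3$---strictly prefers some action $a_3 \neq \bot$ to $\bot$. Once this is established, \Cref{def:eps-perfect} forces $\vx_3(\bot) \le \epsilon$ in every $\epsilon$-perfect equilibrium, so any trembling sequence $\vx^{(\epsilon)}$ has $\vx^{(\epsilon)}_3(\bot) \to 0$ and therefore cannot converge to $\vx^\bot$, which assigns probability $1$ to $\bot$. Hence $\vx^\bot$ is not a perfect equilibrium. Note that this route never uses proximity of $\vx^{(\epsilon)}$ to $\vx^\bot$; the domination of $\bot$ for player $3$ is uniform across the whole perturbed game.

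Concretely, I would fix an arbitrary fully mixed profile $\vx$, write $p_i = \vx_i(\bot)$ for the two team members $i \in \{1,2\}$, let $q := (1-p_1)(1-p_2)$ be the probability that neither team member trembles onto $\bot$, and let $\hat{\vx}_1, \hat{\vx}_2$ denote the team strategies conditioned on not playing $\bot$. Writing $\nu$ for the adversary's expected payoff in the original team game, conditioning on how many team members play $\bot$ yields
\begin{align*}
    u_3(\bot, \vx_{-3}) &= r - \frac{\delta}{2}\left( p_1(1-p_2) + (1-p_1)p_2 \right) \le r, \\
    u_3(a_3, \vx_{-3}) &= (1-q)\, r + q\, \nu(\hat{\vx}_1, \hat{\vx}_2, a_3), \qquad a_3 \neq \bot,
\end{align*}
where the first line uses that all three playing $\bot$, or both team members deviating off $\bot$, fall under rule~3, while exactly one team member playing $\bot$ triggers rule~2; and the second line uses that $a_3 \neq \bot$ precludes rule~2, so the only $\bot$-free event (probability $q$) contributes the rule~1 payoff $\nu$.

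The next step invokes the hypothesis that the TME value is $\ge r + \delta$. Since the team minimax value equals $\min_{\text{team}} \max_{a_3} \nu$, for the \emph{uncorrelated} product strategy $(\hat{\vx}_1, \hat{\vx}_2)$ we must have $\max_{a_3 \neq \bot} \nu(\hat{\vx}_1, \hat{\vx}_2, a_3) \ge r + \delta$. Choosing $a_3$ to attain this maximum gives
\begin{align*}
    \max_{a_3 \neq \bot} u_3(a_3, \vx_{-3}) \ge (1-q)\, r + q\,(r + \delta) = r + q\,\delta.
\end{align*}
Because $\vx$ is fully mixed we have $p_1, p_2 < 1$, hence $q > 0$, so $\max_{a_3 \neq \bot} u_3(a_3, \vx_{-3}) \ge r + q\delta > r \ge u_3(\bot, \vx_{-3})$. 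Thus $\bot$ is strictly worse than its best alternative for player $3$ in every fully mixed profile, which completes the argument.

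The main point to get right is that the strict gap must hold \emph{uniformly} over all fully mixed team strategies, not merely those near $\vx^\bot$; this is exactly what the minimax lower bound $\max_{a_3}\nu \ge r + \delta$ supplies, since the maximum over the adversary's replies is at least the minimax value for \emph{every} team strategy. The rule-2 reward $r - \delta/2$, which in the previous lemma made $\bot$ attractive, here only \emph{decreases} $u_3(\bot, \cdot)$ below $r$ and so cannot rescue $\bot$; it plays no role in this direction. Finally, I would remark that the amplified gap $\delta = \Omega(1/m^2)$ is needed only to derive \NP-hardness from the hardness of approximating the TME value, and is not required for the logical structure of this lemma.
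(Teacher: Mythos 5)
Your proof is correct and follows essentially the same route as the paper's: condition the team's fully mixed strategies on not playing $\bot$, bound the adversary's payoff from $\bot$ by $r$, and use the minimax lower bound to show that some $a_3 \neq \bot$ earns at least $r + q\delta > r$, so $\bot$ is never a best response and hence carries probability at most $\epsilon$ in any $\epsilon$-perfect equilibrium. Your write-up merely makes the conditioning computation and the case analysis of the payoff rules more explicit than the paper does.
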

\begin{proof}
    Consider any fully mixed strategy profile $\vx$. Let $\vy$ be the team's strategy in $\vx$, conditioned on playing actions other than $\bot$. We claim that the adversary's best response to $\vx$ cannot be $\bot$, which would complete the proof since then, the adversary cannot play $\bot$ with probability larger than $\eps$ in any $\eps$-PE. Indeed, the expected value of action $\bot$ for the adversary is at most $r$, but best-responding to $\vy$ gets expected value strictly larger than $r$, since it gets $r$ when at least one team member plays $\bot$, and at least $r + \delta$ otherwise.
\end{proof}
\Cref{th:verify} now follows immediately from the previous two lemmas.
\section{Computing pure perfect equilibria}
\label{sec:nfpe}

Notwithstanding~\Cref{th:verify}, this section shows that computing pure perfect equilibria is $\PLS$-complete even in general polytope games. We begin by examining the simple case of explicitly represented normal-form games (\Cref{sec:constantnum}), where we think of the number of players as being a constant. In \Cref{sec:basicPLS} we then deal with concise potential games, and~\Cref{sec:polytope} generalizes our approach to any polytope game.

\subsection{Constant number of players}
\label{sec:constantnum}

To begin with, we deal with perfect equilibria in potential games with a constant number of players. Perhaps the most natural approach is to identify any strategy corresponding to a maximum entry in the payoff tensor. However, this approach falls short: the example of~\Cref{fig:maxpayoff} shows a game in which $3$ different action profiles maximize the utility, but only one of them is a perfect equilibrium: if $\vx_1 \in \Delta(2) \defeq \{ \vx_1 \in \R^2_{\geq 0} : \vx_1(\textsf{R1}) + \vx_1(\textsf{R2}) = 1 \} $ and $\vx_2 \in \Delta(2)$ are fully mixed, then $u_1(\textsf{R1}, \vx_2) > u_1(\textsf{R2}, \vx_2)$ and $u_2(\textsf{C1}, \vx_1) > u_2(\textsf{C2}, \vx_2)$, so $\vx_1(\textsf{R2}) \leq \epsilon$ and $\vx_2(\textsf{C2}) \leq \epsilon$ in the perturbed game. This means that $(\textsf{R1}, \textsf{C1})$ is the unique perfect equilibrium.

\begin{figure}[!h]
\centering
\begin{tabular}{c c c}
& \textsf{C1} & \textsf{C2} \\
\textsf{R1} & 1, 1 & 1, 1 \\
\textsf{R2} & 1, 1 & 0, 0 \\
\end{tabular}
\caption{A $2 \times 2$ identical-interest game in normal form with a unique perfect equilibrium, namely (\textsf{R1}, \textsf{C1}).}
\label{fig:maxpayoff}
\end{figure}

Nevertheless, there is a simple algorithm that solves this problem, which is to compute the maximum entry of the potential function \emph{in the perturbed game} in the limit where $\epsilon$ is arbitrarily small. In the example of~\Cref{fig:maxpayoff}, we have $(\textsf{R1}, \textsf{C1}) \mapsto 1 - \epsilon^2$; $(\textsf{R1}, \textsf{C2}) \mapsto 1 - \epsilon ( 1 - \epsilon)$; $(\textsf{R2}, \textsf{C1}) \mapsto 1 - \epsilon ( 1 - \epsilon)$; $(\textsf{R2}, \textsf{C2}) \mapsto 1 - (1 - \epsilon)^2$. As a result, $(\textsf{R1}, \textsf{C1})$ is to be returned. It is worth pointing out that this approach contains a simple but useful idea: one can transfer the perturbation from the strategy sets into the utilities. We will also make use of this transformation later on.

\begin{proposition}
    \label{prop:constantplayers}
    For any potential game with a constant number of players, there is a polynomial-time algorithm for computing a perfect equilibrium.
\end{proposition}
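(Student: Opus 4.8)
The plan is to transfer the perturbation from the strategy sets into the utilities and then optimize the potential. Recall (as noted immediately after \Cref{def:perfect-perturbed}) that every Nash equilibrium of the perturbed game $\Gamma^{(\epsilon)}$ is an $\epsilon$-perfect equilibrium of $\Gamma$; so it suffices to produce, for all sufficiently small $\epsilon$, a Nash equilibrium of $\Gamma^{(\epsilon)}$ that converges to a pure profile as $\epsilon \to 0^+$. The crucial structural fact I would use is that the multilinear extension $\Phi(\vx) = \sum_{a \in \cA} \Phi(a)\prod_i \vx_i(a_i)$ of the potential is itself a potential for the mixed game, so that a unilateral change in any single player's mixed strategy changes $\Phi$ by exactly the change in that player's expected utility (immediate from \Cref{def:pot} and multilinearity). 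Consequently, any global maximizer $\vx^*$ of $\Phi$ over the perturbed strategy space $\cX^{(\epsilon)}$ is a Nash equilibrium of $\Gamma^{(\epsilon)}$: at such $\vx^*$ no player can raise $\Phi$ by a unilateral deviation inside $\cX_i^{(\epsilon)}$, hence none can raise their utility. The worked example in the text, in which $(\textsf{R1},\textsf{C1}) \mapsto 1-\epsilon^2$ is the largest value, is precisely this maximization.

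Next I would reduce the maximization to an enumeration over combinatorial objects. Since $\Phi$ is multilinear and $\cX^{(\epsilon)} = \prod_i \cX_i^{(\epsilon)}$ is a product of polytopes, its maximum is attained at a vertex of the product; the vertices of $\cX_i^{(\epsilon)} = \{\vx_i \in \Delta(\cA_i): \vx_i(a_i) \geq \epsilon\ \forall a_i\}$ are precisely the $m_i$ \emph{$\epsilon$-pure} strategies that put mass $\epsilon$ on every action but one. Thus a maximizer can always be taken to be the $\epsilon$-pure profile $\vx^{(\epsilon)}(a)$ concentrated on some pure profile $a \in \cA$. Because $n$ is constant, there are only $\prod_i m_i = \poly$ such profiles, so I would iterate over all of them. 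For each $a$, the value $g_a(\epsilon) := \Phi(\vx^{(\epsilon)}(a))$ is a polynomial in $\epsilon$ of degree at most $n$, computable in polynomial time by evaluating the potential circuit on the vector $\vx^{(\epsilon)}(a) \in \Q[\epsilon]^M$ via \Cref{lemma:interpolation}. Finally, I would select the profile $a^\star$ that is maximal in the $\lesseps$ order (that is, $g_a \lesseps g_{a^\star}$ for every other $a$), comparing the polynomials in polynomial time through the integer encoding $\psi$ of \Cref{lem:integer potential}.

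It then remains to argue that $a^\star$ is a perfect equilibrium. The relation $\lesseps$ is a total order on polynomials, determined by the lowest-degree coefficient at which they differ, so there is a single threshold $\epsilon_0 > 0$ (a minimum over the finitely many pairs of profiles) below which $g_{a^\star}(\epsilon) \geq g_a(\epsilon)$ for every $a$ simultaneously. Hence for all $\epsilon \in (0, \epsilon_0)$ the $\epsilon$-pure profile $\vx^{(\epsilon)}(a^\star)$ is a global maximizer of $\Phi$ over $\cX^{(\epsilon)}$, and therefore a Nash equilibrium of $\Gamma^{(\epsilon)}$, i.e.\ an $\epsilon$-perfect equilibrium of $\Gamma$. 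Since $\vx^{(\epsilon)}(a^\star) \to a^\star$ as $\epsilon \to 0^+$, this exhibits the required trembling sequence, so $a^\star$ is a perfect equilibrium by \Cref{def:PE}. The entire procedure runs in polynomial time: polynomially many profiles (constant $n$), one polynomial-time evaluation per profile, and polynomial-time polynomial comparisons.

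The step I expect to be the main obstacle is the passage from ``for each fixed small $\epsilon$, some vertex maximizes $\Phi$'' to ``one fixed combinatorial profile $a^\star$ maximizes $\Phi$ for all small $\epsilon$.'' This is where the symbolic machinery ($\lesseps$ and $\psi$) is essential: a priori the maximizing vertex could vary with $\epsilon$, and it is only because each vertex value $g_a$ is a low-degree polynomial---so the finitely many values admit a common comparison threshold---that a single $a^\star$ works uniformly and yields a bona fide limit point. A secondary point requiring care is verifying that the perturbed game genuinely inherits the potential property from the multilinear extension, which is what licenses identifying maximizers of $\Phi$ with Nash equilibria of $\Gamma^{(\epsilon)}$.
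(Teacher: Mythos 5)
Your proposal is correct and follows essentially the same route as the paper: enumerate all pure profiles, evaluate the potential at the corresponding $\epsilon$-pure strategy profile as a polynomial in $\epsilon$, and return the profile whose polynomial is maximal under the lexicographic order $\lesseps$. The paper leaves the final step (``it is easy to see that the corresponding action profile is a pure perfect equilibrium'') as an assertion, and your write-up supplies exactly the missing justification—vertex attainment by multilinearity, maximizers of the potential being Nash equilibria of $\Gamma^{(\epsilon)}$, and the uniform threshold $\epsilon_0$ making one profile work for all small $\epsilon$.
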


For every joint action profile $(a_1, \dots, a_n)$, we compute the expected potential as a polynomial in $\epsilon$ when each player $i \in [n]$ plays $a_i$ with probability $1 - (m_i - 1) \epsilon$ and every other action with probability $\epsilon$. Each such polynomial can be written explicitly in polynomial time since the game is assumed to have a constant number of players. We then compute the largest such polynomial, where comparison of polynomials is done lexicographically. It is easy to see that the corresponding action profile is a pure perfect equilibrium of the game. (An analogous approach succeeds even under the stronger notion of proper equilibria, introduced in~\Cref{def:proper}.)

Computing equilibrium refinements becomes more interesting in a concise potential games, which is the subject of the rest of this section.

\subsection{Algorithmic approach and complexity implications}
\label{sec:basicPLS}
We consider best-response dynamics on the perturbed game, $\Gamma^{(\epsilon)}$, in which $\epsilon$ is sufficiently small. At each timestep, we pick an arbitrary player, $i \in [n]$ and let them best-respond to the current strategies of the other players. The player will change their strategy if there exists $a_i \in \cA_i$ such that
\[u_i(\vec{a}_i^{(\epsilon)},\vx_{-i}) > u_i(\vx_i,\vx_{-i}), \]
where with $\vec{a}_i^{(\epsilon)}$ we denote the $\epsilon$-pure strategy that assigns maximum probability mass to action $a_i$. 


 In~\Cref{alg:delta-br} we present the symbolic best-response dynamics and in~\Cref{thm:br-converges-to-perfect} we prove that they converge to a Nash equilibrium of the perturbed game. 


\begin{algorithm}[h]
    \caption{$\epsilon$-symbolic best-response dynamics}\label{alg:delta-br}
    \DontPrintSemicolon
    \SetAlgoLined

    \textbf{Input:} $\eps$-perfect or $\eps$-proper perturbed game with strategy space $\cX^{(\eps)}$
    
    Initialize for all $i \in [n], \; \vx_i^{(\epsilon)} \in \cX_i^{(\epsilon)}$\;
    
    \While{$\exists \; i \in [n], i' \in [m_i]  \; \text{such that} \; u_i(\vx_i^{(\epsilon)}, \vx_{-i}^{(\epsilon)}) \lesseps u_i(\va_{i'}^{(\epsilon)}, \vx_{-i}^{(\epsilon)})$}{\label{line:while}
        Set $\vx_i^{(\epsilon)} = \va_{i'}^{(\epsilon)}$ \label{line:br}
    }
    
    \KwRet $\vx^{(\epsilon)}$\;
\end{algorithm}

\begin{theorem}\label{thm:br-converges-to-perfect}
    \Cref{alg:delta-br} returns a perfect equilibrium of $\Gamma$ in finite time, and each iteration can be implemented in polynomial time in the size of the input.
\end{theorem}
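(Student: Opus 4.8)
The plan is to verify the three claims in turn: (i) each iteration runs in polynomial time, (ii) the dynamics terminate after finitely many iterations, and (iii) the returned profile converges, as $\eps \to 0^+$, to a perfect equilibrium of $\Gamma$. The single structural observation that drives all three is that, since $u_i(\cdot, \vx_{-i})$ is \emph{linear} in player $i$'s own strategy, the maximizer in line~\ref{line:br} is attained at a vertex of the polytope $\cX_i^{(\eps)}$; the vertices of $\cX_i^{(\eps)} = \{\vx_i \in \cX_i : \vx_i(a_i) \ge \eps\ \forall a_i\}$ are exactly the $\eps$-pure strategies $\vec{a}_i^{(\eps)}$, which place mass $1 - (m_i-1)\eps$ on one action and $\eps$ on every other. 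Thus throughout the run every updated player plays an $\eps$-pure strategy, and each coordinate of the current profile is a degree-$1$ polynomial in $\eps$.

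For the per-iteration cost, I would evaluate the $m_i$ candidate utilities $u_i(\vec{a}_i^{(\eps)}, \vx_{-i}^{(\eps)})$ symbolically. Since the entries of $\vx_{-i}^{(\eps)}$ are low-degree polynomials in $\eps$, \Cref{lemma:interpolation} computes each such utility as a polynomial in $\Q[\eps]$ in polynomial time, and the lexicographic comparison $\lesseps$ needed both to test the loop guard and to select the best response is carried out in polynomial time by passing through the integer-valued order-preserving map $\psi$ of \Cref{lem:integer potential}. (In the concise-potential representation one evaluates the given circuit $\Phi$ rather than $u_i$; best-responding on $u_i$ and on $\Phi$ coincide, as the two differ by an $a_i$-independent term.) Checking the guard is one such pass over all players and actions, and executing line~\ref{line:br} returns the $\lesseps$-largest of the $m_i$ candidates, breaking ties by index so as to stay at a vertex; both are polynomial.

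Termination rests on the potential. Extending $\Phi$ multilinearly to mixed profiles, the defining identity of a potential game (\Cref{def:pot}) yields, for every unilateral deviation, $\Phi(\vx_i', \vx_{-i}) - \Phi(\vx_i, \vx_{-i}) = u_i(\vx_i', \vx_{-i}) - u_i(\vx_i, \vx_{-i})$, the $a_i$-independent part of $\Phi - u_i$ cancelling because probabilities sum to $1$. Hence every accepted update strictly increases the expected potential $\Phi^{(\eps)}(\vx^{(\eps)})$ in the $\lesseps$ order. Viewing $\Phi^{(\eps)}$ as a polynomial in $\eps$ of polynomially bounded representation size and applying $\psi$ from \Cref{lem:integer potential}, the dynamics induce a strictly increasing sequence of non-negative integers drawn from a bounded range; therefore the loop halts after finitely many steps.

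Finally, for correctness, I would observe that only finitely many $\lesseps$-comparisons occur over the whole execution, so there is a single threshold $\eps_0 > 0$ below which every symbolic comparison agrees with the numerical one. Consequently, for each fixed $\eps \in (0, \eps_0)$ the trajectory coincides with ordinary best-response dynamics in $\Gamma^{(\eps)}$, and the terminal profile $\vx^{(\eps)}$, having no improving deviation, is a Nash equilibrium of $\Gamma^{(\eps)}$ and hence (by the remark following \Cref{def:perfect-perturbed}) an $\eps$-perfect equilibrium of $\Gamma$. Since $\vx^{(\eps)}$ is $\eps$-pure, it converges as $\eps \to 0^+$ to the pure profile assigning probability $1$ to each player's designated action, which by \Cref{def:PE} is a pure perfect equilibrium. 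I expect the main obstacle to be precisely this last bridge: one must argue that the purely symbolic dynamics faithfully track genuine perturbed dynamics over an entire interval of $\eps$ (the uniform $\eps_0$), since this is what upgrades the formal output into an honest limit of $\eps$-perfect equilibria; the potential-monotonicity and interpolation/comparison steps are then comparatively mechanical given \Cref{lemma:interpolation,lem:integer potential}.
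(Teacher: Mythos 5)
Your proposal is correct and follows essentially the same route as the paper's proof: maintain the invariant that all iterates are $\eps$-pure (vertices of $\cX_i^{(\eps)}$), use \Cref{lemma:interpolation} to compute utilities as polynomials in $\eps$ and compare them lexicographically, argue termination via the strict $\lesseps$-increase of the symbolic potential over finitely many values, and pass to a uniform $\eps_0$ so that the terminal profile is a Nash equilibrium of $\Gamma^{(\eps)}$ (hence an $\eps$-perfect equilibrium) for all $\eps \in (0,\eps_0)$, whose limit is a perfect equilibrium. The extra details you supply (the vertex/linearity observation, routing termination through $\psi$ from \Cref{lem:integer potential}, and the potential-vs-utility cancellation) are consistent with, and slightly more explicit than, the paper's argument.
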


\begin{proof}
    We first argue about the per-iteration complexity. \Cref{alg:delta-br} maintains the invariance that, for all $i \in [n]$ and $a_i \in \cA_i$, $\vx_i(a_i) = \epsilon$ or $\vx_i(a_i) = 1 - (m_i-1) \epsilon$. Thus, for concise games, we can determine in polynomial time the polynomials $p(\epsilon) \defeq u_i(\vx_{i}, \vx_{-i})$ and $q(\epsilon) \defeq u_i(\vec{a}_i^{(\epsilon)}, \vx_{-i})$ in terms of $\epsilon$ by \Cref{lem:polynomial circuit}. Both of those polynomials have degree at most $n$. Let $p(\epsilon) = \sum_{i=0}^{n} p_i \epsilon^i$ and $q(\epsilon) = \sum_{i=0}^n q_i \epsilon^i$. To ascertain whether $p(\epsilon) \lesseps q(\epsilon)$, we determine whether $p_i < q_i$ for some $i \in [n] \cup \{0\}$ and $p_{i'} = q_{i'}$ for all $i' < i$.

    We now turn to the proof of convergence. For an $\eps$-pure strategy profile $\vx^{(\epsilon)} \in \cX^{(\eps)}$, the potential function value is a polynomial $p(\eps)$. This potential function increases (with respect to the ordering $\lesseps$) after every best-response step and it can only take on finitely many values, so the algorithm must eventually stop at some profile $\vx^{(\epsilon)}$. At this point, no player has a symbolic best response, so $\vx^{(\eps)}$ is a Nash equilibrium of the perturbed game in the sense that $u_i(\tilde{\vx}_i^{(\eps)}, \vx_{-i}^{(\eps)}) \lesseps u_i(\vx^{(\eps)}) $ for all $i \in [n]$ and $\tilde{\vx}_i^{(\eps)} \in \cX_i^{(\eps)}$. Therefore, there exists $\epsilon_0 > 0$ small enough such that $u_i(\tilde{\vx}^{(\eps)}) < u_i(\vx^{(\eps)})$ for all $\epsilon \in (0, \epsilon_0)$. Therefore, $\{\vx^{(\eps)}\}_{\eps \to 0+}$ defines a sequence of an $\eps$-perfect equilibria of $\Gamma$, the limit point of which is a perfect equilibrium.

\end{proof}



From a complexity standpoint, we prove $\PLS$-membership for the problem of finding perfect equilibria in potential games.



\begin{theorem}\label{thm:perfect-pls}
    Finding a perfect equilibrium of a concise potential game is in \PLS.
\end{theorem}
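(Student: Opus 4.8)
The plan is to package the symbolic best-response dynamics of \Cref{alg:delta-br} as an instance of \textsc{LocalOpt} (\Cref{def:pls}), which gives the desired \PLS\ reduction directly. Given a concise potential game with potential circuit $\Phi$, I take the ground set $\cA$ of the \textsc{LocalOpt} instance to be the set of pure action profiles $\va = (a_1, \dots, a_n) \in \prod_i \cA_i$, each identified with the $\epsilon$-pure profile $\vx^{(\epsilon)}$ that assigns the high probability $1 - (m_i-1)\epsilon$ to $a_i$ and $\epsilon$ to every other action. These are encoded as binary strings of length $k = \sum_i \lceil \log_2 m_i \rceil$, with any non-codeword mapped to a fixed default profile. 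The neighborhood is the single-player deviation neighborhood: two profiles are neighbors when they differ in exactly one player's high-probability action. By \Cref{thm:br-converges-to-perfect}, a profile at which no single-player deviation improves the symbolic potential in the $\lesseps$ order is a Nash equilibrium of the perturbed game and hence yields a perfect equilibrium of $\Gamma$ in the limit $\epsilon \to 0^+$, so a local optimum of this instance is exactly the object we seek.

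The value circuit $V$ is where the symbolic nature of the objective must be reconciled with the integer-valued template of \textsc{LocalOpt}. On input $\va$, I compute the symbolic potential $p(\epsilon) \defeq \Phi(\vx^{(\epsilon)})$ of the associated $\epsilon$-pure profile. By \Cref{lem:polynomial circuit}, $p$ is a polynomial in $\epsilon$ of degree at most $n$ recoverable in polynomial time via interpolation, and its coefficients have polynomially bounded size by \Cref{lemma:efficient-eval}. I then set $V(\va) \defeq \psi(p)$, where $\psi$ is the order-preserving map of \Cref{lem:integer potential}: it is polynomial-time computable and satisfies $p \lesseps q \iff \psi(p) < \psi(q)$. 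Thus $V$ outputs an integer of polynomially many bits that faithfully encodes the $\lesseps$ ordering on symbolic potentials, giving precisely the totally ordered, Boolean-circuit-computable objective that \textsc{LocalOpt} requires.

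The successor circuit $S$ implements one improving step of the dynamics. A key simplification is that, by the potential-game property (\Cref{def:pot}), the symbolic utility and the symbolic potential differ only by a term independent of the deviating player's strategy, so a unilateral deviation strictly increases player $i$'s symbolic utility if and only if it strictly increases $\Phi$; consequently $S$ needs only the circuit $\Phi$ (which, for concise potential games, is all we are given). On input $\va$, $S$ enumerates all $\sum_i m_i$ single-player deviations, computes each neighbor's symbolic potential, compares against $\va$ under $\lesseps$ (equivalently, under $\psi$), and returns an improving neighbor if one exists and $\va$ itself otherwise. There are polynomially many deviations and each comparison is polynomial-time, so $S$ is polynomial-time computable. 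Since $S$ returns a strictly higher-$V$ neighbor whenever one exists, a solution of \textsc{LocalOpt} (an $\va$ with $V(S(\va)) \le V(\va)$) is exactly a profile admitting no improving deviation, i.e., the stopping condition of \Cref{alg:delta-br}, hence a perfect equilibrium. Because $V$, $S$, and the encoding are polynomial-time computable uniformly from $\Phi$, they can be realized as polynomial-size Boolean circuits constructible in polynomial time from the game, completing the reduction.

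I expect the main obstacle to be precisely the reconciliation performed by $V$: the quantity governing the dynamics is not a number but a polynomial in the symbolic parameter $\epsilon$, ordered by $\lesseps$ rather than by ordinary comparison, whereas \textsc{LocalOpt} demands an integer-valued objective of polynomial bit-length. The bounded degree of the potential (at most $n$) and bounded coefficient size, together with the explicit order-embedding $\psi$ of \Cref{lem:integer potential}, are what make this encoding feasible; everything else---the neighborhood structure, the default handling of non-codewords, and the reliance on $\Phi$ alone to test improving deviations---is routine bookkeeping layered on top of \Cref{thm:br-converges-to-perfect}.
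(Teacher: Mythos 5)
Your proposal is correct and follows essentially the same route as the paper's proof: encode pure profiles as $\eps$-pure symbolic strategies, define $V$ by composing the symbolic potential (computed via \Cref{lem:polynomial circuit}) with the order-embedding $\psi$ of \Cref{lem:integer potential}, and let $S$ be a symbolic better-response step, so that local optima of the \textsc{LocalOpt} instance are exactly the stopping points of \Cref{alg:delta-br}. The extra details you supply (binary encoding of profiles, the utility-vs-potential equivalence for deviations) are implicit in the paper's argument but do not change the approach.
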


\begin{proof}
We construct the two required circuits. The feasible set $\cA$, as suggested by the notation, is the set of pure strategy profiles. A pure strategy profile $a \in \cA$ is interpreted as corresponding to the $\eps$-pure strategy $\vx_a = (\vec{a}_1^{(\epsilon)}, \dots, \vec{a}_n^{(\epsilon)}) \in \cX^{(\eps)}$, where $\eps$ is symbolic. Define the value function $V : \cA \to \N$ by $V(a) = \psi(\Phi(\vx_a))$, where $\psi$ is  the function guaranteed by \Cref{lem:integer potential}. (By \Cref{lem:polynomial circuit}, $V$ is efficiently computable.) Define the step function $S$ as a symbolic better response function: given $a \in \cA$, find a player $i$ and an action $a_i' \in \cA_i$ such that $\Phi(\vx_{(a_i', a_{-i})}) \gteps \Phi(\vx_{a})$ and output $(a_i', a_{-i})$; if no such $a_i'$ exists, output $a$ itself. Since every local optimum of $V$ is by construction also a local optimum of $a \mapsto \Phi(\vx_a)$ (with respect to $\lesseps$), it must also be a perfect equilibrium.
\end{proof}

\subsection{Perfect equilibria of polytope games}\label{sec:polytope}

Next, we generalize the results of the previous section beyond normal-form games, to {\em polytope games}. In this setting, the pure action sets $\cA_i$ are points in $\R^d$, and the polytope $\cX_i$ is the convex hull of $\cA_i$. The utility functions $u_i : \cX \to \R$ are, as usual, assumed to be given as multilinear circuits. We will assume that $\cX_i$ is full-dimensional, and indeed we will assume that for each player $i$ we are given a point $\vo_i$ that lies on the strict interior of $\cX$.  

Our assumptions on the set $\cX_i$ will be very general. Namely, we will only assume that we are given a best response (linear optimization) oracle for each player $i$, that is, an efficient Turing machine that, given a rational vector $\vu \in \R^d$, outputs a rational point $\vx_i \in \cA_i$ that maximizes the inner product $\ip{\vu, \vx_i}$. For polytopes, such an oracle can be implemented using typical oracles for convex sets such as a separation or membership oracle~\cite{Grotschel93:Geometric}. We will also assume that the pure action sets $\cA_i$ are ``nice'', in particular, that they have at most exponentially many vertices, each having polynomially bounded representation size. 

Polytope games concisely describe many general and important classes of games, including Bayesian games (where $\cX_i$ is a product of simplices), network congestion games (where $\cX_i$ is the set of flows on a graph), and extensive-form games (where $\cX_i$ is the sequence-form polytope, which we will elaborate on in \Cref{sec:efg}). 

We will be interested in computing perfect equilibria of polytope games. In this section, we will interpret the polytope game as merely a concise representation of an underlying normal-form game in which the action sets are the $\cA_i$s, so we will be interested in perfect equilibria of the underlying normal-form game. For extensive-form games in particular, there are several alternative ways of defining concepts related to perfect equilibria; we elaborate on those in \Cref{sec:efg} as well. 

We now redo the definitions of \Cref{sec:nfpe} in this more general setting.
\begin{definition}[$\eps$-perturbed game for polytope games]
    \label{def:pert-polytope}
    Let $\eps > 0$. The $\eps$-perturbed strategy set $\cX_i^{(\eps)}$ of player $\cX_i$ is defined as
    \begin{align*}
        \cX_i^{(\eps)} := \{ (1 - \eps) \vx_i + \eps \vo_i : \vx_i \in \cX_i\}.
    \end{align*}
    As before, we define $\cX^{(\epsilon)} = \prod_{i=1}^{n} \cX_i^{(\epsilon)}$ and the perturbed game $\Gamma^{(\epsilon)} =(\cX_1^{(\epsilon)}, \ldots, \cX_n^{(\epsilon)}, u_1, \ldots, u_n)$. 
\end{definition}
With these definitions, the remainder of the analysis from the normal-form section follows analogously, and we will not repeat it. We arrive at the following results, whose proofs follow analogously from the corresponding proofs from \Cref{sec:nfpe}:
\begin{theorem}
    \Cref{alg:delta-br} returns a perfect equilibrium in finite time, and each iteration can be implemented in polynomial time in the size of the input.
\end{theorem}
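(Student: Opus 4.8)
The plan is to follow the proof of \Cref{thm:br-converges-to-perfect} almost verbatim, isolating the one place where the polytope setting genuinely differs: the best-response step in Line~\ref{line:br}, which must now be carried out \emph{symbolically} in $\eps$ using only the linear-optimization oracle postulated for each $\cA_i$. The convergence argument and the correspondence to perfect equilibria are inherited with minimal change, so the bulk of the work is this single step.

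\paragraph{Reducing symbolic optimization to the rational oracle.} Fix a player $i$ and the current profile, in which every other player $j$ plays $\vx_j^{(\eps)} = (1-\eps)\vx_j + \eps\vo_j$, an affine function of $\eps$. Since $u_i$ is multilinear, $u_i(\cdot, \vx_{-i}^{(\eps)})$ is affine in player $i$'s strategy, i.e.\ of the form $\vx_i \mapsto \langle \vc(\eps), \vx_i\rangle + \gamma(\eps)$ with $\vc(\eps) \in \Q[\eps]^d$ of degree at most $n$; maximizing it over $\cX_i^{(\eps)}$ in the $\lesseps$ sense is equivalent to finding the vertex of $\cA_i$ maximizing $\langle \vc(\eps), \cdot\rangle$ as $\eps \to 0^+$. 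The key claim is that this symbolic optimum is returned by a \emph{single} call to the rational best-response oracle evaluated at a suitably small rational $\eps_0$. Indeed, for any two vertices $v, v'$ the quantity $\langle \vc(\eps), v - v'\rangle$ is a univariate polynomial of degree at most $n$ whose coefficients have size bounded a priori by the circuit size and the (polynomial) representation sizes of $v, v'$; its $\lesseps$-sign is the sign of its lowest-order nonzero coefficient, and a standard root-separation (Cauchy) bound furnishes a threshold $\eps_0 > 0$, representable in $\poly$-many bits, below which the numeric sign of every such difference agrees with its $\lesseps$-sign. Hence the vertex maximizing $\langle \vc(\eps_0), \cdot\rangle$ --- obtained by evaluating $\vc(\eps_0) \in \Q^d$ through the circuit and feeding it to the oracle --- is exactly the $\lesseps$-maximizer $\vx_i^\star$, and the updated strategy is $(1-\eps)\vx_i^\star + \eps\vo_i$. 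The while-loop test in Line~\ref{line:while} then reduces to comparing the two polynomials $u_i(\vx_i^{(\eps)}, \vx_{-i}^{(\eps)})$ and $u_i(\tilde{\vx}_i^{(\eps)}, \vx_{-i}^{(\eps)})$ under $\lesseps$, which is polynomial-time by \Cref{lem:polynomial circuit,lem:integer potential}. This establishes the per-iteration bound.

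\paragraph{Convergence and correctness.} The remaining argument is identical to \Cref{thm:br-converges-to-perfect}: the perturbed potential $\eps \mapsto \Phi(\vx^{(\eps)})$ is a polynomial in $\eps$ that strictly increases under $\lesseps$ at every update, and it takes only finitely many values because each maintained profile is a product of perturbed vertices $(1-\eps)v + \eps\vo_i$ and $\cA_i$ has at most exponentially many vertices. Therefore the dynamics halt after finitely many steps at a profile that is a $\lesseps$-Nash equilibrium of $\Gamma^{(\eps)}$; for all sufficiently small $\eps$ this is an $\eps$-perfect equilibrium of the underlying normal-form game, and its limit as $\eps \to 0^+$ is a perfect equilibrium, using that each $\vo_i$ lies in the strict interior of $\cX_i$ so that the perturbed strategies assign positive mass to every action of $\cA_i$.

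\paragraph{Main obstacle.} I expect the crux to be the reduction above: making the symbolic linear optimization go through with only a \emph{rational} oracle, and in particular certifying that a single threshold $\eps_0$ of polynomial bit-length simultaneously resolves the $\lesseps$-ordering of all (possibly exponentially many) vertex pairs. The saving grace is that this ordering is governed entirely by the coefficients of bounded-degree, polynomially-bounded-size difference polynomials, so a uniform Cauchy-type root bound suffices and no vertex enumeration is ever required.
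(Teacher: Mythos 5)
Your proposal is correct and takes essentially the same route as the paper, which proves this theorem by declaring the analysis of \Cref{thm:br-converges-to-perfect} to carry over verbatim once the perturbed sets $\cX_i^{(\eps)}=(1-\eps)\cX_i+\eps\vo_i$ are in place. The one step you elaborate beyond the paper---realizing the symbolic $\lesseps$-best response via a single call to the rational linear-optimization oracle at a numerically small $\eps_0$ of polynomial bit-length, justified by a root-separation bound on the degree-$\le n$ difference polynomials $\langle\vc(\eps),v-v'\rangle$ (valid because the vertices are assumed to have polynomially bounded representation size)---is a sound, concrete instantiation of what the paper leaves implicit and later subsumes under its general symbolic-execution result (\Cref{theorem:stronglypolynomial}).
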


\polytopepls*

Indeed, the same proof shows the following more general result, which we will reference in the remainder of the paper.



\begin{theorem}\label{th:general}
    Let $\{\cX_i^{(\eps)} \}_{i \in [n]}$, where $\eps$ is symbolic, be a collection of strategy sets, one per player. Assume that for every player $i$ there is an efficient algorithm for linear optimization over $\cX_i^{(\eps)}$. Then the problem of computing a symbolic exact Nash equilibrium of a potential game in which each player $i$ has strategy set $\cX_i^{(\eps)}$ is in $\PLS$ for potential games.
\end{theorem}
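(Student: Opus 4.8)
The plan is to mirror the $\PLS$-membership argument of \Cref{theorem:polytope-perfect} (equivalently \Cref{thm:perfect-pls}) almost verbatim, isolating the single place where the concrete form of the strategy sets was used and replacing it by a call to the assumed linear-optimization oracle. Concretely, I would build a \textsc{LocalOpt} instance (\Cref{def:pls}) as follows. The feasible set $\cA$ is the set of \emph{vertex profiles}, where a profile selects one vertex $v_i$ of $\cX_i^{(\eps)}$ per player $i$; since each $\cX_i^{(\eps)}$ has at most exponentially many vertices of polynomially bounded representation size (as polynomials in $\eps$), a profile is encoded by polynomially many bits, so $\cA \subseteq [2^k]$ with $k = \poly$. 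The value function is $V(a) = \psi(\Phi(\vx_a))$, where $\vx_a$ is the (symbolic) vertex profile, $\Phi(\vx_a) \in \Q[\eps]$ is computed via \Cref{lem:polynomial circuit}, and $\psi$ is the order-preserving map of \Cref{lem:integer potential}; this renders $V$ an efficiently computable, integer-valued proxy for $\Phi$ under the $\lesseps$ order. The step function $S$ is a symbolic better response: for each player $i$ in turn, compute the symbolic best response $v_i^\star$ against the current $\vx_{-i}$, and if switching player $i$ to $v_i^\star$ strictly increases $\Phi$ (with respect to $\lesseps$), output the updated profile; if no player can improve, output $a$ unchanged.

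The main obstacle---and the only substantive departure from the normal-form argument, where a best response was simply the largest entry of a utility vector---is computing the symbolic best response using only the linear-optimization oracle. Fixing $\vx_{-i}$, multilinearity makes $\vx_i \mapsto u_i(\vx_i, \vx_{-i})$ a linear functional whose coefficient vector $\vc(\eps) = \sum_j \vc_j \eps^j$ consists of polynomials in $\eps$ of polynomially bounded degree and size (obtained by evaluating the circuit symbolically via \Cref{lem:polynomial circuit}). Maximizing $\langle \vc(\eps), \cdot \rangle$ in the $\lesseps$ order is a \emph{lexicographic} optimization over the coefficient layers $(\langle \vc_0, \cdot\rangle, \langle \vc_1, \cdot\rangle, \dots)$, with the lowest-degree layer most significant. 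I would collapse this to a single linear program by passing the oracle the rational objective $\vc_W := \sum_j W^{-j} \vc_j$ for $W$ a power of two: since all vertex-wise differences $\langle \vc_j, v - v'\rangle$ are rationals of polynomially bounded bit-size, bounded above by $2^{\poly}$ and (when nonzero) below by $2^{-\poly}$, choosing $W = 2^{\poly}$ forces the maximizer of $\langle \vc_W, \cdot\rangle$ to coincide with the lexicographic, hence $\lesseps$-optimal, vertex. Thus a single oracle call with a polynomially sized rational objective returns $v_i^\star$. (Equivalently, one may substitute a sufficiently small rational $\eps_0 = 2^{-\poly}$, below a root bound for all pairwise comparison polynomials, and call the oracle at that numeric value.)

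With $S$ and $V$ in hand, correctness follows exactly as before. Both maps are polynomial-time computable---the poly-time oracle can be inlined---and hence realizable as the Boolean circuits required by \Cref{def:pls}. Because a linear objective over $\cX_i^{(\eps)}$ is maximized at a vertex, a profile at which $S$ cannot improve $\Phi$ admits no profitable symbolic deviation for any player to any point of $\cX_i^{(\eps)}$, i.e., it is a symbolic exact Nash equilibrium of the perturbed game; the potential-game property ensures that $\Phi$-improvement and own-utility improvement coincide, so every solution of the \textsc{LocalOpt} instance (every $a$ with $V(S(a)) \le V(a)$, equivalently $S(a) = a$) yields such an equilibrium via $\vx_a$. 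This places the problem in $\PLS$.
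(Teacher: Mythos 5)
Your proposal is correct and follows essentially the same route as the paper, which proves this result by reusing the construction of \Cref{thm:perfect-pls} verbatim: feasible set of pure/vertex profiles, value function $V = \psi(\Phi(\vx_a))$ via \Cref{lem:polynomial circuit} and \Cref{lem:integer potential}, and a symbolic better-response step function realized through the assumed optimization oracle. Your middle paragraph, which reduces symbolic lexicographic optimization to a single rational-objective LP call via the $W^{-j}$ weighting, is a correct but strictly optional addition---the theorem's hypothesis already grants an efficient \emph{symbolic} linear-optimization oracle over $\cX_i^{(\eps)}$, so the paper simply invokes it rather than implementing it.
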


\section{Mixed perfect equilibria}
\label{sec:mixed}

Having fully characterized the complexity of pure perfect equilibria in potential games, we are concerned in this section with \emph{mixed} perfect equilibria. \Cref{sec:polymatrix} deals with the special case of polymatrix games, while~\Cref{sec:doublyexpo} presents a lower bound in general potential games.

\subsection{Almost implies near: a refinement in polymatrix games}
\label{sec:polymatrix}

We first restrict our attention to polymatrix games, a specific class of succinct games (introduced earlier in~\Cref{def:polymatrix}). For such games, it follows from the result of~\citet{Hansen18:Computational} that computing a perfect equilibrium is in $\PPAD$.\footnote{More precisely, \citet{Hansen18:Computational} showed that computing an $\epsilon$-symbolic proper equilibrium is in $\PPAD$; this implies that computing a perfect equilibrium is also in $\PPAD$ (\Cref{sec:search}).} Combining with the $\PLS$ membership established in~\Cref{thm:perfect-pls}, it follows that computing a perfect equilibrium in polymatrix potential games is in $\CLS$ since $\CLS = \PPAD \cap \PLS$~\citep{Fearnley23:Complexity}.

\begin{corollary}
    \label{cor:CLSmembership}
    Computing a (mixed) perfect equilibrium in polymatrix potential games is in $\CLS$.
\end{corollary}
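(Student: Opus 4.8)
The plan is to derive the corollary as an immediate consequence of three facts that are already available: the $\PLS$ membership for perfect equilibria of concise potential games (\Cref{thm:perfect-pls}), the $\PPAD$ membership for perfect equilibria of polymatrix games (inherited from \citet{Hansen18:Computational}), and the characterization $\CLS = \PPAD \cap \PLS$ due to \citet{Fearnley23:Complexity}. Once the search problem ``compute a perfect equilibrium of a polymatrix potential game'' is shown to lie simultaneously in $\PPAD$ and in $\PLS$, membership in $\CLS$ follows purely by set-theoretic intersection. So the entire content is bookkeeping: confirming that the same search relation sits inside both classes.

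First I would check that a polymatrix potential game is a legitimate concise potential game in the sense of \Cref{def:concise_potential}, so that \Cref{thm:perfect-pls} applies without modification. Each utility $u_i(\vx) = \sum_{j \in N_i} \vx_i^\top \mathbf{P}_{i,j} \vx_j$ is bilinear in the strategies of distinct players, hence multilinear, and is realizable by a multilinear arithmetic circuit whose multiplication gates combine only inputs associated with disjoint player pairs $\{i,j\}$; the potential of a polymatrix potential game inherits the same structure. This secures $\PLS$ membership. For the $\PPAD$ side, the subtlety is that \citet{Hansen18:Computational} prove inclusion for computing an $\epsilon$-\emph{symbolic proper} equilibrium (under the Kohlberg--Mertens perturbation) rather than a perfect equilibrium directly. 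Here I would invoke the reduction already flagged in the paper's footnote and in \Cref{sec:search}: a symbolic proper equilibrium is in particular a perfect equilibrium (proper refines perfect, by \Cref{def:proper}), its limit point as $\epsilon \to 0^+$ can be extracted in polynomial time, and $\PPAD$ is closed under such polynomial-time reductions, so the search problem for perfect equilibria inherits $\PPAD$ membership.

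Finally, since the problem lies in both $\PPAD$ and $\PLS$, it lies in $\PPAD \cap \PLS = \CLS$. The one place warranting care is the nonstandard convention for search classes adopted in \Cref{def:TMnondeter}: I would note that reductions are defined identically regardless of that convention, so the equality $\CLS = \PPAD \cap \PLS$ transfers verbatim and the intersection argument goes through unchanged. I do not expect a genuine obstacle here, as all the mathematical weight resides in the two ingredient theorems already established; the only thing that must be gotten right is verifying that the borrowed $\PPAD$ membership truly concerns perfect equilibria rather than merely the symbolic-proper object, which the limit-point reduction settles.
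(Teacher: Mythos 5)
Your proposal is correct and matches the paper's own argument exactly: it combines the $\PLS$ membership from \Cref{thm:perfect-pls} with the $\PPAD$ membership inherited from \citet{Hansen18:Computational} (handling the symbolic-proper-to-perfect subtlety just as the paper's footnote does) and concludes via $\CLS = \PPAD \cap \PLS$. The extra bookkeeping you include---verifying that polymatrix potential games are concise potential games---is a reasonable elaboration the paper leaves implicit.
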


In terms of hardness results, it is generally believed that computing a Nash equilibrium is $\CLS$-hard; for example, \citet{Hollender25:Complexity} established $\CLS$-hardness but in the presence of multiple independent adversaries; this $\CLS$-hardness result was further strengthened by~\citet{Anagnostides25:Complexity}. In any event, in what follows, we will provide a certain equivalence between Nash and perfect equilibria in polymatrix potential games (\Cref{prop:polymatrix-round}).

The foregoing $\CLS$ membership in~\Cref{cor:CLSmembership} is established rather indirectly: having established $\PLS$ membership in~\Cref{thm:perfect-pls}, we then make use of the existing $\PPAD$ membership together with the fact that $\CLS = \PPAD \cap \PLS$. This begs the question of whether there is a more direct $\CLS$ membership through the use of gradient descent; this is the main subject of this subsection.

\paragraph{Almost implies near in general games} The key step in the analysis is a refinement of the ``almost implies near'' paradigm of~\citet{Anderson86:Almost}. To begin with, we treat general games---without the polymatrix restriction. As observed by~\citet{Etessami14:Complexity}, a direct application of the result of~\citet{Anderson86:Almost} implies that for any game $\Gamma$ and $\delta > 0$, there is an $\epsilon = \epsilon(\delta)$ such that any $\epsilon$-PE of $\Gamma$ is within $\ell_\infty$ distance $\delta$ from an exact PE of $\Gamma$. \citet{Etessami14:Complexity} provided a quantitative bound using techniques from real algebraic geometry~\citep{Basu08:Algorithms}, which is recalled below. (We use the shorthand notation $M = \sum_{i=1}^n m_i$.)

\begin{lemma}[\citep{Etessami14:Complexity}]
    \label{lemma:almost-near}
    For any game $\Gamma$, a sufficiently small $\delta \in \Q_{>0}$, and $\epsilon \leq \delta^{ n^{O(M^3)}}$, any $\epsilon$-PE of $\Gamma$ is within $\ell_\infty$ distance $\delta$ from an exact PE of $\Gamma$.
\end{lemma}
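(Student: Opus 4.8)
The plan is to recast both the set of exact perfect equilibria and the set of $\eps$-perfect equilibria as semialgebraic sets whose defining data---number of variables, number of polynomials, and their degrees---is explicitly controlled by the game size, and then to apply quantitative bounds from real algebraic geometry to a one-variable distance function relating the two. First I would write down the defining first-order formulas. The set $\mathrm{PE}_\eps \subseteq \R^M$ is already quantifier-free: it is cut out by the simplex constraints together with, for each player $i$ and pair of actions $a_i, a_i'$, the implication $u_i(a_i, \vx_{-i}) < u_i(a_i', \vx_{-i}) \Rightarrow \vx_i(a_i) \le \eps$, where each $u_i$ is a multilinear polynomial of degree at most $n$. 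The set $\mathrm{PE}$ of exact perfect equilibria I would describe via its limit definition (\Cref{def:PE}): $\vx^* \in \mathrm{PE}$ if and only if
\[
\forall \eta > 0\ \forall \bar\eps > 0\ \exists \eps \in (0,\bar\eps)\ \exists \vx \in \mathrm{PE}_\eps : \|\vx - \vx^*\|_\infty < \eta.
\]
This is a first-order formula over the reals with a constant number of quantifier blocks, $O(M)$ bound variables, $O(M)$ free variables, and polynomials of degree at most $n$; hence $\mathrm{PE}$ is semialgebraic with explicitly bounded description.

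Next, I would introduce the scalar function $f(\eps) := \max\{\, d_\infty(\vx, \mathrm{PE}) : \vx \in \mathrm{PE}_\eps \,\}$, the largest $\ell_\infty$-distance from a point of $\mathrm{PE}_\eps$ to the set $\mathrm{PE}$. Combining the formulas above shows that the graph $\{(\eps, t) : f(\eps) = t\}$ is semialgebraic, again with explicitly bounded degrees and variable count, so $f$ is a bounded semialgebraic function of the single variable $\eps$ on $(0,1)$. The qualitative claim $\lim_{\eps \to 0^+} f(\eps) = 0$ follows by a standard compactness argument: any sequence $\eps_k \to 0$ with witnesses $\vx^{(\eps_k)} \in \mathrm{PE}_{\eps_k}$ has a convergent subsequence whose limit is, by the very definition of a perfect equilibrium, an element of $\mathrm{PE}$, forcing the distances to zero.

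The quantitative heart is to upgrade $f(\eps) \to 0$ to an explicit rate. Since $f$ is a one-variable bounded semialgebraic function tending to $0$, it admits a Puiseux expansion $f(\eps) = c\,\eps^{\rho} + o(\eps^\rho)$ near $0$ with a positive rational exponent $\rho$, where both $\rho \ge 1/D$ and the leading constant $c$ are controlled by the degree $D$ of a polynomial annihilating the graph of $f$. I would obtain $D$ by running effective quantifier elimination (Basu--Pollack--Roy, as in~\citep{Basu08:Algorithms}) on the formula for $\mathrm{PE}$: eliminating $O(M)$ variables across a constant number of alternation blocks from polynomials of degree $n$ yields a quantifier-free description with degrees bounded by $n^{O(M^3)}$. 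Feeding this degree bound into the Puiseux estimate gives that $f(\eps) \le \delta$ whenever $\eps \le \delta^{\,n^{O(M^3)}}$, which is exactly the claimed relationship.

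The step I expect to be the main obstacle is the exponent bookkeeping through quantifier elimination: one must track how the number of variables ($O(M)$), the number of quantifier blocks, and the polynomial degree ($n$) propagate through the doubly exponential degree bounds of effective quantifier elimination, and then convert the resulting Puiseux exponent into the exponent on $\delta$ without losing the precise $n^{O(M^3)}$ form. A secondary subtlety is ensuring the limit-based definition of $\mathrm{PE}$ is faithfully encoded as a genuine bounded-quantifier first-order formula---so that the real-algebraic machinery applies---rather than as an informal limit; the compactness argument above is what guarantees that the encoded set coincides with the true set of perfect equilibria.
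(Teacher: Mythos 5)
Your proposal is sound and follows essentially the same route the paper relies on: the paper does not reprove this lemma (it is imported verbatim from Etessami--Yannakakis), but its proof of the analogous refinement in Lemma~\ref{lemma:new-almostimpliesnear} uses exactly your strategy---encode the relevant equilibrium sets as first-order formulas over the reals, apply effective quantifier elimination from \citet{Basu08:Algorithms} to reduce to a univariate formula in the perturbation parameter, use a compactness/``almost implies near'' argument for the qualitative limit, and then extract the quantitative rate from degree and bit-complexity bounds on that univariate description. Your phrasing of the last step via a Puiseux expansion of the semialgebraic distance function $f(\eps)$ is an equivalent repackaging of the paper's ``the formula is satisfiable for all $\eps'$ below an explicit threshold'' step, and your flagged concern about exponent bookkeeping (the $n^{O(M^3)}$ arises because the full distance formula stacks roughly three quantifier blocks of $\Theta(M)$ variables each, not just the two in the definition of $\mathrm{PE}$ alone) is precisely where the care is needed.
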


We first point out a refinement of the above lemma that accounts for \emph{$\epsilon$-perfect $\epsilon'$-almost} equilibria of $\Gamma$, which means that $u_i(a_i, \vx_{-i}) < u_i(a_i', \vx_{-i}) - \epsilon' \implies \vx_i(a_i) \leq \epsilon$ for all players $i \in [n]$ and actions $ a_i, a_i' \in \cA_i$; this is the $\epsilon'$-well-supported approximation of $\epsilon$-perfect equilibria, which is meaningful only when $\epsilon' \ll \epsilon$.\footnote{Any Nash equilibrium can be trivially converted into an $\epsilon$-perfect $O_\epsilon(\epsilon)$-almost equilibrium by redistributing the probability mass of each strategy so that each action is allotted a probability of at least $\epsilon$.} We will prove the following property.

\begin{lemma}
    \label{lemma:new-almostimpliesnear}
    For any game $\Gamma$, $\epsilon \in \Q_{> 0}$, $\delta \in \Q_{> 0}$, and $\epsilon' \leq \min(\epsilon, \delta)^{n^{O(M^2)}}$, any $\epsilon$-perfect $\epsilon'$-almost equilibrium of $\Gamma$ is within $\ell_\infty$ distance $\delta$ from an $\epsilon$-PE of $\Gamma$.
\end{lemma}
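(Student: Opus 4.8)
The plan is to recognize the claim as an instance of the quantitative ``almost implies near'' principle applied to the \emph{perturbed} game $\Gamma^{(\epsilon)}$ rather than to $\Gamma$ itself. First I would record the correspondence already implicit in \Cref{def:perfect-perturbed}: a profile is an $\epsilon$-PE of $\Gamma$ exactly when it is an (exact) Nash equilibrium of $\Gamma^{(\epsilon)}$, and, unwinding the contrapositives of the two definitions, a profile is an $\epsilon$-perfect $\epsilon'$-almost equilibrium of $\Gamma$ exactly when it is a $\Theta(\epsilon')$-well-supported Nash equilibrium of $\Gamma^{(\epsilon)}$ (the hidden constant being $1-m_i\epsilon = O(1)$ for small $\epsilon$, which is harmless). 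Thus the lemma is equivalent to the statement that any $\epsilon'$-well-supported Nash equilibrium of $\Gamma^{(\epsilon)}$ lies within $\ell_\infty$ distance $\delta$ of some exact Nash equilibrium of $\Gamma^{(\epsilon)}$; note that the target set is nonempty, since $\Gamma^{(\epsilon)}$ is a finite game on compact convex strategy sets with multilinear payoffs.

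Next I would remove the $\epsilon$-dependence of the domain via the affine reparametrization $\cX_i^{(\epsilon)} \ni \vx_i = (1 - m_i\epsilon)\vy_i + \epsilon\vec{1}$ with $\vy_i \in \cX_i = \Delta(\cA_i)$. This is a bijection onto the standard simplex and turns $\Gamma^{(\epsilon)}$ into an ordinary normal-form game $\tilde{\Gamma}$ with utilities $\tilde{u}_i(\vy) = u_i(\vx(\vy))$ that are again multilinear of degree $n$, but whose coefficients are now polynomials in $\epsilon$ of bit-size $\tau = O(\tau_0 + n\log(1/\epsilon))$, where $\tau_0$ bounds the bit-size of $\Gamma$. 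Since each coordinate map is affine with slope $1-m_i\epsilon \le 1$ and inverse slope $1/(1-m_i\epsilon) = O(1)$, it is bi-Lipschitz and transfers $\ell_\infty$ distances up to constant factors, while exact and approximate equilibria correspond under it. It therefore suffices to prove the standard almost-implies-near statement for $\tilde{\Gamma}$: an $\epsilon'$-well-supported Nash equilibrium is $\ell_\infty$-close to an exact one.

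For this last step I would invoke the semialgebraic machinery of \citet{Basu08:Algorithms} exactly as \citet{Etessami14:Complexity} do in the proof of \Cref{lemma:almost-near}. Writing the set $W$ of exact Nash equilibria of $\tilde{\Gamma}$ through the usual complementarity system---variables $\vy$ together with payoff values $w_i$, the constraints $\vy \in \prod_i \Delta(\cA_i)$, $\tilde{u}_i(a_i,\vy_{-i}) \le w_i$, and $\vy_i(a_i)(w_i - \tilde{u}_i(a_i,\vy_{-i})) = 0$---one obtains a semialgebraic set cut out by $\poly(M)$ polynomials of degree at most $n$ in $O(M)$ variables, the $\epsilon'$-well-supported equilibria being the solutions of the same system with the complementarity conditions relaxed by $\epsilon'$ (so that supported actions need only be within $\epsilon'$ of optimal). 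An effective \L{}ojasiewicz inequality---equivalently, the quantitative distance bounds underlying \Cref{lemma:almost-near}---then yields $\|\vy - \vy^*\|_\infty \le \delta$ for some $\vy^* \in W$ whenever the slack is at most $\delta^{n^{O(M^2)}}$ scaled by a factor $2^{-\tau \cdot n^{O(M)}}$ accounting for the coefficient bit-size. As that factor is at least $\epsilon^{n^{O(M)}}$, the requirement is implied by taking $\epsilon' \le \min(\epsilon,\delta)^{n^{O(M^2)}}$ (absorbing both exponents into $n^{O(M^2)}$); pulling the conclusion back through the bi-Lipschitz map then gives the lemma.

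The main obstacle I anticipate lies precisely in this last bookkeeping: because $\epsilon$ is baked into the coefficients of $\tilde{\Gamma}$ (equivalently, into the constraints of $\Gamma^{(\epsilon)}$), the \L{}ojasiewicz constant degrades as $\epsilon \to 0$, and one must verify that the degradation is at worst $\epsilon$ raised to the power $n^{O(M)}$---this is exactly what forces the appearance of $\min(\epsilon,\delta)$ in place of $\delta$ alone and explains why the bound cannot be uniform in $\epsilon$. A secondary point needing care is that stopping at a fixed perturbation level $\epsilon$, rather than passing to the limit $\epsilon \to 0^+$ as in \Cref{lemma:almost-near}, removes one layer of quantification and is what accounts for the improved exponent $n^{O(M^2)}$ in place of $n^{O(M^3)}$; I would want to confirm the constant in the exponent by tracking the parameters through the real-algebraic bound rather than by analogy alone.
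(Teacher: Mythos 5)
Your overall strategy---encode the statement semialgebraically and extract a quantitative threshold for $\epsilon'$ from the real-algebraic bounds of \citet{Basu08:Algorithms}, exactly as \citet{Etessami14:Complexity} do---is the same one the paper uses, and your closing intuition about why the exponent drops from $n^{O(M^3)}$ to $n^{O(M^2)}$ (one fewer quantified block because $\epsilon$ is fixed rather than sent to $0$) is correct: the paper's sentence $\textsc{PEBound}_{\epsilon,\delta}(\epsilon')$ has exactly two quantifier blocks of size $M$, and the bound follows from Theorem 14.16 of Basu--Pollack--Roy applied to the univariate quantifier-free formula in $\epsilon'$.

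However, your first reduction step contains a genuine gap. You assert that a profile is an $\epsilon$-PE of $\Gamma$ exactly when it is an exact Nash equilibrium of $\Gamma^{(\epsilon)}$, and that an $\epsilon$-perfect $\epsilon'$-almost equilibrium is exactly a $\Theta(\epsilon')$-well-supported Nash equilibrium of $\Gamma^{(\epsilon)}$. Neither equivalence holds: by \Cref{def:eps-perfect} an $\epsilon$-perfect ($\epsilon'$-almost) equilibrium is only required to be \emph{fully mixed}, with sub-optimal actions receiving probability \emph{at most} $\epsilon$---possibly far less than $\epsilon$---so such a profile need not lie in $\cX^{(\epsilon)}$ at all. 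The paper itself only states the one-sided inclusion that every Nash equilibrium of $\Gamma^{(\epsilon)}$ is an $\epsilon$-PE. The enlargement of the \emph{target} set is harmless (being $\delta$-close to a Nash of $\Gamma^{(\epsilon)}$ still certifies $\delta$-closeness to an $\epsilon$-PE), but the enlargement of the \emph{source} set is not: the hypothesized almost-equilibrium may sit outside the domain on which your reparametrized game $\tilde\Gamma$ and its complementarity system are defined. Rounding it into $\cX^{(\epsilon)}$ moves it by up to $\Theta(m\epsilon)$ in $\ell_\infty$, which can vastly exceed $\delta$, and also degrades the approximation parameter; so the conclusion you would obtain is closeness to within $\delta + O(m\epsilon)$, not $\delta$. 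The paper sidesteps this entirely by never passing to $\Gamma^{(\epsilon)}$: it writes the $\epsilon$-perfect conditions directly as the quantifier-free formula $\textsc{AlmostPE}(\vx,\epsilon,\epsilon')$ over the full open simplex, with the disjunct $\vx_i(a_i)\le\epsilon$ appearing explicitly and $\epsilon=2^{-r}$ entering only as a rational constant in the coefficients, and then applies quantifier elimination to the sentence asserting the almost-implies-near implication. To repair your argument you would need to carry out the semialgebraic analysis for the $\epsilon$-perfect conditions on the full simplex rather than for well-supported equilibria of the truncated game---at which point you have essentially reconstructed the paper's proof.
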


\begin{proof}
For a \emph{fixed} $\epsilon > 0$, we define $\textsc{AlmostPE}(\vx, \epsilon, \epsilon')$ to be the quantifier-free first-order formula with free variables $\vx \in \R^M$ and $\epsilon' > 0$ expressed as the conjunction of the following formulas:
\begin{equation}
\tag{\textsc{AlmostPE}$(\vx, \epsilon, \epsilon')$}
\begin{aligned}
\vx_i(a_i) > 0 \quad i \in [n], a_i \in \cA_i, \\
\sum_{a_i \in \cA_i} \vx_i(a_i) = 1 \quad i \in [n], \\
(u_i(a_i, \vx_{-i}) \geq u_i(a_i', \vx_{-i}) - \epsilon') \lor (\vx_{i}(a_i) \leq \epsilon) \quad i \in [n], a_i, a_i' \in \cA_i.
\end{aligned}
\label{eq:almostpe-formula}
\end{equation}
We also define $\textsc{PE}(\vx, \epsilon) \equiv \textsc{AlmostPE}(\vx, \epsilon, 0)$ with free variable $\vx \in \R^M$. What we want to prove is that
\begin{equation*}
    \forall \vx \in \R^M \exists \vx' \in \R^M : ( \epsilon' > 0) \land ( \lnot \textsc{AlmostPE}(\vx, \epsilon, \epsilon') \lor ( \textsc{PE}(\vx') \land \|\vx - \vx'\|^2_2 \leq \delta^2 )).
\end{equation*}
($\|\vx - \vx'\|_2 \leq \delta$ implies $\| \vx - \vx' \|_\infty \leq \delta$.) We denote by $\textsc{PEBound}_{\epsilon, \delta}(\epsilon')$ the above first-order formula, with $\epsilon'$ acting as a free variable. Suppose that each payoff can be represented with at most $\tau$ bits, $\epsilon = 2^{-r}$, and $\delta^2 = 2^{-k}$. For $\textsc{PEBound}_{\epsilon, \delta}(\epsilon')$ we have the following facts:
\begin{itemize}
    \item The maximum degree of all involved polynomials in the formula is at most $\max\{2, n - 1\}$.
    \item Each coefficient requires at most $\max(k, \tau, r)$ number of bits to be represented.
    \item There is only a single free variable, namely $\epsilon'$.
    \item The formula is in prenex normal form, containing two blocks of quantifiers of size $M$ and $M$, respectively.
\end{itemize}
Using quantifier elimination~\citep[Algorithm 14.5]{Basu08:Algorithms}, we can convert $\textsc{PEBound}_{\epsilon, \delta}(\epsilon')$ to an equivalent quantifier-free formula $\textsc{PEBound}'_{\epsilon, \delta}(\epsilon')$, again with a single free variable $\epsilon$. It follows, by~\citet[Theorem 14.16]{Basu08:Algorithms}, that the degree of all involved polynomials (which are univariate polynomials in $\epsilon$) is at most $n^{O(M^2)}$, and the bit complexity of each coefficient in the formula is at most $\max(k, r, \tau) n^{O(M^2)}$. By~\citet{Anderson86:Almost}, we know that $\textsc{PEBound}'_{\epsilon, \delta}(\epsilon')$ is satisfied for a sufficiently small $\epsilon' > 0$. It then follows that $\textsc{PEBound}'_{\epsilon, \delta}(\epsilon')$ is satisfied for some $\epsilon' = \epsilon^* \geq 2^{ - \max(k, r, \tau) n^{O(M^2)}}$~\citep{Basu08:Algorithms}, which in turn implies that $\textsc{PEBound}'_{\epsilon, \delta}(\epsilon')$ is satisfied for any $\epsilon' \leq \epsilon^*$ in view of the semantics of the formula.
\end{proof}

\Cref{lemma:new-almostimpliesnear} together with~\Cref{lemma:almost-near} imply that, in principle, one can use gradient descent on the perturbed game $\Gamma^{(\epsilon)}$, for a concrete numerical value $\epsilon$, to find a strategy profile that is arbitrarily close to a perfect equilibrium in any potential game. The problem, of course, is that $\epsilon$ needs to be doubly exponentially small, and the number of iterations of gradient descent will be doubly exponentially large since $\epsilon' \leq \delta^{ n^{O(M^3)} n^{O(M^2)}} = \delta^{n^{O(M^3)}} $ (\Cref{lemma:new-almostimpliesnear}).

More precisely, the previous claim follows from the usual analysis of gradient descent (\Cref{prop:gd-folklore}), which we include below for completeness. 

\paragraph{Analysis of gradient descent} Under a constraint set $\cX$, the update rule of (projected) gradient descent reads
\begin{equation}
    \tag{GD}
    \vx{(t+1)} = \Pi_{\cX} ( \vx(t) + \eta \nabla \Phi(\vx(t))),
\end{equation}
where $\Pi_\cX$ denotes the Euclidean projection. Now, if $\Phi$ is $L$-smooth, meaning that $\| \nabla \Phi(\vx) - \nabla \Phi(\vx') \|_2 \leq L \| \vx - \vx' \|_2$, we have
\begin{equation}
    \label{eq:quadraticbound}
    \Phi(\vx') \geq \Phi(\vx) + \langle \nabla \Phi(\vx), \vx' - \vx \rangle - \frac{L}{2} \|\vx - \vx'\|_2^2
\end{equation}
for any $\vx, \vx' \in \cX$. By the nonexpansiveness of $\Pi_\cX$, we have
\begin{equation}
    \label{eq:closeness}
    \| \vx(t+1) - \vx(t) \|_2 = \| \Pi_{\cX} ( \vx(t) + \eta \nabla \Phi(\vx(t))) - \Pi_\cX(\vx(t)) \|_2 \leq \eta \|\nabla \Phi(\vx(t)) \|_2 \leq \eta L,
\end{equation}
where we assume that $\| \nabla \Phi(\vx) \|_2 \leq L$ for all $\vx \in \cX$. Furthermore, by the first-order optimality condition of gradient descent, we have
\begin{equation*}
    \langle \eta \nabla \Phi(\vx(t)) - (\vx(t+1) - \vx(t)), \vx' - \vx(t) \rangle \geq 0 \quad \forall \vx' \in \cX,
\end{equation*}
which in turn implies
\begin{equation}
    \label{eq:fo-opt}
    \langle \nabla \Phi(\vx(t)), \vx(t+1) - \vx(t) \rangle \geq \frac{1}{\eta} \| \vx(t+1) - \vx(t) \|_2^2.
\end{equation}
Combining~\eqref{eq:quadraticbound}, \eqref{eq:closeness}, and~\eqref{eq:fo-opt}, we conclude that
\begin{equation*}
    \Phi(\vx(t+1)) - \Phi(\vx(t)) \geq \left( \frac{1}{\eta} - \frac{\eta^2 L^3}{2} \right) \| \vx(t+1) - \vx(t) \|_2^2 \geq \frac{1}{2\eta} \| \vx(t+1) - \vx(t) \|_2^2 
\end{equation*}
when $\eta \leq 1/L$. The telescopic summation yields that after $O_\epsilon(1/\epsilon^2)$ iterations there is a point $\vx(t)$ such that $\| \vx(t+1) - \vx(t) \|_2 \leq \epsilon$. By the first-order optimality condition, this implies
\begin{equation*}
    \langle \nabla \Phi(\vx(t)), \vx' - \vx(t) \rangle \geq - \frac{D_\cX}{\eta} \epsilon \quad \forall \vx' \in \cX,
\end{equation*}
where $D_\cX$ denotes the $\ell_2$ diameter of $\cX$. We summarize this standard guarantee below.

\begin{proposition}
    \label{prop:gd-folklore}
    Gradient descent on any $L$-smooth function on a bounded domain takes at most $O_\epsilon(1/\epsilon^2)$ iterations to converge to a first-order stationary point $\vx(t)$, that is,
    \begin{equation*}
        \langle \nabla \Phi(\vx(t)), \vx' - \vx(t) \rangle \geq - \epsilon \quad \forall \vx' \in \cX,
    \end{equation*}
\end{proposition}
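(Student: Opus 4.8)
The plan is to run the standard descent-lemma telescoping argument, treating $\Phi$ as a potential to be \emph{increased} along the projected-gradient trajectory defined by the update $\vx(t+1) = \Pi_\cX(\vx(t) + \eta \nabla \Phi(\vx(t)))$, and then to translate a small per-step displacement into approximate first-order stationarity. Since the statement merely packages the computation carried out in the preceding \textit{Analysis of gradient descent} paragraph, the proof amounts to chaining the three ingredients already isolated there.

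First I would invoke $L$-smoothness of $\Phi$ to obtain the quadratic lower bound \eqref{eq:quadraticbound}, namely $\Phi(\vx') \ge \Phi(\vx) + \langle \nabla \Phi(\vx), \vx' - \vx \rangle - \frac{L}{2}\|\vx - \vx'\|_2^2$. Next I would control the per-iteration movement using the nonexpansiveness of the Euclidean projection $\Pi_\cX$, which gives \eqref{eq:closeness}, $\|\vx(t+1) - \vx(t)\|_2 \le \eta \|\nabla \Phi(\vx(t))\|_2 \le \eta L$ under the bounded-gradient assumption. The third ingredient is the variational (first-order optimality) characterization of the projection step, which yields \eqref{eq:fo-opt}, $\langle \nabla \Phi(\vx(t)), \vx(t+1) - \vx(t) \rangle \ge \frac{1}{\eta}\|\vx(t+1) - \vx(t)\|_2^2$. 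Substituting \eqref{eq:closeness} and \eqref{eq:fo-opt} into \eqref{eq:quadraticbound} with $\vx = \vx(t)$ and $\vx' = \vx(t+1)$, and taking $\eta \le 1/L$, produces the monotone improvement $\Phi(\vx(t+1)) - \Phi(\vx(t)) \ge \frac{1}{2\eta}\|\vx(t+1) - \vx(t)\|_2^2$.

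With this in hand I would telescope over $t$. Because $\cX$ is bounded and $\Phi$ is continuous, $\Phi$ has a finite range over $\cX$, so the telescoped sum $\sum_{t < T} \frac{1}{2\eta}\|\vx(t+1) - \vx(t)\|_2^2 \le \max_\cX \Phi - \min_\cX \Phi$ is bounded by a constant independent of $T$. Consequently $\min_{t < T}\|\vx(t+1) - \vx(t)\|_2^2 = O(1/T)$, and after $T = O_\epsilon(1/\epsilon^2)$ iterations there exists an index $t$ with $\|\vx(t+1) - \vx(t)\|_2 \le \epsilon$. Finally, reapplying the projection's first-order optimality inequality at this $t$ and bounding by Cauchy--Schwarz gives $\langle \nabla \Phi(\vx(t)), \vx' - \vx(t) \rangle \ge -\frac{D_\cX}{\eta}\|\vx(t+1) - \vx(t)\|_2 \ge -\frac{D_\cX}{\eta}\epsilon$ for all $\vx' \in \cX$, where $D_\cX$ is the $\ell_2$-diameter; rescaling the target accuracy by the constant factor $\eta/D_\cX$ delivers the clean bound $\langle \nabla \Phi(\vx(t)), \vx' - \vx(t) \rangle \ge -\epsilon$.

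The argument is entirely routine; the only place demanding care is the last step, where the variational inequality for $\Pi_\cX$ must be manipulated correctly so that a small displacement $\|\vx(t+1) - \vx(t)\|_2$ certifies approximate stationarity, and where the constants $D_\cX$, $\eta$, and $L$ have to be tracked so the final inequality can be absorbed into a single $\epsilon$ (equivalently, into the $O_\epsilon(\cdot)$ notation). I expect no genuine obstacle beyond this bookkeeping.
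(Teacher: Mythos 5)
Your proposal is correct and follows essentially the same route as the paper's own derivation in the preceding \emph{Analysis of gradient descent} paragraph: the descent lemma \eqref{eq:quadraticbound}, the nonexpansiveness bound \eqref{eq:closeness}, and the projection's variational inequality \eqref{eq:fo-opt} are combined into the per-step improvement, telescoped using boundedness of $\Phi$ on $\cX$, and the small displacement is converted into approximate stationarity via the diameter $D_\cX$. No discrepancies to note.
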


In our application, we set the constraint set to be the $\epsilon$-perturbed joint strategy set $\cX^{(\epsilon)}$, and we execute gradient descent long enough so that there is a strategy $\vx(t)$ such that $\langle \nabla \Phi(\vx(t)), \vx' - \vx(t) \rangle \geq - \epsilon'$ for any $ \vx' \in \cX^{(\epsilon)}$; this can be converted into an $\epsilon$-perfect $O_{\epsilon'}(\sqrt{\epsilon'})$-almost equilibrium per the well-supported notion~\citep{Chen09:Settling}.

\paragraph{Polymatrix games} We now observe that~\Cref{lemma:almost-near} can be significantly improved in polymatrix games. This refinement will allow us to obtain a more direct $\CLS$ membership for computing perfect equilibria in that class of games. We begin with the following simple observation relating perfect equilibria in polymatrix games and \emph{linear complementarity problems (LCPs)}; we refer to~\citet{Hansen18:Computational} for an analogous connection pertaining to the more refined notion of proper equilibria.

\begin{lemma}
    \label{lemma:polymatrix-LCPs}
    Any perfect equilibrium of a polymatrix game can be obtained as a limit point, as $\epsilon \to 0$, of a solution to the perturbed standard-form LCP, $\mathcal{P}^{(\epsilon)}$, defined as
\begin{equation*}
\begin{array}{ll}
\text{find}  & \vec{z}, \vec{w} \in \R^d \\
\text{subject to}& \vec{z}^\top \vec{w} = 0, \\
& \vec{w} = \mat{M}^{(\epsilon)} \vec{z} + \vec{b}^{(\epsilon)}, \\
& \vec{z}, \vec{w} \geq 0.
\end{array}    
\end{equation*}
Furthermore, every entry of $\mat{M}^{(\epsilon)}$ and $\vec{b}^{(\epsilon)}$ has degree at most $1$ as a polynomial in $\epsilon$ and each coefficient can be represented with a polynomial number of bits.
\end{lemma}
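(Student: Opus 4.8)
The plan is to exhibit the perturbed game $\Gamma^{(\epsilon)}$ as a standard quadratic/bilinear game whose Nash equilibria are exactly the solutions of a linear complementarity problem, and then track the dependence on $\epsilon$ carefully. First I would recall that in a polymatrix game each player $i$'s expected utility is the \emph{linear} function $u_i(\vx) = \sum_{j \in N_i} \vx_i^\top \mat{P}_{i,j} \vx_j$, so the gradient with respect to $\vx_i$, namely $\vg_i(\vx_{-i}) := \sum_{j \in N_i} \mat{P}_{i,j} \vx_j$, does not depend on $\vx_i$. By \Cref{def:eps-perfect} and \Cref{def:perfect-perturbed}, a Nash equilibrium of $\Gamma^{(\epsilon)}$ is a point where each player maximizes this linear payoff over the perturbed simplex $\cX_i^{(\epsilon)} = \{\vx_i \in \Delta(\cA_i) : \vx_i(a_i) \ge \epsilon\ \forall a_i\}$. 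The standard device here is the affine change of variables $\vx_i = \epsilon \vec{1} + (1 - m_i \epsilon)\vec{y}_i$, where $\vec{y}_i$ ranges over the \emph{unperturbed} simplex $\Delta(\cA_i)$; substituting this into $u_i$ keeps the payoff linear in $\vec{y}_j$ and introduces only a constant-plus-linear-in-$\epsilon$ shift, because the substitution is affine in $\epsilon$.

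Next I would write down the KKT/best-response conditions for each player over $\Delta(\cA_i)$ in the new variables. Maximizing a linear function $\vec{y}_i^\top \vec{c}_i$ over a simplex is the classic example that yields a complementarity condition: there is a scalar multiplier (the maximal payoff) and slack variables $\vec{w}_i \ge 0$ with $\vec{w}_i = \lambda_i \vec{1} - \vec{c}_i$ and the support condition $y_i(a_i)\, w_i(a_i) = 0$. Collecting these conditions across all players, stacking the $\vec{y}_i$, the multipliers, and the slacks into vectors $\vec{z}, \vec{w}$, gives exactly the standard-form LCP $\vec{w} = \mat{M}^{(\epsilon)}\vec{z} + \vec{b}^{(\epsilon)}$, $\vec{z},\vec{w}\ge 0$, $\vec{z}^\top\vec{w}=0$. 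The matrix $\mat{M}^{(\epsilon)}$ is assembled from the payoff matrices $\mat{P}_{i,j}$ scaled by the factor $(1-m_i\epsilon)$ coming from the substitution, together with the simplex equality constraints; the vector $\vec{b}^{(\epsilon)}$ absorbs the constant term $\epsilon\sum_{j}\mat{P}_{i,j}\vec{1}$. This is precisely the LCP encoding of a polymatrix/bimatrix game à la Lemke, adapted to the perturbed strategy set.

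The two remaining claims then follow by inspection of this construction. For the degree-$1$ bound: every entry of $\mat{M}^{(\epsilon)}$ and $\vec{b}^{(\epsilon)}$ is either a constant (from the simplex constraints), a payoff entry times $(1-m_i\epsilon)$, or a sum of payoff entries times $\epsilon$; in all cases it is an affine function of $\epsilon$, hence a polynomial of degree at most $1$, and its coefficients are simple integer combinations of the rational payoff entries $\mat{P}_{i,j}(a_i,a_j)$, so they have polynomially bounded bit-size. For the limit-point claim: a perfect equilibrium of $\Gamma$ is by \Cref{def:PE} a limit as $\epsilon \to 0^+$ of $\epsilon$-perfect equilibria, and every Nash equilibrium of $\Gamma^{(\epsilon)}$ is an $\epsilon$-perfect equilibrium of $\Gamma$; translating back through $\vx_i = \epsilon\vec{1} + (1-m_i\epsilon)\vec{y}_i$, a solution of $\mathcal{P}^{(\epsilon)}$ corresponds to such an $\epsilon$-perfect equilibrium, so its limit as $\epsilon\to 0$ recovers the perfect equilibrium. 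The main obstacle I anticipate is purely bookkeeping: making the affine substitution compatible with the complementarity variables so that the support/best-response conditions line up exactly, and verifying that the multiplier rows of $\mat{M}^{(\epsilon)}$ do not secretly introduce higher-degree dependence on $\epsilon$. Once the substitution is set up cleanly, the degree-$1$ and polynomial-bit-size properties are immediate.
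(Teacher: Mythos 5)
The paper itself gives no proof of \Cref{lemma:polymatrix-LCPs}: it is stated as a ``simple observation,'' with a pointer to \citet{Hansen18:Computational} for the analogous LCP formulation for proper equilibria. Your route---affinely reparametrize the perturbed simplex, write the complementarity conditions for linear optimization over a simplex, and stack them into a standard-form LCP---is exactly the intended one. However, your degree-$1$ claim does not follow from the substitution as you describe it. Substituting $\vx_i = \epsilon\vec{1} + (1-m_i\epsilon)\vy_i$ and $\vx_j = \epsilon\vec{1} + (1-m_j\epsilon)\vy_j$ into the bilinear term $\vx_i^\top\mat{P}_{i,j}\vx_j$ produces a coefficient $(1-m_i\epsilon)(1-m_j\epsilon)$ on $\vy_i^\top\mat{P}_{i,j}\vy_j$, which is quadratic in $\epsilon$. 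To land at degree $1$ you must first note that the factor $(1-m_i\epsilon)$ coming from the outer $\vx_i^\top$ multiplies the whole of player $i$'s objective (as a function of $\vy_i$) and is positive, so it can be divided out of player $i$'s optimality conditions without changing the best-response set or the complementarity system; what remains has coefficient $(1-m_j\epsilon)\mat{P}_{i,j}$ on $\vy_j$ and constant term $\epsilon\sum_j\mat{P}_{i,j}\vec{1}$, both affine in $\epsilon$. You flag a worry about ``higher-degree dependence'' but locate it in the multiplier rows rather than here, and you write the scaling on the payoff blocks as $(1-m_i\epsilon)$ rather than $(1-m_j\epsilon)$, which suggests the two factors have been conflated.

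Separately, your argument for the limit-point claim runs in the wrong direction. You invoke the fact that every Nash equilibrium of $\Gamma^{(\epsilon)}$ is an $\epsilon$-perfect equilibrium, which shows that limits of LCP solutions \emph{are} perfect equilibria---not that \emph{any} perfect equilibrium arises as such a limit, which is what the lemma asserts. By \Cref{def:PE} a perfect equilibrium is a limit of $\epsilon$-perfect equilibria, and an $\epsilon$-perfect equilibrium need not be a Nash equilibrium of the uniformly perturbed game $\Gamma^{(\epsilon)}$, hence need not correspond to a solution of $\mathcal{P}^{(\epsilon)}$. To obtain the stated direction you need a Selten-type characterization that every perfect equilibrium is the limit of Nash equilibria of \emph{some} sequence of perturbed games whose perturbations are captured by the LCP family (e.g., by parameterizing the trembles appropriately), or you should state explicitly that you are proving only the containment that the paper uses downstream in \Cref{prop:polymatrix-round}.
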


We recall that a \emph{basis} $B$ for a standard-form LCP with constraints $\vec{w} = \mat{M} \vec{z} + \vec{b}$ is a set of linearly independent columns of $\mat{M}$ such that the associated (complementary) solution---called \emph{basic solution}---is feasible. In this context, we borrow the following definition of~\citet{Farina17:Extensive}.

\begin{definition}[Negligible positive perturbation; NPP]
    Let $\mathcal{P}^{(\epsilon)}$ be an LCP parameterized on some perturbation parameter $\epsilon$. The value $\epsilon^* > 0$ is a \emph{negligible positive perturbation (NPP)} if any (complementary) basis $B$ for $\mathcal{P}^{(\epsilon^*)}$ remains a basis for $\mathcal{P}^{(\epsilon)}$ for all $0 \leq \epsilon \leq \epsilon^*$.
\end{definition}

\citet{Farina17:Extensive} gave a general result regarding perturbed LCPs per~\Cref{lemma:polymatrix-LCPs}.

\begin{lemma}[\citep{Farina17:Extensive}]
    \label{lemma:NPP}
    Suppose that each entry of $\mat{M}^{(\epsilon)}$ and $\vec{b}^{(\epsilon)}$ has degree $\poly(d)$ and each coefficient can be represented with $\poly(d)$ bits. Then there exists some negligible positive perturbation $\epsilon^* \in \Q_{>0}$ expressed in $\poly(d)$ bits for $\mathcal{P}^{(\epsilon)}$.
\end{lemma}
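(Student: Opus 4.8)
The plan is to exploit the fact that, although $\mathcal{P}^{(\epsilon)}$ ranges over a continuum of parameter values $\epsilon$, it has only finitely many \emph{combinatorial} configurations. A complementary basis $B$ selects, for each index $i \in [d]$, exactly one of $z_i, w_i$ to be basic, so there are at most $2^d$ candidate bases, and this count does not depend on $\epsilon$. For a fixed $B$, let $\mat{A}_B^{(\epsilon)}$ denote the corresponding $d \times d$ submatrix of $[-\mat{M}^{(\epsilon)} \mid \mat{I}]$. The basis $B$ is feasible for $\mathcal{P}^{(\epsilon)}$ precisely when $\det \mat{A}_B^{(\epsilon)} \neq 0$ and the associated basic solution $(\mat{A}_B^{(\epsilon)})^{-1} \vec{b}^{(\epsilon)}$ is nonnegative. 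By Cramer's rule, each coordinate of the basic solution equals $q_{B,j}(\epsilon) / p_B(\epsilon)$, where $p_B(\epsilon) := \det \mat{A}_B^{(\epsilon)}$ and $q_{B,j}(\epsilon)$ is the determinant obtained by replacing the $j$-th column of $\mat{A}_B^{(\epsilon)}$ with $\vec{b}^{(\epsilon)}$. Thus the feasibility of \emph{every} basis at \emph{every} parameter is governed by the signs of the finite family of univariate polynomials $\{p_B\} \cup \{q_{B,j}\}$.

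The first quantitative step is to bound the size of these polynomials. Since each entry of $\mat{M}^{(\epsilon)}$ and $\vec{b}^{(\epsilon)}$ has degree $\poly(d)$ and $\poly(d)$-bit coefficients, clearing denominators and expanding a $d \times d$ determinant as a signed sum over its $d!$ permutations shows that each $p_B$ and $q_{B,j}$ is, up to an integer scaling, an integer polynomial of degree $\poly(d)$. Its coefficients remain of bit-complexity $\poly(d)$: each of the $d!$ monomials is a product of $d$ entries, which multiplies degrees and coefficient sizes by $\poly(d)$ factors, and the subsequent summation over $d!$ terms adds only $\log(d!) = O(d \log d)$ bits. I would carry out this bookkeeping exactly as in the proof of \Cref{lemma:efficient-eval}.

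The decisive step is a uniform lower bound on the smallest positive root. For any integer polynomial $f(\epsilon) = \sum_{i=\ell}^{D} c_i \epsilon^i$ with $c_\ell \neq 0$, factoring out $\epsilon^\ell$ and applying the Cauchy bound shows that every nonzero root has modulus at least $|c_\ell| / (|c_\ell| + \max_i |c_i|) \ge 2^{-\poly(d)}$, using $|c_\ell| \ge 1$ and $\max_i |c_i| \le 2^{\poly(d)}$. Crucially, this bound is per-polynomial, so it applies simultaneously to all (exponentially many) $p_B$ and $q_{B,j}$ without ever forming their product. Choosing $\epsilon^* := 2^{-c \cdot \poly(d)}$ with the constant $c$ large enough makes $\epsilon^*$ strictly smaller than every positive root of every $p_B$ and $q_{B,j}$, and $\epsilon^*$ is representable in $\poly(d)$ bits. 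Consequently each of these polynomials has no root in $(0, \epsilon^*]$ and is therefore sign-invariant there, so the truth value of each condition $p_B(\epsilon) \neq 0$ and $q_{B,j}(\epsilon)/p_B(\epsilon) \ge 0$ is constant on $(0, \epsilon^*]$. In particular, any basis feasible at $\epsilon^*$ stays feasible for all $\epsilon \in (0, \epsilon^*]$, and the closed conditions $\ge 0$ extend to $\epsilon = 0$ by continuity; this is precisely the NPP property.

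The step I expect to be the main obstacle is managing the \emph{exponential} number of bases: one cannot enumerate them or multiply their defining polynomials, since the product would have exponential degree and coefficient size, so the entire argument must rest on a root-separation bound that holds uniformly for each individual polynomial irrespective of how many there are. A secondary subtlety is the boundary value $\epsilon = 0$, where a determinant $p_B$ may vanish and render $B$ degenerate; including $\epsilon = 0$ in the NPP interval should be justified by a limiting argument on the non-strict feasibility inequalities rather than by sign-invariance alone.
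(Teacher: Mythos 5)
The paper does not actually prove this lemma---it is imported verbatim from \citet{Farina17:Extensive} with no argument given---so there is no in-paper proof to compare against. That said, your reconstruction is the standard (and, as far as I can tell, the original) argument: reduce basis feasibility to sign conditions on the finitely many Cramer determinants $p_B, q_{B,j}$, bound their degrees and coefficient sizes through the determinant expansion, and then apply a per-polynomial Cauchy-type lower bound on the smallest positive root so that a single $\epsilon^* = 2^{-\poly(d)}$ lies below every positive root of every one of the exponentially many polynomials without ever forming their product. All of these steps are correct, and your observation that the root bound must be applied polynomial-by-polynomial (rather than to a product, whose bit-size would explode) is exactly the right way to handle the $2^d$ bases. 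The one point that does not fully close is the one you flagged yourself: the paper's definition of NPP requires $B$ to remain a basis for all $0 \le \epsilon \le \epsilon^*$, and sign-invariance of $p_B$ on $(0,\epsilon^*]$ does not rule out $p_B(0)=0$, in which case the basis matrix is singular at $\epsilon=0$ and $B$ is not, strictly speaking, a basis there. For the way the lemma is used in \Cref{prop:polymatrix-round} this is harmless---the basic solution $q_{B,j}(\epsilon)/p_B(\epsilon)$ is a bounded rational function on $(0,\epsilon^*]$ (its coordinates are probabilities), so its limit as $\epsilon\to 0^+$ exists, is nonnegative, and satisfies the equality and complementarity constraints of $\mathcal{P}^{(0)}$ by continuity---but to match the literal wording of the definition you would either need to argue $p_B(0)\neq 0$ for the LCPs at hand or replace ``remains a basis at $\epsilon=0$'' by ``the basic solution extends continuously to a feasible complementary solution of $\mathcal{P}^{(0)}$.'' With that caveat made explicit, the proof is complete.
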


We thus arrive at the following consequence.

\begin{proposition}
    \label{prop:polymatrix-round}
    For any polymatrix game $\Gamma$, there exists a polynomial and an $\epsilon \leq 2^{p(|\Gamma|)}$ such that any $\epsilon$-perfect equilibrium of $\Gamma$ induces an exact perfect equilibrium of $\Gamma$.
\end{proposition}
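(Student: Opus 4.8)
The plan is to combine \Cref{lemma:polymatrix-LCPs} with the theory of negligible positive perturbations. First I would invoke \Cref{lemma:polymatrix-LCPs} to recast the perfect equilibria of $\Gamma$ as limit points, as $\epsilon \to 0^+$, of solutions to the perturbed standard-form LCP $\mathcal{P}^{(\epsilon)}$, whose matrix $\mat{M}^{(\epsilon)}$ and vector $\vec{b}^{(\epsilon)}$ have entries that are degree-$1$ polynomials in $\epsilon$ with polynomially many bits per coefficient. These are exactly the hypotheses of the perturbation lemma of~\citet{Farina17:Extensive}, so there exists a negligible positive perturbation $\epsilon^* \in \Q_{>0}$ for $\mathcal{P}^{(\epsilon)}$ representable in $\poly(|\Gamma|)$ bits. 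I would take $\epsilon \defeq \epsilon^*$; since $\epsilon^*$ is representable in polynomially many bits it meets the size bound asserted in the statement.

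Next, consider any $\epsilon$-perfect equilibrium of $\Gamma$ with $\epsilon = \epsilon^*$. By the correspondence underlying \Cref{lemma:polymatrix-LCPs} it is realized as a solution $(\vec{z}, \vec{w})$ of $\mathcal{P}^{(\epsilon^*)}$. From this solution I would extract a complementary basis $B$: the complementarity and support pattern of $(\vec{z}, \vec{w})$ yields a set of linearly independent columns whose associated basic solution is feasible at $\epsilon^*$, i.e.\ a basis for $\mathcal{P}^{(\epsilon^*)}$. The defining property of the NPP now does the heavy lifting: because $\epsilon^*$ is an NPP, the basis $B$ remains a basis for $\mathcal{P}^{(\epsilon')}$ for \emph{every} $\epsilon' \in [0, \epsilon^*]$. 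Consequently the basic solution $(\vec{z}_B(\epsilon'), \vec{w}_B(\epsilon'))$ is well-defined, feasible, and complementary throughout this interval, and—since the entries of $\mat{M}^{(\epsilon')}, \vec{b}^{(\epsilon')}$ are polynomial in $\epsilon'$—it is a rational, hence continuous, function of $\epsilon'$ on $[0, \epsilon^*]$.

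Evaluating at $\epsilon' = 0$ yields a solution of $\mathcal{P}^{(0)}$, which by \Cref{lemma:polymatrix-LCPs} corresponds to a Nash equilibrium $\vx^\star$ of $\Gamma$; this is the profile ``induced'' by the given $\epsilon$-perfect equilibrium. To upgrade $\vx^\star$ from Nash to \emph{perfect}, I would observe that for each $\epsilon' \in (0, \epsilon^*]$ the basic solution along $B$ is a Nash equilibrium of the perturbed game $\Gamma^{(\epsilon')}$, hence an $\epsilon'$-perfect equilibrium of $\Gamma$; since these vary continuously in $\epsilon'$ and converge to $\vx^\star$ as $\epsilon' \to 0^+$, \Cref{def:PE} certifies that $\vx^\star$ is a perfect equilibrium.

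The main obstacle I anticipate is a degeneracy issue in the extraction of $B$: an arbitrary $\epsilon$-perfect equilibrium need not be a basic (vertex) solution of $\mathcal{P}^{(\epsilon^*)}$, so its support need not single out a unique complementary basis. I would handle this by selecting \emph{any} complementary basis $B$ whose basic solution lies on the minimal face of the solution set containing $(\vec{z}, \vec{w})$—equivalently, decomposing the solution into basic complementary solutions and retaining one of them—and then running the NPP argument on that $B$. This suffices because the proposition only requires that \emph{some} exact perfect equilibrium be induced, not that the (possibly non-perfect) convex combination itself be perfect. A secondary point to verify is that the strict-interior perturbation $\vo_i$ defining $\Gamma^{(\epsilon')}$ is consistent with the LCP encoding, so that solutions of $\mathcal{P}^{(\epsilon')}$ indeed correspond to $\epsilon'$-perfect equilibria for every $\epsilon'$ in the interval and not merely at the endpoint.
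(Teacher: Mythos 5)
Your proposal is correct and follows exactly the route the paper intends: it combines \Cref{lemma:polymatrix-LCPs} with the negligible-positive-perturbation lemma of \citet{Farina17:Extensive}, extracts a complementary basis at $\epsilon^*$, and uses the NPP property to follow the (continuous, rational) basic solution down to $\epsilon' = 0$ while certifying perfection via the path of $\epsilon'$-perfect equilibria. The paper states the proposition as an immediate consequence of those two lemmas without spelling out the details, and your write-up supplies precisely the intended argument, including the correct handling of the degeneracy caveat.
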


This makes the problem of computing an exact perfect equilibrium of a potential, polymatrix game $\Gamma$ directly amenable to gradient descent, establishing an alternative proof of $\CLS$ membership. Indeed, one can run gradient descent (\Cref{prop:gd-folklore}) on the polymatrix potential game $\Gamma^{(\epsilon^*)}$ to identify an $\epsilon^*$-perfect equilibrium through a standard rounding procedure~\citep[Appendix A]{Fearnley24:Complexity}; since $\epsilon^* \leq 2^{p(\Gamma)}$, $\Gamma^{(\epsilon^*)}$ can be equivalently expressed as a polymatrix game (without any perturbation in the strategy sets) where each entry in the payoff matrices has polynomially many bits. The same argument goes through for proper equilibria in polymatrix games.

\paragraph{Symbolic gradient descent} While~\Cref{prop:polymatrix-round} shows that a numerical version of gradient descent, executed on the perturbed game for a sufficient number of iterations, will succeed, we now point out that \emph{symbolic} gradient descent can fail. This can be seen even in the simple example of~\Cref{fig:symbolic}, as we point out in~\Cref{prop:symbolicfails}.

We first clarify what we mean by ``symbolic gradient descent.'' Every player $i \in [n]$ is assumed to initialize at some symbolic strategy $\vx^{(\epsilon)}_i(0) \in \cX^{(\epsilon)}$. Then $\vx^{(\epsilon)}(t+1)$ is obtained from $\vx(t)$ by applying symbolically a projected gradient descent step using $\nabla \Phi(\vx^{(\epsilon)}(t))$. It is easy to see that a symbolic gradient descent step can be performed efficiently on the truncated probability simplex~\citep{WangC13:Projection}. After executing $T$ iterations, we return as output $\vx^{(0)}(T)$, which is equal to $\lim_{\epsilon \to 0^+} \vx^{(\epsilon)}(T)$ by continuity.

\begin{proposition}
    \label{prop:symbolicfails}
     There is a $2 \times 2$ game such that for any $T \in \N$, symbolic gradient descent after $T$ iterations converges to a strategy profile that is not a perfect equilibrium.
\end{proposition}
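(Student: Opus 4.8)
The plan is to exhibit, for each fixed $T$, a $2\times 2$ identical-interest game and a symbolic initialization from which $T$ steps of symbolic gradient descent stay pinned at a \emph{non}-perfect Nash equilibrium. Concretely, I would take the game of \Cref{fig:nash-vs-perfect} (in the spirit of \Cref{fig:symbolic}), whose potential is $\Phi(x_1,x_2)=x_1x_2$, where $x_i\in[0,1]$ is the probability that player $i$ plays their first action. Its only pure equilibria are the corners $x_1=x_2=1$ and $x_1=x_2=0$, of which only the former is perfect, since the second action of each player is weakly dominated. I would initialize symbolic gradient descent at the non-perfect corner $(\textsf{R2},\textsf{C2})$, that is, at the $\eps$-pure profile $x_1^{(\eps)}(0)=x_2^{(\eps)}(0)=\eps$.

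The crux is that at this corner the gradient of the potential is only first order in $\eps$: since $\nabla\Phi(x_1,x_2)=(x_2,x_1)$, at the initialization we have $\nabla\Phi=(\eps,\eps)$. Writing $\eta$ for the fixed step size and recalling that the perturbed strategy set $\cX^{(\eps)}=[\eps,1-\eps]^2$ is a product (so the projection acts coordinatewise), I would prove by induction on $t$ that $x_1^{(\eps)}(t)=x_2^{(\eps)}(t)=\eps\,(1+\eta)^t$ for all $t\le T$. Indeed, granting the claim at time $t$, a projected ascent step gives $x_i^{(\eps)}(t+1)=\Pi_{[\eps,\,1-\eps]}\big(x_i^{(\eps)}(t)+\eta\,x_{-i}^{(\eps)}(t)\big)=\Pi_{[\eps,\,1-\eps]}\big(\eps(1+\eta)^{t+1}\big)$. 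Since $T$ and $\eta$ are fixed while only the regime $\eps\to 0^+$ matters, for all sufficiently small $\eps$ we have $\eps\le \eps(1+\eta)^{t+1}\le 1-\eps$, so the projection is the identity and the induction closes.

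Taking the symbolic limit then yields the output of $T$-step symbolic gradient descent as $\vx^{(0)}(T)=\lim_{\eps\to 0^+}\big(\eps(1+\eta)^T,\;\eps(1+\eta)^T\big)=(0,0)$, i.e. the non-perfect corner $(\textsf{R2},\textsf{C2})$, which proves the proposition for every $T$. The conceptual reason is that at an equilibrium supported on a weakly dominated action the only incentive to move away is mediated by the opponent's vanishing tremble, so the driving gradient is $\Theta(\eps)$ and each symbolic step merely rescales the displacement from the corner by a bounded constant, keeping it $\Theta(\eps)$ throughout the length-$T$ trajectory. The one delicate point---the main obstacle---is the interchange of the iteration with the limit $\eps\to 0^+$: one must check that the projection constraint never binds along the trajectory for small enough $\eps$, which is precisely what validates the closed form $\eps(1+\eta)^t$ and hence the collapse to the corner in the limit. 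This is exactly the failure mode that \Cref{prop:polymatrix-round} avoids by instead fixing $\eps$ to a small \emph{numerical} value and running polynomially more iterations to reach a genuine $\eps$-perfect equilibrium.
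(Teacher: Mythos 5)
Your proposal is correct and follows essentially the same route as the paper's proof: the paper uses the (isomorphic) game of \Cref{fig:symbolic}, initializes at the $\eps$-pure profile supported on the non-perfect Nash equilibrium, and proves the same inductive invariance that the mass on the better action grows only geometrically as $\eps\,\kappa(t)$ with $\kappa(t)$ a bounded multiplicative factor per step (there $\kappa(t)=(3/2)^t$ from projecting onto $\Delta(2)$, versus your $(1+\eta)^t$ on the interval parametrization), so the symbolic limit after any fixed $T$ collapses back to the non-perfect corner. Your handling of the one delicate point---that the projection never binds for sufficiently small $\eps$ since $T$ is fixed---matches the paper's "for all $\eps>0$ small enough" step.
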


\begin{proof}
    We analyze the game given in~\Cref{fig:symbolic} when we initialize at $\vx^{(\epsilon)}_1(0) = (1 - \epsilon, \epsilon)$ and $\vx^{(\epsilon)}_2(0) = (1 - \epsilon, \epsilon)$. We will prove the invariance $\vx^{(\epsilon)}_1(t) = (1 - \epsilon \kappa(t), \epsilon \kappa(t))$ and $\vx^{(\epsilon)}_2(t) = (1 - \epsilon \kappa(t), \epsilon \kappa(t))$ for some $\kappa(t)$. Inductively, if it holds for $t$, we have
    \begin{equation*}
        \vx^{(\epsilon)}_1(t+1) = \vx^{(\epsilon)}_2(t+1) =  \Pi_{\Delta(2)} \left(
        \begin{pmatrix}
         1 - \epsilon \kappa(t)\\
         \epsilon \kappa(t)
        \end{pmatrix} + \begin{pmatrix}
         0\\
         \epsilon \kappa(t)
        \end{pmatrix} \right) = \begin{pmatrix}
         1 - \frac{3}{2} \epsilon \kappa(t)\\
         \frac{3}{2} \epsilon \kappa(t)
        \end{pmatrix}
    \end{equation*}
    for all $\epsilon > 0$ small enough, establishing the induction for $\kappa(t) = ( \nicefrac{3}{2} )^{t}$. It thus follows that, no matter the choice of $T$, $\vx_1^{(0)} = (1, 0) = \vx_2^{(0)}$, which is not a perfect equilibrium.
\end{proof}

\begin{figure}[!h]
\centering
\begin{tabular}{c c c}
& \textsf{C1} & \textsf{C2} \\
\textsf{R1} & 0, 0 & 0, 0 \\
\textsf{R2} & 0, 0 & 1, 1 \\
\end{tabular}
\caption{A $2 \times 2$ identical-interest game in normal form where symbolic gradient descent can fail to converge to a perfect equilibrium.}
\label{fig:symbolic}
\end{figure}

\subsection{Doubly exponentially small $\eps$ is necessary}
\label{sec:doublyexpo}

We have now seen that, for polymatrix, potential games, the almost-implies-near framework implies that computing perfect equilibria lies in the complexity class $\CLS$. One may hope that this applies more generally to potential games, as potential games have more structure than generic normal-form games that may hypothetically allow a stronger almost-to-near result, even without the polymatrix assumption. Here, we show otherwise: we will explicitly exhibit normal-form games $\Gamma$ in which $\eps$ is required to be {\em doubly-exponentially small} (in the representation size of the game) before every (even exact) Nash equilibrium of $\Gamma^{(\eps)}$ is close to a (even Nash) equilibrium of $\Gamma$.


\begin{theorem}
    For every positive integer $n$ there exists a normal-form potential game $\Gamma_n$ with $O(n)$ players and two actions per player, or a polytope potential game with three players each with strategy set $[0, 1]^{O(n)}$, such that, for all $\eps \in [1/2^{2^n}, 1/2]$, the perturbed game $\Gamma^{(\eps)}_n$ admits a Nash equilibrium that is distance $1/2$ away in $\ell_\infty$-norm from any Nash equilibrium of $\Gamma_n$.
\end{theorem}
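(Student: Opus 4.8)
The plan is to realize the repeated-squaring idea sketched above as a single multilinear potential $\Phi(\vx,\vx',\vc,\vd,t)$ and then read off the two bullet claims from the first-order (best-response) conditions. Since $\Phi$ is multilinear, the common payoff is affine in each player's own coordinate, so at a Nash equilibrium every coordinate sits at the top of its allowed interval when the corresponding partial derivative of $\Phi$ is positive, at the bottom when it is negative, and anywhere when it vanishes; this reduces the whole problem to bookkeeping of the partial derivatives of $\Phi$ with respect to the five groups of coordinates $\vx,\vx',\vc,\vd,t$. The same $\Phi$ yields both games in the statement: assigning one player to each scalar coordinate gives the $4n+1$-player normal-form game (two actions per player, the mixing probability playing the role of the coordinate), while grouping the coordinates as $\vx$, $\vx'$, and $(\vc,\vd,t)$ gives a three-player polytope game with hypercube strategy sets $[0,1]^{O(n)}$; in both groupings every product appearing in $\Phi$ draws its factors from distinct players, so multilinearity (the disjoint-players condition) is preserved and best responses decompose coordinatewise. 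I would first record these derivatives and the immediate consequence that $\partial\Phi/\partial t=\sum_i(x_i-x_{i-1}x_{i-1}')-2n<0$ always, so $t$ is pinned to the bottom of its interval ($t=0$ exactly, $t=\eps$ in the perturbed game) in every equilibrium.

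For the first bullet I would argue by contradiction: suppose an exact equilibrium has all $d_i\in(0,1)$ and all $c_i\neq 1$. Interiority of $d_i$ forces $\partial\Phi/\partial d_i=x_i-x_i'=0$, i.e. $\vx=\vx'$. Since a positive coefficient $\partial\Phi/\partial c_i=-(x_i-x_{i-1}x_{i-1}')$ would pin $c_i$ to $1$, the assumption $c_i\neq 1$ forces $x_i\ge x_{i-1}x_{i-1}'=x_{i-1}^2$ for every $i$; with $x_0=1/2$ this chain gives $x_{n-1}\ge 2^{-2^{n-1}}>0$ by induction. On the other hand $\partial\Phi/\partial x_n=(t-c_n)+(d_n-\tfrac12)-2<0$ (using $t=0$), so $x_n=0$; but then $\partial\Phi/\partial c_n=-(x_n-x_{n-1}^2)=x_{n-1}^2>0$, which pins $c_n=1$, contradicting $c_n\neq 1$. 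Hence every exact equilibrium has some $d_i\in\{0,1\}$ or some $c_i=1$.

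For the second bullet I would exhibit the equilibrium explicitly. Take $t=\eps$, $d_i=\tfrac12$, $c_i=\eps$, and $x_i=x_i'=\xi_i$ with $\xi_0=1/2$ and $\xi_i=\max(\xi_{i-1}^2,\eps)$. One checks each best-response condition in turn: $d_i=\tfrac12$ is interior because $x_i=x_i'$; the coefficient $\partial\Phi/\partial x_i=(t-c_i)-(t-c_{i+1})x_i'+(d_i-\tfrac12)$ vanishes for $i<n$ (all three terms are zero when $t=c_i=c_{i+1}=\eps$ and $d_i=\tfrac12$), so these $x_i$ are free and may be set to $\xi_i$ (the conditions for $x_i'$ are symmetric), while $\partial\Phi/\partial x_n<0$ forces $x_n=\eps$; and $c_i=\eps$ sits at the bottom precisely when $\partial\Phi/\partial c_i=-(\xi_i-\xi_{i-1}^2)\le 0$, i.e. $\xi_i\ge\xi_{i-1}^2$, which holds by construction. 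The only genuine constraint is at the last coordinate: $c_n=\eps$ requires $\xi_n=\eps\ge\xi_{n-1}^2$, and since the chain forces $\xi_{n-1}\ge 2^{-2^{n-1}}$ this is feasible exactly when $\eps\ge 2^{-2^{n}}$. This is where the doubly-exponential threshold emerges, and it is tight. Finally, comparing this perturbed equilibrium (all $d_i=\tfrac12$, all $c_i=\eps\le\tfrac12$) with any exact one (some $d_i\in\{0,1\}$ or some $c_i=1$) gives an $\ell_\infty$ discrepancy of at least $\min(|\tfrac12-0|,|1-\eps|)=\tfrac12$ in the offending coordinate, which is the desired conclusion.

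The main obstacle I anticipate is the second bullet, and specifically the simultaneous consistency of all best-response conditions at the proposed perturbed equilibrium: the mechanism relies on $t$ and the $c_i$ all collapsing to the common lower value $\eps$, which makes the coefficients $(t-c_i)$ vanish and thereby frees the interior coordinates $x_i$ to carry the repeated-squaring chain, while the single binding inequality $\xi_{n-1}^2\le\eps$ must reproduce exactly the threshold $\eps\ge 2^{-2^n}$ rather than a merely singly-exponential one. Getting the signs and the constant $x_0=1/2$ to line up so that the chain $\xi_i\ge\xi_{i-1}^2$ bottoms out at $2^{-2^{n-1}}$, and verifying that no upward deviation in $c_i$ (or any deviation in $\vx'$) is profitable, is the delicate part; the first bullet and the multilinearity and grouping checks are comparatively routine. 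The three-player normal-form strengthening would then follow by the standard embedding of a bounded-coordinate polytope game into a normal-form game, which I would invoke rather than redo.
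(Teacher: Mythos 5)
Your proposal is correct and follows essentially the same route as the paper's own proof: the same potential $\Phi$, the same explicit perturbed equilibrium $t=\eps$, $\vd=\tfrac12\vec 1$, $\vc=\eps\vec 1$, $x_i=x_i'=\max\{\eps,x_{i-1}^2\}$, and the same gradient bookkeeping showing that every exact equilibrium forces some $d_i\in\{0,1\}$ or $c_i=1$ (you localize the contradiction at index $n$ via $x_{n-1}\ge 2^{-2^{n-1}}$, whereas the paper argues that some index $i$ must violate $x_i\ge x_{i-1}^2$; the two are logically equivalent). The remaining checks you flag as delicate (sign of $\partial\Phi/\partial x_n$, feasibility of $\xi_{n-1}^2\le\eps$ exactly when $\eps\ge 2^{-2^n}$, and the multilinear grouping into three players) all go through as you describe.
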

\begin{proof}
We will construct a normal-form game $\Gamma_n$ with $4n+1$ players, as follows. The first $4n$ players are split into four groups of $n$ players each, and the mixed strategies of the four groups will be denoted $\vx, \vx', \vc, \vd \in [0, 1]^n$, where for example $c_i \in [0, 1]$ is the probability that the $i$th player of group $\vc$ plays its first action. The last player's strategy will be denoted $t \in [0, 1]$. The potential function is given by 
    \begin{align}
        \Phi(\vx, \vx', \vc, \vd, t) = \sum_{i=1}^n \qty[\qty(t - c_i) (x_i - x_{i-1} x_{i-1}') + \qty(d_i - \frac12) (x_i - x_i')] - 2x_n - 2x'_n - 2n \cdot t \label{eq:doubleexp}
    \end{align}
where for simplicity of notation we set $x_0 = x_0' := 1/2$. 

The construction of the three-player game is identical: the three players have strategies $\vx_i \in [0, 1]^n, \vx_i' \in [0, 1]^n$, and $(\vc, \vd, t) \in [0, 1]^{2n+1}$, respectively.

For intution, $\Phi$ is constructed to achieve strategies at equilibrium that involve repeated squaring, as follows. If $\vd = \vec 1/2$ and $\vec c = t\vec 1$, then we must have $x_i = x_{i-1} x_{i-1}'$ and $x_i = x_i'$, which implies $x_i = 1/2^{2^i}$. As mentioned in the body, the sole purpose of the player $t$ is to have a value that is guaranteed to be $\eps$, since by construction $t$ always has negative gradient. 

We claim first that, for $\eps \in [1/2^{2^n}, 1/2]$, the perturbed game $\Gamma_n^{(\eps)}$ has the following equilibrium: $$t = \eps, \quad \vd = \frac{\vec 1}2, \quad \vc = \eps \vec 1, \quad x_i = x_i' = \max\qty{\eps, \frac{1}{2^{2^i}}} = \max\{\eps, x_{i-1}^2\}.$$ We check this by checking the gradients for all players:
\begin{itemize}
    \item The gradients for $x_n$, $x_n'$, and and $t$ are always negative, and they are already set to the smallest possible value $\eps$.
    \item The gradient for $c_i$ is nonpositive, because $x_i \ge x_{i-1}^2$ holds for all $i$, and $c_i$ is set to the smallest possible value, $\eps$.
    \item For $i < n$, the gradients for $x_i$, $x_i'$, and $d_i$ are all $0$, because $c_i = t$, $d_i = 1/2$, and $x_i = x_i'$ for all $i$. 
\end{itemize}

Thus, this is an equilibrium. It remains to show that this equilibrium is not close to any Nash equilibrium of $\Gamma_n$. Indeed, we will show that no equilibrium of $\Gamma_n$ can satisfy $0 < d_i < 1$ and $c_i < 1$ for all $i$. Since $0 < d_i < 1$, we must have $x_i = x_i'$ for all $i$. Since the gradients for $x_n$, $x_n'$, and and $t$ are always negative, we must have $x_n = x_n' = t = 0$. But then there must be some $i$ for which $x_i < x_{i-1}^2$. For that value of $i$, the gradient for $c_i$ will be positive, so we must have $c_i = 1$.  
Thus, every Nash equilibrium of $\Gamma_n$ must have $c_i = 1$ or $d_i \in \{0, 1\}$ for some $i$, and therefore be at least $\ell_\infty$ distance $1/2$ from the previously established equilibrium of $\Gamma^{(\eps)}_n$.
\end{proof}

\begin{theorem}
    \label{theorem:3playerdoublyexp}
    For every positive integer $n$ there exists a normal-form potential game $\Gamma_n$ with $3$ players and $O(n)$ actions per player, such that, for all $\eps \in [1/2^{2^n}, 1/4n]$, the perturbed game $\Gamma^{(\eps)}_n$ admits a Nash equilibrium that is distance $\Omega(1/n)$ away in $\ell_\infty$-norm from any Nash equilibrium of $\Gamma_n$.
\end{theorem}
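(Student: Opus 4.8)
The plan is to reuse the three-player polytope construction underlying \Cref{theorem:doublyexponentiallysmall}, whose potential $\Phi$ (see~\eqref{eq:doubleexp}) forces repeated squaring $x_i = x_{i-1}^2$ at equilibrium, and to convert its hypercube strategy sets $[0,1]^n,[0,1]^n,[0,1]^{2n+1}$ into genuine simplices over $O(n)$ actions. The obstacle is structural: a normal-form player maximizes over a simplex rather than a product of intervals, so the clean coordinatewise conditions of the polytope model (each $x_i,c_i,d_i$ either at a boundary or at a zero-gradient interior point) are unavailable. In a simplex all supported actions must share a common best-response utility, which couples the coordinates and prevents setting them independently.

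I would resolve this with a \emph{comparator-pair} gadget. For each coordinate $v$ I give its owner two actions $v^+,v^-$, identify $v := p(v^+)$, and add one slack action per player. The normal-form potential is obtained from $\Phi$ by substituting $x_i\mapsto p_1(x_i^+)$, $x_i'\mapsto p_2(x_i'^+)$, $c_i\mapsto p_3(c_i^+)$, $t\mapsto p_3(t)$, and by replacing the centered term $(d_i-\tfrac12)$ with the balanced difference $(p_3(d_i^+)-p_3(d_i^-))$; this last replacement is essential, since the constant $\tfrac12$ is incompatible with a simplex whereas the difference is naturally centered at $0$. Every resulting monomial still involves distinct players, so this is a valid identical-interest normal-form potential game with $2n+1$, $2n+1$, and $4n+2$ actions. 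For the magnitude coordinates $x_i,x_i',c_i$ only the $+$ action occurs in $\Phi$, so $U(v^-)=0$ and $U(v^+)-U(v^-)$ equals the original gradient $\partial\Phi/\partial v$; for $d_i$ the two actions occur antisymmetrically, giving $U(d_i^\pm)=\pm(x_i-x_i')$. In either case, whenever both $v^+$ and $v^-$ lie in the support the equal-utility condition forces the comparator to vanish ($\partial\Phi/\partial v=0$, resp.\ $x_i=x_i'$), exactly recovering the interior first-order condition of the hypercube model, while the one-sided cases recover its boundary behavior.

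With the gadget in place both halves port, up to a $\Theta(1/n)$ rescaling forced by the simplex. For the perturbed game I exhibit $x_i=x_i'=\max\{\eps,x_{i-1}^2\}$ (on $x_i^+,x_i'^+$), $c_i=t=\eps$, and $d_i^+=d_i^-=\Theta(1/n)$; a direct check gives that every supported action has utility $0=\max$, that $x_n^+,t$ and the under-squaring $c_i^+$ are strictly suboptimal (hence pinned to the floor $\eps$), and that the repeated-squaring masses sum to less than $1/2$, so for $\eps\le 1/4n$ there is $\Omega(1)$ free mass to place (at least $\Omega(1/n)$ on each balanced pair and partner), yielding a valid perturbed best response for each player. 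For the unperturbed game, the coefficient of $x_n^+$ is at most $t+d_n^+-2\le -1<0$ (two of player~3's probabilities sum to at most $1$), forcing $x_n=x_n'=0$; if every $d_i$-pair were interior then $U_3(d_i^+)=U_3(d_i^-)$ gives $x_i=x_i'$ for all $i$, and since $x_0=\tfrac12$ but $x_n=0$ some index has $x_i<x_{i-1}^2$, whence $U_3(c_i^+)=x_{i-1}^2-x_i>0$; but then player~3's best-response value is strictly positive, incompatible with both $d_i^\pm$ (whose utilities sum to $0$) being supported. Hence every unperturbed equilibrium leaves some $d_j^\sigma$ unplayed, i.e.\ at distance $\Omega(1/n)$ from its balanced value in the perturbed equilibrium.

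The step I expect to be most delicate is the perturbed-equilibrium bookkeeping: one must simultaneously certify that all intended best responses share the common utility $0$ despite the coupling induced by the single shared potential, and that the repeated-squaring values, the forced minima $\eps$, and the $\Omega(1/n)$ balanced remainders really assemble into valid simplex distributions across the range $\eps\in[1/2^{2^n},1/4n]$. Pinning these admissible constants is exactly where the factor-$n$ loss in both the distance guarantee and the upper endpoint of the $\eps$-interval originates.
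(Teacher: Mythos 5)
Your proposal is correct and follows the same overall architecture as the paper's proof: embed the repeated-squaring potential from \Cref{theorem:doublyexponentiallysmall} into a three-player game whose strategy sets are simplices over $O(n)$ actions, add a slack action per player to absorb leftover mass, exhibit the repeated-squaring profile $x_i = x_i' = \max\{\eps, x_{i-1}^2\}$ as an equilibrium of $\Gamma_n^{(\eps)}$, and show that every exact equilibrium of $\Gamma_n$ is forced (via $x_n = x_n' = t = 0$ and the resulting strictly positive utility of some $c_i$) onto a boundary configuration far from it. Where you genuinely diverge is in the treatment of the centering term $(d_i - \tfrac12)$: the paper keeps a single action per coordinate $d_i$ and simply rescales the constant to $d_i - \tfrac1{4n}$ so that $\vd = \tfrac{1}{4n}\vec 1$ fits inside the simplex, then derives the interior first-order conditions by arguing that the slack action $s$ carries mass in $[\tfrac13, \tfrac45]$ and hence mass can be exchanged between $s$ and any $d_i$ or $c_i$; your comparator pair $(d_i^+, d_i^-)$ with antisymmetric utilities $\pm(x_i - x_i')$ instead extracts $x_i = x_i'$ directly from the support condition of a Nash equilibrium, with no need to track the slack's mass, and makes the final contradiction (one of $d_i^\pm$ has utility $\le 0$ strictly below a positive best-response value, hence is unplayed) particularly clean. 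The trade-off is that your gadget doubles the number of $d$-actions and shifts the $\Omega(1/n)$ distance onto the unplayed half of a pair, whereas the paper's keeps fewer actions but needs the slightly more delicate quantitative window $1/5n \le d_i \le 1/3n$, $c_i \le 1/3n$. The perturbed-equilibrium bookkeeping you flag as delicate does go through: all supported actions have utility $0$, the pinned actions contribute at most $O(n\eps) \le O(1)$ mass, and the repeated-squaring masses sum to below $1/2$, exactly as in the paper's feasibility check $\sum_i \max\{\eps, x_{i-1}^2\} \le \eps n + 1/2$.
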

\begin{proof}
    It suffices to only slightly modify the three-player counterexample in the previous proof; for the sake of completeness, we write out the full argument. We add one action for each of the three players, which will be denoted $x_{n+1}, x_{n+1}'$, and $s$, respectively. So, the strategy spaces of the three players are $\vx \in \Delta(n+1)$, $\vx' \in \Delta(n+1)$, and $(\vc, \vd, t, s) \in \Delta(2n+2)$, respectively. The extra actions $x_{n+1}, x_{n+1}'$, and $s$ serve only to ensure the existence of an extra degree of freedom in the strategy spaces, and will be essentially ignored for the remainder of the argument. The potential function is set to 
    \begin{align*}
        \Phi(\vx, \vx', \vc, \vd, t, s) = \sum_{i=1}^n \qty[\qty(t - c_i) (x_i - x_{i-1} x_{i-1}') + \qty(d_i - \frac1{4n}) (x_i - x_i')] - 2x_n - 2x'_n - 2n \cdot t
    \end{align*}
    The only change from \eqref{eq:doubleexp} is that the $d_i - 1/2$ has become a $d_i - 1/2n$, which is only necessary to ensure that P3's strategy is actually a valid strategy in the remainder of the proof. No attempt is made to optimize constant factors.

We claim first that, for $\eps \in [1/2^{2^n}, 1/4n]$, the perturbed game $\Gamma_n^{(\eps)}$ has the following equilibrium: $$t = \eps, \quad \vd = \frac{\vec 1}{4n}, \quad \vc = \eps \vec 1, \quad  x_i = x_i' = \max\qty{\eps, \frac{1}{2^{2^i}}} = \max\{\eps, x_{i-1}^2.\}$$ (To see that this is a valid strategy profile, notice that $\sum_{i=1}^n \max\{\eps, x_{i-1}^2\} \le \eps n + 1/2 \le 1$, so it suffices for $\eps \le 1/2n$.) We check this by checking the gradients for all players:
\begin{itemize}
    \item The gradients for $x_n$, $x_n'$, and and $t$ are always negative, and they are already set to the smallest possible value $\eps$.
    \item The gradient for $c_i$ is nonpositive, because $x_i \ge x_{i-1}^2$ holds for all $i$, and $c_i$ is set to the smallest possible value, $\eps$.
    \item For $i < n$, the gradients for $x_i$, $x_i'$, and $d_i$ are all $0$, because $c_i = t$, $d_i = 1/2$, and $x_i = x_i'$ for all $i$. 
\end{itemize}

Thus, this is an equilibrium. It remains to show that this equilibrium is not close to any Nash equilibrium of $\Gamma_n$. Indeed, we will show that no equilibrium of $\Gamma_n$ can satisfy $1/5n \le d_i \le 1/3n$ and $c_i \le 1/3n$ for all $i$. Suppose there were such an equilibrium. Since the gradients for $x_n$, $x_n'$, and and $t$ are always negative, we must have $x_n = x_n' = t = 0$. Thus, we have $1/3 \le s \le 4/5$, so the gradient for $d_i$ must be zero and the gradient for $c_i$ must be nonpositive for all $i$ (else P3 profitably deviates by moving mass between $s$ and the $d_i$ or $c_i$ whose gradient is nonzero.) But then we must have $x_i = x_i'$ for all $i$. But then there must be some $i$ for which $x_i < x_{i-1}^2$. For that value of $i$, the gradient for $c_i$ will be positive which is a contradiction. 
Thus, every Nash equilibrium of $\Gamma_n$ must have $c_i = 1$ or $d_i \in \{0, 1\}$ for some $i$, and therefore be at least $\ell_\infty$ distance $1/2$ from the previously established equilibrium of $\Gamma^{(\eps)}_n$.
\end{proof}

\section{Exponential path lengths for pure perfect equilibria}
\label{sec:longpaths}

In this section, we investigate how many steps $\eps$-symbolic better-response dynamics might take in comparison to better-response dynamics. Of course, solution points to $\eps$-symbolic better-response dynamics are in particular solution points to better-response dynamics, that is,  pure perfect equilibria are pure Nash equilibria. Since pure Nash equilibria are \PLS{}-hard to compute in concise potential games (see, \emph{e.g.}, \citet{Fabrikant04:Complexity} for congestion games), pure perfect equilibria will inherit that complexity hardness. Therefore, one might intuitively hope to argue---and mistakenly so, as we will see---that $\eps$-symbolic better-response dynamics should take at least as long to converge. We will show instead that \emph{either} dynamics can take exponentially longer than the other. We remark that these results are \emph{unconditional}, \ie, true independent of whether $\PLS{} = \Pol{}$ or not. We will first present the results of this section, and then discuss further background and the proofs.


For the specificity of our results, we recall our assumption that in both dynamics, players update their strategy in a round-robin fashion. On the positive side, we can then show that $\eps$-symbolic better-response dynamics can be exponentially faster than better-response dynamics in terms of improvement steps needed to reach an equilibrium.

\begin{theorem}
\label{thm:short eps br paths}
    There are families of concise potential games in which 
    \begin{enumerate}
        \item $\eps$-symbolic better-response dynamics from any starting action profile takes at most $n$ improvement steps to reach a pure perfect equilibrium, whereas
        \item there exists starting action profiles from which better-response dynamics takes exponentially many improvement steps to reach a pure Nash equilibrium.   
    \end{enumerate}
    This already holds for games in which all players have exactly two actions.
\end{theorem}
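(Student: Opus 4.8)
The plan is to build the family $\Gamma_n$ by composing two gadgets inside a single two-action identical-interest (hence potential) game. The first is a \emph{base} gadget $B$ on $k$ players, each with actions in $\{0,1\}$, that admits exponentially long plain better-response paths from some starting profile; a binary-counter construction or the standard embedding of \textsc{MaxCut} under the \textsc{Flip} neighborhood works here (cf.\ \citet{Fabrikant04:Complexity}), and such constructions can be taken to have exactly two actions per player. The second is a \emph{beacon} gadget consisting of two extra players with actions $\{\mathsf d, \mathsf e\}$. The common utility is
\begin{equation*}
u(\va) = \mathbf 1[\text{not }(\mathsf e, \mathsf e)]\cdot B(\va_{\mathrm{base}}) + \mathbf 1[(\mathsf e, \mathsf e)]\cdot\big(M + c\cdot \#\{\text{base players playing left}\}\big),
\end{equation*}
where $M$ is a large rational constant exceeding the range of $B$ plus $ck$. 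Crucially $B$ depends only on the base players, and the indicators are multilinear in the (binary-encoded) beacon actions, so $u$ is a multilinear arithmetic circuit and $\Gamma_n$ is a concise potential game. I would fix the round-robin order so that the two beacon players are updated before the base players.

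For part (2), I would start plain better-response dynamics at the hard base configuration with the beacon at $(\mathsf d, \mathsf d)$. A single beacon player switching $\mathsf d \to \mathsf e$ while its partner plays $\mathsf d$ neither triggers the $(\mathsf e, \mathsf e)$ term nor changes $B$ (which ignores the beacon players), so it is not a strict improvement and the beacon stays inert. Consequently plain better-response dynamics coincide with the base dynamics and inherit their exponentially long path.

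For part (1), the key step is to show that each beacon player's unique symbolic perfect best response is $\mathsf e$, \emph{unconditionally}. Moving probability toward $\mathsf e$ strictly increases the weight placed on the huge-$M$ term through the partner's tremble; writing the perturbed-utility gap as a polynomial in $\eps$ and comparing with $\lesseps$, its lowest-order coefficient equals $M$ minus a contribution bounded by the range of $B$, hence positive once $M$ is large---regardless of the base configuration and of the partner's current ($\eps$-perturbed) strategy. Since the beacon players are updated first, after their turns the profile has $(\mathsf e, \mathsf e)$ locked, whereupon the common utility equals $M + c\cdot\#\{\text{left}\}$ and ``left'' is a strictly dominant, hence absorbing, action for every base player. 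Therefore each player moves at most once in the first round-robin sweep---beacon players to $\mathsf e$, base players to left---for a total of at most $n$ improvement steps, from any starting profile; by \Cref{thm:br-converges-to-perfect} the terminal profile is a pure perfect equilibrium.

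The main obstacle is engineering the beacon so that it is simultaneously invisible to plain dynamics and decisive under the perturbation: the exponential base path is preserved only because $B$ ignores the beacon players and the reward requires \emph{both} trembles (a second-order event in pure play), while the one-step collapse under trembles relies on that same reward being first-order once an opponent trembles. The delicate points are (i) verifying the sign of the lexicographic best-response comparison uniformly over all opponent profiles and base configurations, which fixes the choice of $M$, and (ii) the bookkeeping ensuring no base player moves before $(\mathsf e, \mathsf e)$ is set---secured by updating the beacon players first---so as to obtain the tight bound of $n$ rather than $O(n)$ steps. I expect (i) to be the technical crux.
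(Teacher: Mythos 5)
Your proposal is correct and follows essentially the same route as the paper's own proof: the same base construction (a two-action identical-interest game from \textsc{MaxCut/Flip} with exponentially long better-response paths), the same two-player ``escape'' gadget rewarding $(\mathsf e,\mathsf e)$ with a large constant $M$ plus a bonus counting base players on one side, the same update ordering (escape players first), and the same first-order tremble argument forcing the escape players to $\mathsf e$ while plain dynamics leave them inert. The points you flag as delicate (uniform sign of the lexicographic comparison fixing $M$, and the ordering ensuring the $n$-step bound) are handled in the paper exactly as you anticipate.
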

From the fact our proof constructs a family of $2$-action games, we also obtain that (1) \Cref{thm:short eps br paths} holds true independent of the pivoting rule a player might deploy for choosing a better response, and that (2) the results extend to proper equilibria as well (formally stated in \Cref{cor:exppaths for proper}).

We prove \Cref{thm:short eps br paths} by reducing from the local search problem {\sc MaxCut/Flip}. It is a \PLS{}-complete problem that is known to possibly admit exponentially long local search paths. Interestingly enough, we will \emph{also} use {\sc MaxCut/Flip}---and a so-called \emph{tight} \PLS{}-reduction---to prove the contrary observation. 

\begin{theorem}
\label{thm:tight plsh}
    There is a \emph{tight} \PLS{}-reduction from {\sc MaxCut/Flip} to finding a pure perfect equilibrium of concise potential games. This already holds for games in which all players have exactly two actions.
\end{theorem}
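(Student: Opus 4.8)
The plan is to give a reduction in which the $\eps$-perturbation underlying perfect equilibria turns out to be completely inert: it rescales every player's switching incentive by a common strictly positive factor, so that the symbolic perfect best-response dynamics reproduce the \textsc{Flip} dynamics arc for arc. Concretely, given a \textsc{MaxCut/Flip} instance $(G=(V,E),w)$, I would let $f$ output the identical-interest \emph{cut game} $\Gamma$: the players are the vertices, each with two actions $\{0,1\}$ (the two sides of the cut), and the common potential is the multilinear function $\Phi(\vx)=\sum_{\{a,b\}\in E} w_{ab}\,[\,x_a(1-x_b)+(1-x_a)x_b\,]$, i.e.\ the expected cut value, where $x_a\in[0,1]$ is the probability that $a$ plays side $0$. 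This is a concise potential game with exactly two actions per player (each product in $\Phi$ involves the disjoint player pair $a,b$, so the multilinear-circuit condition holds), and I would take $g$ to be the identity, reading a pure profile of $\Gamma$ directly as a cut of $G$.

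The crux is a one-line computation against the perturbed strategy sets of \Cref{def:perfect-perturbed}. Fix a profile and a player $v$, and write $W_S$ (resp.\ $W_D$) for the total weight of edges incident to $v$ whose other endpoint currently sits on the same (resp.\ opposite) side as $v$, so that $W_S-W_D$ is precisely $v$'s \textsc{Flip} gain from switching. With every other player at its $\eps$-pure strategy (mass $1-\eps$ on its current side), the utilities of $v$'s two $\eps$-pure strategies differ by
\begin{align*}
    u_v(\text{switch},\vx_{-v}) - u_v(\text{stay},\vx_{-v}) = (1-2\eps)^2\,(W_S - W_D).
\end{align*}
Since $(1-2\eps)^2>0$ for $\eps\in(0,1/2)$, this difference is a strictly positive multiple of the \textsc{Flip} gain, so its sign equals $\operatorname{sign}(W_S-W_D)$ and it is the identically-zero polynomial exactly when $W_S-W_D=0$. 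Hence $u_v(\text{stay})\lesseps u_v(\text{switch})$ if and only if $v$'s \textsc{Flip} gain is strictly positive, the $\arg\max$ in \Cref{alg:delta-br} is exactly a strictly improving \textsc{Flip} move, and a \textsc{Flip} tie induces no symbolic improvement. The main obstacle the proof must clear is precisely this point: a priori the perturbation could resolve \textsc{Flip} ties and introduce spurious improving moves (this is the very mechanism exploited in \Cref{theorem:expsep} and \Cref{thm:short eps br paths}), which would shortcut long paths and destroy tightness. The clean factorization shows that the symmetry of the cut game forces ties to be \emph{preserved} rather than broken, and no new moves appear.

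It then follows that the symbolic-perfect-better-response transition graph of $\Gamma$ is isomorphic, via $g$, to the \textsc{Flip} transition graph of $(G,w)$: identical vertex sets (configurations), identical arcs (strictly improving single-coordinate flips), and hence identical sinks, so the pure perfect equilibria of $\Gamma$ are exactly the \textsc{Flip}-local optima of $(G,w)$ (which here coincide with its pure Nash equilibria). Taking $\mathcal{R}$ to be the entire set of profiles, conditions (i)--(ii) of a tight \PLS-reduction are immediate, and condition (iii) holds because every arc of $\Gamma$'s transition graph maps under $g$ onto an arc of the \textsc{Flip} transition graph; thus $(f,g)$ is tight. Since \textsc{MaxCut/Flip} is tightly \PLS-complete and admits initial configurations from which improving \textsc{Flip} paths have length $2^{\Omega(n)}$, the graph isomorphism transfers these long paths verbatim to symbolic perfect best-response dynamics, proving the theorem and the advertised exponential lower bound. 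Finally, because $\Gamma$ has only two actions per player, there is a single pair of actions to compare and the proper perturbation of \Cref{def:proper-perturbed} rescales the same difference $W_S-W_D$ by the positive factor $\bigl(\tfrac{1-\eps}{1+\eps}\bigr)^2$; hence the $\eps$-proper and $\eps$-perfect best responses coincide, and the identical reduction yields the proper-equilibrium statement recorded in \Cref{cor:exppaths for proper}.
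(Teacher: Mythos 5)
Your construction is correct and does prove the literal statement, but it is genuinely different from the paper's, and the difference matters for what the theorem is used for downstream. You take the standard one-player-per-vertex cut game (the paper's \Cref{lem:known maxcut to II game}) and observe that the perfect-equilibrium perturbation is \emph{inert} there: the symbolic utility gain from a flip factors as $(1-2\eps)^2(W_S-W_D)$, so signs and ties are preserved and the symbolic transition graph is isomorphic to the \textsc{Flip} transition graph. The computation checks out (indeed $\Phi(\vx_{\va}^{(\eps)})=2\eps(1-\eps)W_{\mathrm{tot}}+(1-2\eps)^2\,w(B_\va,C_\va)$, so the perturbed potential is an affine, orientation-preserving function of the cut weight), taking $R$ to be all profiles makes tightness immediate, and the two-action observation for proper equilibria is also right. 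The paper instead builds a more elaborate game in which each vertex is represented by a \emph{triplet} of players voting by majority, with a small penalty for having at least two non-unanimous triplets; there the correspondence between symbolic dynamics and \textsc{Flip} is driven by carefully separated zeroth- and first-order incentives (\Cref{lem:unanimity,lem:better maxcut from player deviation,lem:transition graph of triplet game}) rather than by an exact rescaling.

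The reason the paper pays for the extra machinery is \Cref{cor:short BR paths}: the triplet game has the property that \emph{unperturbed} better-response dynamics converge in at most linearly many steps (non-unanimous triplets collapse and then everything freezes), while the $\eps$-symbolic dynamics simulate \textsc{Flip} and can take exponentially long. Your game cannot deliver that separation, because in it the unperturbed and symbolic dynamics coincide move-for-move, so both take exponentially many steps from the hard starting profiles. So your argument yields \Cref{thm:tight plsh} and \Cref{cor:PSPACEc for eps symb} by a shorter route, but if it replaced the paper's proof, part (1) of \Cref{cor:short BR paths} would no longer follow from this construction and would need a separate argument (e.g., the paper's triplet game). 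It is worth flagging this dependency explicitly if you present the simpler reduction.
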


The proof of~\Cref{thm:tight plsh} also implies the following.

\begin{corollary}\label{cor:shortBRpaths}
    There are families of concise potential games in which 
    \begin{enumerate}
        \item better-response dynamics from any starting action profile takes at most $n$ improvement steps to reach a pure Nash equilibrium, whereas
        \item there exists starting action profiles from which $\eps$-symbolic better-response dynamics takes exponentially many improvement steps to reach a pure perfect equilibrium.   
    \end{enumerate}
\end{corollary}

\begin{corollary}
\label{cor:PSPACEcforepssymb}
    It is \PSPACE{}-complete to decide in concise potential games whether $\epsilon$-symbolic better-response dynamics reaches a pure perfect equilibrium from a given starting action profile within $k \in \N$ number of improvement steps (where $k$ is given in binary).
\end{corollary}

Again, since our proof constructs a family of $2$-action games, the above results hold true independent of the pivoting rule a player might deploy for choosing an $\eps$-symbolic better response. Also, when players only have two actions available, the notions of pure perfect equilibrium and pure proper equilibrium coincide. Therefore, we conclude with the same statements for the proper equilibrium refinement.
\begin{corollary}
    \label{cor:exppaths for proper}
    All results in this section also hold for pure proper equilibria instead of pure perfect equilibria. This includes \Cref{thm:short eps br paths}, \Cref{thm:tight plsh}, \Cref{cor:shortBRpaths}, and \Cref{cor:PSPACEcforepssymb}.
\end{corollary}

\paragraph{Background on {\sc MaxCut} and \PLS{}-tight reductions}

To show these results, we use the local search problem {\sc MaxCut/Flip}. It is a \PLS{}-complete problem known to have exponentially long local search paths in the worst case. Let $G = (V,E)$ be an undirected graph, $w : E \to \N$ positive edge weights, and $V = B \sqcup C$ a vertex partition into two sets. Then, the cut of $B \sqcup C$ is defined as all the edges in between $B$ and $C$:
\begin{align*}
    &E \cap (B,C) := \{ \{u,v\} = e \in E : u \in B \wedge v \in C \, \textnormal{ or } \, u \in B \wedge v \in C \} \, .
\end{align*}
Its weight is defined as $w(B,C) := \sum_{e \in E \cap (B,C)} w( e )$. The FLIP neighbourhood of partition $B \sqcup C$ is the set of partitions $(B',C')$ that can be obtained from $(B,C)$ by just moving one vertex from one part to the other:
\begin{align*}
    &\textnormal{FLIP}(B,C) := \Big\{ (B \cup \{c\} ) \sqcup ( C \setminus \{c\} ) \Big\}_{c \in C} \cup \Big\{ ( B \setminus \{b\} ) \sqcup (C \cup \{b\} ) \Big\}_{b \in B} \, .
\end{align*}

\begin{definition}
    An instance of the search problem {\sc MaxCut/Flip} consists of an undirected graph $G = (V,E)$ with edge weights $w : E \to \N$. A (locally optimal) solution consists of a partition $V = B \sqcup C$ that has maximal cut weight among its \textnormal{FLIP} neighbourhood.
\end{definition}
We are interested in its computational complexity in terms of $|V|$, $|E|$, and a binary encoding of all weight values. The standard local search method with Flip iteratively checks for the current cut $(B,C)$ whether there is a node to flip that would improve upon the current cut weight. A pivoting rule decides which node to flip for improvement if there are multiple ones available. The next lemma shows that local search fully captures the complexity of this problem.

\begin{lemma}[\citet{SchafferY91,Yannakakis2003}]
\label{lem:maxcut PLS-c and implications}
    {\sc MaxCut/Flip} is \PLS{}-complete. Moreover, there are instances and initializations (starting partitions) for which---irrespective of the choice of pivoting rule--- local search takes exponentially many iterations to reach a solution. Deciding whether one can reach a solution from a given initialization within $k \in \N$ number of iterations ($k$ given in binary) is \PSPACE{}-complete.
\end{lemma}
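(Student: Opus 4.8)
The plan is to establish all three claims---\PLS-completeness, worst-case exponential path length under every pivoting rule, and \PSPACE-completeness of the bounded-reachability question---in a unified way through the machinery of \emph{tight} \PLS-reductions, since these are exactly the properties that such reductions are designed to transfer. Membership of {\sc MaxCut/Flip} in \PLS\ is immediate and I would dispatch it first: the cut weight $w(B,C)$ is a nonnegative integer objective computable in polynomial time, and the \textnormal{FLIP} neighborhood of any partition can be searched in time $O(|V|\cdot|E|)$ by trying each of the $|V|$ single-vertex moves and returning one that strictly increases the cut (or the partition itself if none exists). Thus the successor and value circuits $S,V$ required by \Cref{def:pls} are efficiently implementable, and membership follows regardless of the fact that weights are given in binary. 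The content of the lemma therefore lies entirely in the hardness direction.

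For hardness, the first step would be to recall the defining property of a tight \PLS-reduction $(\phi,\psi)$ from a source problem $A$ to a target problem $B$: beyond the usual requirement that local optima of $\phi(I)$ map under $\psi$ to local optima of $I$, one designates a set $\mathcal{R}$ of ``reasonable'' solutions of $\phi(I)$ that contains all local optima, such that every solution of $A$ is the $\psi$-image of some element of $\mathcal{R}$, and every improving transition between two reasonable solutions of $B$ either is idle under $\psi$ or projects onto an improving transition of $A$. The key structural consequence, which I would invoke as a black box, is that tight reductions compose and that they transfer both (i) the existence of instances and initializations admitting exponentially long improvement paths under \emph{every} pivoting rule, and (ii) the \PSPACE-completeness of the ``can a solution be reached within $k$ steps'' problem, precisely because an improvement path in the target projects step-for-step onto an improvement path in the source.

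The second step would be to fix a base problem already possessing all three properties and then exhibit a tight \PLS-reduction to {\sc MaxCut/Flip}. The natural base is the generic ``master'' problem defined by Boolean circuits encoding both the objective and the neighbor function, which is \PLS-complete essentially by construction and can be padded (e.g.\ by embedding a binary counter forced to increment) so as to guarantee exponentially long paths and \PSPACE-complete reachability. I would then give a tight reduction from this circuit problem to {\sc MaxCut/Flip} along the lines of the original argument: the graph has a vertex for each wire or gate, flipping a vertex corresponds to toggling a bit, and each logic gate is simulated by a constant-size weighted gadget whose cut contribution is maximized exactly when its output bit is consistent with its inputs. A crucial device is to assign the gadgets weights drawn from a geometric hierarchy $1, W, W^2, \dots$---whence the binary encoding of weights is essential---so that the improving \textnormal{FLIP} move of highest priority always corrects the most significant inconsistency first, forcing reasonable cuts to correspond exactly to consistent circuit evaluations and single bit-flips to be tracked faithfully by single vertex flips.

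The main obstacle, and the technical heart of the argument, is precisely this gadget construction: one must design the weighted gadgets and the weight hierarchy so that the reduction is \emph{tight} rather than merely value-preserving---that is, so that no spurious improving moves exist that would let the target escape the intended path, and so that the correspondence between \textnormal{FLIP}-paths in the cut instance and in the circuit instance is exact step by step. Once tightness is verified, \PLS-completeness follows from the membership above together with the reduction, while the exponential-path and \PSPACE-completeness claims follow by transferring the corresponding properties of the base problem along the tight reduction. Since this is exactly the content of \citet{SchafferY91,Yannakakis2003}, for the present paper it suffices to cite these works; the sketch above merely records the route one would take to reprove the statement.
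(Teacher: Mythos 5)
Your proposal is correct and matches the paper's treatment: the paper offers no proof of this lemma, citing \citet{SchafferY91,Yannakakis2003} exactly as you conclude one should, and your sketch of the underlying argument (easy \PLS{} membership, hardness via a tight \PLS{}-reduction from a circuit-flip-type master problem using weighted gadgets, with the exponential-path and \PSPACE{}-completeness claims transferred along the tight reduction \`a la Yannakakis's Lemma~11) is an accurate account of how the cited works establish it. No gap to report.
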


The latter two results in \Cref{lem:maxcut PLS-c and implications} can be obtained with a so-called \emph{tight} \PLS{}-reduction. Since it forms a central pillar to our upcoming results, we shall quickly restate its definitions here (\cf \citealp[Definitions 2-4]{Yannakakis2003}).

\begin{definition}
    A \PLS{}-reduction from a local search problem $\Pi$ to a local search problem $\Pi'$ consists of two polytime computable functions $h$ and $g$ such that
    \begin{enumerate}
        \item $h$ maps instances $x$ of $\Pi$ to instances $h(x)$ of $\Pi'$,
        \item $g$ maps a pair consisting of an instance $x$ of $\Pi$ and a feasible point of $h(x)$ to a feasible point of $x$, and
        \item if $s$ is a (locally optimal) solution to $h(x)$, then $g(x,s)$ is a solution to $x$.
    \end{enumerate}
\end{definition}

\begin{definition}
    Let $\Pi$ be a local search problem, and $x$ be an instance of $\Pi$. Its transition graph $\textnormal{TG}_\Pi(x)$ is then defined as a directed graph with one node for each feasible point to $x$, and an arc $s \to t$ whenever $t$ is in the local neighborhood of $s$ and the cost of $t$ is strictly better than the cost of $s$.
\end{definition}
For example, for an instance $(G,w)$ of the local search problem {\sc MaxCut/Flip}, the partitions are the feasible points, and there is an arc in the transition graph from a partition $B \sqcup C$ to a partition $B' \sqcup C'$, if the latter has a strictly higher cut weight and was obtained from the former via a vertex flip.
\begin{definition}
\label{defn:tight PLS-red}
    Let $(h,g)$ be a \PLS{}-reduction from local search problem $\Pi$ to local search problem $\Pi'$. We say it is \emph{tight} if for any instance $x$ of $\Pi$ we can choose a subset $R$ of feasible points of corresponding instance $h(x)$ of $\Pi'$ such that
    \begin{enumerate}
        \item $R$ contains all (locally optimal) solutions to $h(x)$,
        \item for any feasible point $p$ of $x$ we can construct in polytime a preimage $q$ of $p$ under $g(x, \cdot)$, \ie, $g(x,q) = p$, and
        \item the following holds: If the transition graph $\textnormal{TG}_{\Pi'}(h(x))$ of $h(x)$ contains a path from $q \in R$ to $q' \in R$ such that all of its internal path nodes are outside of the path, then either $g(x,q) = g(x,q')$ already or $\textnormal{TG}_{\Pi}(x)$ contains an arc from $g(x,q)$ to $g(x,q')$.
    \end{enumerate}
\end{definition}

A tight \PLS{}-reduction from $\Pi$ to $\Pi'$ implies that the transition graphs in the corresponding $\Pi'$ instances are at least as long as the transition graphs in their respective $\Pi$ instances. This enables lower bound proofs on the local search method as given in \Cref{lem:maxcut PLS-c and implications} assuming they are known to hold for the problem $\Pi$ one reduced from \citep[Lemma 11]{Yannakakis2003}. With that, we have the necessary tools to show the results of this section.

\subsection{Proof of \Cref{thm:short eps br paths}}
This subsection is devoted to proving \Cref{thm:short eps br paths}. We first review a result known in the literature, as it forms the basis of our proof.
\begin{lemma}
\label{lem:known maxcut to II game}
    There is a \emph{tight} \PLS{}-reduction from {\sc MaxCut/Flip} to finding a Nash equilibrium in identical-interest games. In particular, there exist identical-interest games and action profiles $\va$ in these games from which all better-response paths are of exponential length.
\end{lemma}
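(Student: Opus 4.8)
The plan is to reconstruct the classical party-affiliation construction and verify that, in this form, it is a \emph{tight} \PLS{}-reduction. Given a {\sc MaxCut/Flip} instance $x = (G,w)$ with $G = (V,E)$, I would let $h(x)$ be the identical-interest game with one player per vertex $v \in V$, each having the two actions $\{B,C\}$ indicating the side of the cut to which $v$ is assigned. Writing an action profile as $\sigma \in \{B,C\}^V$ and letting $(B_\sigma, C_\sigma)$ be the induced partition, the common utility of all players is the cut weight
$u(\sigma) := w(B_\sigma, C_\sigma) = \sum_{e = \{v,v'\} \in E} w(e)\,\mathbf{1}\{\sigma_v \neq \sigma_{v'}\}$.
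Encoding each action as a bit $\sigma_v \in \{0,1\}$ and expanding the indicator as $\sigma_v + \sigma_{v'} - 2\sigma_v\sigma_{v'}$, the function $u$ is multilinear (each product multiplies the variables of two distinct players), so $h(x)$ is a \emph{concise} identical-interest game, and hence a potential game with potential $\Phi = u$. The decoding map $g$ sends a profile $\sigma$ to its induced partition $(B_\sigma, C_\sigma)$; it is a bijection between action profiles and partitions, and both $h$ and $g$ are clearly polynomial-time computable.

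The key structural observation is that a unilateral deviation of player $v$ changes the common utility by exactly the change in cut weight incurred by flipping $v$ across the partition. Consequently, a single-player better response corresponds precisely to a weight-improving {\sc Flip} move, and the pure Nash equilibria of $h(x)$ are in bijection, via $g$, with the {\sc Flip}-local optima of $x$. This immediately yields the three defining conditions of a \PLS{}-reduction: $g$ maps feasible points (profiles) to feasible points (partitions), and any Nash equilibrium of $h(x)$ maps under $g$ to a locally optimal cut of $x$.

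For tightness (in the sense of \Cref{defn:tight PLS-red}), I would take the witness set $R$ to be \emph{all} feasible points of $h(x)$. Then $R$ trivially contains all solutions (condition~1); for any partition $p$ a preimage under $g(x,\cdot)$ is computed in constant time as the profile encoding $p$ (condition~2); and since $R$ is the entire feasible set, the only paths in $\textnormal{TG}_{\Pi'}(h(x))$ all of whose internal nodes lie outside $R$ are the single arcs $q \to q'$. For such an arc, $q'$ differs from $q$ in exactly one player's action with strictly higher utility, so by the structural observation $g(x,q) \to g(x,q')$ is a weight-improving {\sc Flip}, i.e.\ an arc of $\textnormal{TG}_{\Pi}(x)$ (condition~3). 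In fact $g(x,\cdot)$ is a graph isomorphism between $\textnormal{TG}_{\Pi'}(h(x))$ and $\textnormal{TG}_{\Pi}(x)$, which makes tightness transparent. The ``in particular'' claim then follows by transporting the hard instances of \Cref{lem:maxcut PLS-c and implications} through $g$: there are instances $(G,w)$ and starting partitions from which every {\sc Flip} local-search path is exponentially long, and because the transition graphs are isomorphic, the corresponding identical-interest games admit starting profiles $\va$ from which every better-response path is exponentially long.

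I do not expect a genuine obstacle here, since this lemma is a faithful restatement of the standard Schäffer--Yannakakis construction; the only points requiring care are (i) confirming that the round-robin, single-player better-response updates used throughout the paper correspond exactly to single-vertex {\sc Flip} steps, so that the transition graphs truly match, and (ii) checking that the degree-two multilinear encoding of $u$ keeps the game concise in the sense of our computational model. Both are routine given the explicit potential above, and neither affects the isomorphism of transition graphs that drives the tightness argument.
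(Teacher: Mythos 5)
Your proposal is correct and follows essentially the same route as the paper: one player per vertex with actions encoding the side of the cut, common utility equal to the cut weight, and tightness justified by the fact that $g$ induces an isomorphism between the two transition graphs (the paper states this in one line; you spell out the choice $R = $ all profiles and the verification of the three conditions, plus the multilinear encoding certifying conciseness). No gaps.
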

\begin{proof}
    Let $(V,E,w)$ be a {\sc MaxCut/Flip} instance, and create an identical-interest game of it as follows: Introduce a player $v \in V$ for each node. Each player has two actions $\{b,c\}$, which represent entering the subset $B$ or $C$ in the partition. Hence, a pure strategy profile $\va$ corresponds to a partition $(B_\va,C_\va)$ of the vertex set. Define every players utility from that profile to be the weight of the partition's cut. Then, the Nash equilibria of that game corresponds to locally optimal partitions of the graph with respect to the FLIP neighborhood structure (``Can any vertex improve by switching to the other side of the partition?''). The tightness of this reduction follows from the fact that the transition graphs are exactly the same. 
\end{proof}
To account for equilibrium refinements, we start from the identical-interest game construction above, and add two players $w, w'$, each with two actions \textsf{d} (default) and \textsf{e} (escape). If at least one of those players takes action \textsf{d}, we compute the payoff as we did before, based on the cut weight of the action profile of the other players. If $w$ and $w'$ play \textsf{e}, on the other hand, we give everyone a very high payoff $M + \psi(\va)$---so high that they would not be able to achieve a comparable payoff through the weight of a cut. (For example, we can choose $M = |V|^2 \cdot \max_e w(e) + 1$.) We define the additional payoff $\psi(\va)$ to be $\big|\big\{v \in V: \va_{v} = b \big\}\big|$. Then, pure Nash equilibria of this game have one of two forms: either $(\vb, \textsf{e}, \textsf{e})$, where $\vb$ is the action profile where the original $|V|$ players are playing $b$; or $(\va, \textsf{d}, \textsf{d})$, where $\va$ represents a local max-cut of the graph. Only $(\vb, \textsf{e}, \textsf{e})$ is also pure perfect equilibria, since for the latter type, $w$ observes \textsf{e} to be better than \textsf{d} whenever $w'$ plays \textsf{e} with some nonzero probability.

Let $w$ and $w'$ be first to update their strategy. Then all starting action profiles have an $\eps$-symbolic better-response paths of length at most $|V|$ to a pure perfect equilibrium: $w$ updates to \textsf{e} due to the first-order incentives from $w'$ switching as well, $w'$ follows suit due to zeroth order incentives, and the remaining players update to $b$. Now consider better-response dynamics for the starting profile $(\va, \textsf{d}, \textsf{d})$. As long as both $w$ and $w'$ play \textsf{d}, switching to action \textsf{e} will never become a better response for either of those players. Hence, any better-response path starting from $(\va, \textsf{d}, \textsf{d})$ will correspond to better-response paths in the original identical-interest game, appended by $w$ and $w'$ playing \textsf{d}. All of these paths will be of exponential length if the identical-interest games and starting profile $\va$ are chosen according to \Cref{lem:known maxcut to II game}.

\subsection{Proof of~\Cref{thm:tight plsh} and its subsequent corollaries}

This subsection is mainly devoted to proving \Cref{thm:tight plsh}. We reduce from {\sc MaxCut/Flip} again, so let $(V,E,w)$ be a {\sc MaxCut/Flip} instance. We create a similar identical-interest game as in the proof of \Cref{thm:short eps br paths}, except with a different twist. For each node $v \in V$, introduce \emph{three} players $v^{(1)}, v^{(2)}$, and $v^{(3)}$, each of them with the usual two actions $\{b,c\}$. Informally, we want the triplet to decide via majority vote whether $v$ should enter the subset $B$ or $C$ in the partition, but if $k$ triplets do not unanimously agree and $k \geq 2$, we will punish them slightly and proportionally to $k$ (and everyone else equally so, since the game shall be of identical interest). For an action profile $\va = (\va_{v}^{(1)},\va_{v}^{(2)},\va_{v}^{(3)})_{v \in V} \in A$ in the game, define its associated cut $(B_\va, C_\va)$ as follows: put vertex $v \in B$ if and only if at least two players of the triplet $\{v^{(1)}, v^{(2)}, v^{(3)}\}$ play $b$ (note that otherwise, at least two must have played $c$). Next, define $\psi: A \to \N$ as the number non-unanimous triplets, that is, $\psi(\va) = \big|\big\{v \in V: \exists i,j \in [3] \text{ with } \va_{v}^{(i)} \neq \va_{v}^{(j)} \big\}\big|$. Lastly, let $\lambda$ be a sufficiently large yet polynomially sized penalty multiplier; for example, $\lambda = 6|V|$ suffices. We then define the utility function as
\begin{align*}
    u(\va) = 
        \begin{cases}
            w(B_\va, C_\va) &\text{if } \psi(\va) \leq 1,\\
            w(B_\va, C_\va) - \psi(\va)/\lambda &\text{otherwise}.
        \end{cases}
\end{align*}

We denote this corresponding game instance with $\Gamma$. For purposes of analyzing its game dynamics, we assume that the player order has triplets appearing back to back. With this, we provide some intuition for the Nash equilibrium structure in the next lemma.

\begin{lemma}
    An action profile $\va$ is a pure Nash equilibrium of this corresponding game $\Gamma$ if and only if it satisfies $\psi(\va) \leq 1$ and the following property: if there is a node $v$ with a non-unanimous triplet, then its majority is playing the (weakly) better partition subset for $v$ among $B$ and $C$. Furthermore, better-response dynamics from any starting action profile in $\Gamma$ takes at most as many improvement steps to reach a pure Nash equilibrium as there are players in $\Gamma$.
\end{lemma}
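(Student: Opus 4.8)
The plan is to reduce everything to the effect of a single player's unilateral deviation, exploiting that every player has exactly two actions $\{b,c\}$, so that each player has a \emph{unique} deviation. First I would classify the deviation of a player $v^{(i)}$ according to the current state of its triplet. (a) If the triplet is unanimous, the deviation breaks unanimity: the majority---and hence the induced cut $(B_\va, C_\va)$---is unchanged, while $\psi$ increases by one. (b) If the triplet is non-unanimous and $v^{(i)}$ is the minority player, its deviation makes the triplet unanimous (a \emph{merge}): again the majority, and thus the cut, is unchanged, but $\psi$ decreases by one. (c) If the triplet is non-unanimous and $v^{(i)}$ is a majority player, its deviation swings the majority and moves $v$ to the opposite partition subset (a \emph{flip}): the cut changes by exactly the local gain of moving $v$, while $\psi$ is unchanged. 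Reading off $u$ in each case yields the decisive table: a type-(a) deviation is never strictly improving (a tie when $\psi=0$, strictly worse when $\psi\ge 1$ since it introduces or increases the penalty while fixing the cut); a merge is strictly improving iff $\psi\ge 2$ and a tie when $\psi=1$; and a flip is strictly improving iff the opposite subset is strictly better for $v$.

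Given this table, the equilibrium characterization is immediate. For the forward direction, if $\psi(\va)\ge 2$ then any non-unanimous triplet admits a strictly improving merge, so $\va$ is not an equilibrium; hence every equilibrium has $\psi\le 1$. If moreover $\psi=1$ with the non-unanimous triplet at $v^*$, a flip by a majority player of that triplet is strictly improving unless the current (majority) side is weakly better for $v^*$, which is exactly the stated property. For the converse I would verify that under $\psi\le 1$ together with the property, none of the three deviation types is strictly improving: type (a) is worse because $\psi\le 1$ forces $\psi\to\psi+1\in\{1,2\}$ (a tie or a penalty with the cut fixed); the only admissible merge (occurring when $\psi=1$) is a tie; and the only admissible flip is non-improving precisely because the majority already occupies the weakly better side.

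For the convergence bound I would establish two structural invariants of round-robin better-response dynamics, using that triplets are processed back-to-back. The first is that a unanimous triplet is \emph{frozen}: by the type-(a) analysis no player in a unanimous triplet ever has a strictly improving deviation, so once unanimous a triplet never moves again; in particular $\psi$ is non-increasing and, starting from $\psi\ge 1$, never reaches $0$ (the $\psi=1\!\to\!0$ merge is a tie), so $\psi\ge 1$ holds throughout and type-(a) deviations stay strictly worse. The second is that when a non-unanimous triplet is processed while $\psi\ge 2$, it ends its turn frozen: in whatever order its three players are visited, a majority player performs at most one flip and then the minority player performs a merge, so the triplet becomes unanimous within its own turn after at most two improving steps. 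Consequently a single round-robin pass suffices: scanning the triplets in order freezes the non-unanimous ones one by one (each merge decrementing $\psi$) until $\psi=1$, whereupon the last non-unanimous triplet faces a fixed, already-frozen partition of all other vertices and performs at most one flip onto its locally optimal side, yielding a profile satisfying the characterization. Writing $\psi_0:=\psi(\va)$ for the starting profile, each of the $\psi_0-1$ frozen triplets contributes at most two improving steps and the survivor at most one, for a total of at most $2\psi_0-1\le 2|V|-1$ steps, which is below the number $3|V|$ of players.

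The main obstacle is the convergence count, not the characterization: a flip strictly raises the \emph{integer} cut weight, so one might fear an exponentially long \textsc{MaxCut}-style local-search path of flips. The resolution---and the crux I would emphasize---is that flips can only be performed by non-unanimous triplets, and the merge incentive present whenever $\psi\ge 2$ forces every non-unanimous triplet met during the pass to collapse to a frozen state within its own turn before it can be flipped a second time; only the single surviving $\psi=1$ triplet, now facing a frozen environment, can flip, and it does so at most once. I would take particular care to confirm that a flip performed by an early triplet during the pass, although it perturbs the global partition, cannot reactivate an already-frozen triplet, since its players' only deviation is of type (a), which remains strictly worse as long as $\psi\ge 1$.
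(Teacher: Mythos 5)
Your proof is correct and follows essentially the same route as the paper's: the equilibrium characterization rests on the same case analysis (breaking unanimity is never strictly improving, merges are strictly improving exactly when $\psi\ge 2$, flips are strictly improving exactly when the opposite side is strictly better), and the convergence bound rests on the same observation that non-unanimous triplets collapse to unanimity one by one until a single non-unanimous triplet remains, which flips at most once. Your write-up is simply more explicit than the paper's one-sentence convergence argument---in particular the freezing invariants and the round-robin accounting giving the bound $2\psi_0-1\le 2|V|-1$---and these details check out.
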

\begin{proof}
    Indeed, if the two properties are satisfied, then (a) players of unanimous triplets have no incentives to unilaterally deviate (according to better-response improvements) because we are using majority voting, and (b) the same is true for non-unanimous triplets if it already voted for the (weakly) better subset. For the other direction, let us discuss the situations not covered by the two situations. If there are at least two non-unanimous triplets, then the minority player of any non-unanimous triplet has incentives to join the other two player's decision in order to push down the punishment term. If there is only one non-unanimous triplet, and it is voting for a strictly worse subset, then any of the majority players have incentives to switch to the minority player's decision to turn the majority vote. This concludes the other direction of the Nash equilibrium characterization. Last but not least, the statement about better-response dynamics follows from the fact that non-unanimous triplets will come to unanimous agreement first until there is only one non-unanimous triplet left, and that one will turn its majority at most once.
\end{proof} 

Next, we can show that the construction of corresponding $\Gamma$ gives rise to a tight \PLS{}-reduction from {\sc MaxCut/Flip} to finding a pure perfect equilibrium of concise potential games, that is, \Cref{thm:tight plsh}. First, we show it gives rise to a \PLS{}-reduction in the first place. As mentioned earlier, we associate any action profile $\va$ to the partition $(B_\va, C_\va)$ that one obtains by considering the majority vote for each triplet. Then we have the following.
\begin{lemma}
\label{lem:unanimity}
    If $\va^*$ is a pure perfect equilibrium of $\Gamma$ 
    then $\psi(\va^*) = 0$.
\end{lemma}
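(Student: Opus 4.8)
The plan is to argue by contradiction, exploiting the first-order effect of the non-unanimity penalty. Since every perfect equilibrium is a Nash equilibrium, the preceding Nash-equilibrium characterization already gives $\psi(\va^*) \le 1$, so it suffices to rule out $\psi(\va^*) = 1$. Assume then that $\va^*$ has exactly one non-unanimous triplet, say at node $v$, and without loss of generality suppose its two majority players play $b$ (so that $v \in B_{\va^*}$) while its minority player $v^{(3)}$ plays $c$. I would show that, in the perturbed game $\Gamma^{(\eps)}$, the minority player $v^{(3)}$ has a symbolic best response that strictly prefers the majority action $b$ over its current action $c$. By the symbolic characterization underlying \Cref{alg:delta-br} and \Cref{thm:br-converges-to-perfect}, this means the $\eps$-pure profile at $\va^*$ is not a symbolic Nash equilibrium of $\Gamma^{(\eps)}$, hence $\va^*$ is not a pure perfect equilibrium---a contradiction.

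Concretely, let $\vx^{*(\eps)}$ denote the $\eps$-pure profile associated with $\va^*$, and write $q(\eps)$ and $p(\eps)$ for the symbolic utilities obtained when $v^{(3)}$ plays $b$, respectively $c$, against $\vx^{*(\eps)}_{-v^{(3)}}$. The goal is to prove $p \lesseps q$ by comparing coefficients. At order $\eps^0$ the two polynomials agree: regardless of $v^{(3)}$'s action the majority of $v$'s triplet remains $b$, so $v \in B_{\va^*}$ and the cut weight is unchanged, while $\psi = 1$ means no penalty is levied. The comparison is therefore settled by the coefficient of $\eps$, which I would compute as the sum of two nonnegative contributions.

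The first contribution is a cut-weight (tiebreaker) term: $v^{(3)}$ affects $v$'s side only in the event that exactly one of the two majority players trembles off $b$, an event of probability $\approx 2\eps$, in which switching $v^{(3)}$ from $c$ to $b$ moves $v$ from $C$ to $B$ and changes the cut by $\delta_v := w(v \in B) - w(v \in C)$ evaluated at the other vertices' $\va^*$ positions. Because $\va^*$ is Nash, the preceding characterization forces the majority to occupy the weakly better subset, so $\delta_v \ge 0$, contributing $2\delta_v \ge 0$ to $[\eps^1](q - p)$. The second, and decisive, contribution is the penalty term, which is where the design choice that the penalty activates only at $\psi \ge 2$ enters: when $v^{(3)} = c$ the triplet at $v$ is already non-unanimous, so a single tremble in any of the remaining $|V| - 1$ triplets (probability $\approx 3(|V|-1)\eps$) raises $\psi$ to $2$ and incurs penalty $2/\lambda$; whereas when $v^{(3)} = b$ the triplet at $v$ is unanimous and reaching $\psi \ge 2$ demands two independent trembles, an $O(\eps^2)$ event. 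Thus the penalty depresses $p$ but not $q$ at first order, contributing $+\,6(|V|-1)/\lambda$ to $[\eps^1](q - p)$. Summing, $[\eps^1](q - p) = 2\delta_v + 6(|V|-1)/\lambda > 0$, whence $p \lesseps q$; the symmetric case of a $c$-majority is identical.

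The main obstacle is the first-order bookkeeping: one must isolate precisely the tremble events in which $v^{(3)}$'s action matters at order $\eps$ and verify that the penalty is genuinely first-order in the minority configuration but only second-order once $v^{(3)}$ conforms. This asymmetry---and hence the whole argument---hinges on the penalty threshold being $\psi \ge 2$ rather than $\psi \ge 1$, which keeps $\psi = 1$ profiles Nash while still exerting a strictly positive first-order pull toward unanimity under trembles; the choice $\lambda = 6|V|$ merely keeps this pull positive and $O(1)$. Since the cut-weight term is nonnegative and the penalty term strictly positive (for $|V| \ge 2$), there is no cancellation, so the conclusion does not even require $\delta_v$ to be strictly positive.
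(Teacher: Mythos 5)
Your proof is correct and follows essentially the same route as the paper's: you rule out $\psi(\va^*)\ge 2$ (via the Nash characterization, whose proof is the paper's zero-order argument), reduce the remaining case to the minority player of the single non-unanimous triplet, and identify the same two first-order effects---the weakly nonnegative cut tiebreaker ($2\delta_v$, using that a Nash profile puts the majority on the weakly better side) and the strictly positive penalty avoidance ($6(|V|-1)/\lambda$, from a single tremble elsewhere pushing $\psi$ to $2$ only when the minority player has not conformed)---matching the paper's expression \eqref{eq:fo inc PLSh}. Your version is merely more explicit in the coefficient bookkeeping; there is no substantive difference in approach.
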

\begin{proof}
    It cannot be $\psi(\va^*) \geq 2$ because a minority player of any non-unanimous triplet would see zero-order incentives to deviate to reach unanimity. \emph{Zero-order} here means a strict positive utility gain from a unilateral deviation. It cannot be $\psi(\va^*) = 1$ either, because for the non-unanimous triplet $\{v^{(1)}, v^{(2)}, v^{(3)}\}$ in that situation either (a) a majority player sees zero-order incentives to deviate because $v$ is on currently on the worse side of the partition or, otherwise, (b) the minority player sees first-order incentives to reach unanimity. \emph{First-order} here means zero utility gain from a unilateral deviation, and strictly positive utility gain from one owns deviation under the off-chance that a single other player deviated in line with perfect equilibrium reasoning. The reason the minority player sees first-order incentives here is because the minority player (weakly) wants to avoid a majority player deviating, resulting in a non-beneficial swap for $v$ (since we are not in case (a)), and because the minority player strictly wants to avoid the penalty term from when a player of the other triplets deviated and created another non-unanimity. 
\end{proof}
\begin{lemma}
\label{lem:better maxcut from player deviation}
    If $\va^*$ is a pure perfect equilibrium of $\Gamma$ then $(B_{\va^*}, C_{\va^*})$ is a locally optimal solution to the original {\sc MaxCut/Flip} instance $(V,E,w)$.
\end{lemma}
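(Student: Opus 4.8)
The plan is to prove the contrapositive through the trembling-hand incentives, reusing the ``both forces are first order'' mechanism already behind \Cref{lem:unanimity}. By \Cref{lem:unanimity}, a pure perfect equilibrium $\va^*$ satisfies $\psi(\va^*) = 0$, so every triplet is unanimous and the induced partition $(B_{\va^*}, C_{\va^*})$ is well defined. Suppose toward a contradiction that $(B_{\va^*}, C_{\va^*})$ is \emph{not} locally optimal for the FLIP neighbourhood. Then some vertex $v$ can be moved to the opposite side so as to strictly increase the cut weight; since the weights are positive integers, this gain is an integer $\Delta \geq 1$. Without loss of generality $v \in B_{\va^*}$, so $v^{(1)}, v^{(2)}, v^{(3)}$ all play $b$. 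I will exhibit a strictly profitable \emph{symbolic} deviation for player $v^{(1)}$ from $b$ to $c$, which contradicts the requirement that, in the perturbed game, $b$ be a symbolic best response of $v^{(1)}$ (it is played with probability $1-\eps > \eps$).

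Next I would compare, as polynomials in $\eps$, the common payoff $U(b)$ of playing $b$ and $U(c)$ of playing $c$ against the $\eps$-pure profile of all other players, aiming to show $U(b) \lesseps U(c)$. To do so I condition on the trembles of $v^{(2)}, v^{(3)}$ and of the remaining triplets, isolating the coefficient of $\eps$. Exactly two first-order forces appear. First, with probability $2\eps(1-\eps)$ precisely one of $v^{(2)}, v^{(3)}$ trembles to $c$; in that event switching $v^{(1)}$ to $c$ turns the triplet's majority, moving $v$ to the $C$-side and improving the cut by $\Delta$ (to first order the other triplets sit at their unanimous $\va^*$-values, since flipping a triplet's majority needs two trembles, so the gain is exactly the FLIP improvement). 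This contributes $+2\Delta\,\eps$. Second, the penalty couples triplet $v$ to the others through $\psi$: when $v^{(1)}$ plays $c$ and exactly one \emph{other} triplet trembles into non-unanimity---an event of probability $\approx 3(|V|-1)\eps$---the count reaches $\psi = 2$, triggering the penalty $2/\lambda$, whereas under $b$ the same realization has $\psi = 1$ and no penalty. This contributes $-\tfrac{2}{\lambda}\cdot 3(|V|-1)\,\eps$. Every remaining discrepancy is $O(\eps^2)$: moving $v$ otherwise requires two same-triplet trembles, and any further penalty jumps need two or more other non-unanimous triplets.

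Collecting the two terms, the coefficient of $\eps$ in $U(c) - U(b)$ equals
\begin{equation*}
    2\Delta - \frac{6(|V|-1)}{\lambda}.
\end{equation*}
With the chosen $\lambda = 6|V|$ this is $2\Delta - (|V|-1)/|V| \geq 2 - 1 = 1 > 0$, using $\Delta \geq 1$; since both payoffs share constant term $w(B_{\va^*}, C_{\va^*})$, it follows that $U(b) \lesseps U(c)$. Hence $c$ is the strict symbolic best response of $v^{(1)}$, so $\va^*$ cannot be a pure perfect equilibrium---a contradiction. Therefore $(B_{\va^*}, C_{\va^*})$ is locally optimal. The main obstacle is the bookkeeping of the $\eps$-expansion: one must verify that the cut-improvement force and the unanimity-penalty force are \emph{both} genuinely first order and that no other first-order term intrudes. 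This is exactly where the design choice of activating the penalty only once $\psi \geq 2$ is essential---it demotes the cost of breaking unanimity from a zeroth-order to a first-order effect, so that a sufficiently large yet polynomial $\lambda$ lets the MaxCut gradient dominate.
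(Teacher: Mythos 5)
Your proposal is correct and follows essentially the same route as the paper's proof: both invoke \Cref{lem:unanimity} to get unanimity, then isolate the two first-order forces (cut improvement from a same-triplet tremble, penalty from an other-triplet tremble pushing $\psi$ to $2$) and arrive at the identical coefficient $\eps\bigl(2\Delta - 6(|V|-1)/\lambda\bigr)$, which the choice $\lambda = 6|V|$ makes positive. Your version merely spells out the tremble bookkeeping in more detail and phrases the argument as a contradiction rather than directly.
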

\begin{proof}
    By \Cref{lem:unanimity}, $\psi(\va^*) = 0$. To show that $(B_{\va^*}, C_{\va^*})$ is locally optimal, we show that no vertex $v$ can be flipped and yield a cut $(B', C')$ with an improvement weight, since that would imply that $\va^*$ could have been a pure perfect equilibrium in the first place. So let us study the deviation incentives of any player of the associated triplet, say, player $v^{(1)}$. That player does not see any zero-order deviation from deviating since we are deploying majority voting. On the first-order level, its incentives to deviate come either from a triplet partner possibly deviating or from player of another triplet deviating. If a player of another triplet deviates, the partition---and, therefore, the cut weight---would stay unchanged, but a penalty term of $\lambda$ would incur on the players. If it is a triplet partner instead that deviates together with $v^{(1)}$, the majority for $v$ turns, leading to a vertex flip in the partition but no penalty term. The first-order incentives summarize to 
    \begin{align}
        \label{eq:fo inc PLSh}
        \eps \cdot \Big( 2 \cdot \big( w(B', C') - w(B_\va, C_\va)\big) + 3 (|V|-1) \cdot (-\frac{2}{\lambda}) \Big) \, .
    \end{align}
    Note that the second part is negative and, by our choice of $\lambda$, strictly smaller in magnitude than $1$. Hence, the first-order incentives will be positive if and only if $w(B', C') > w(B_\va, C_\va)$, \ie{}, if and only if the FLIP neighbor $(B', C')$ is a local improvement over $(B_{\va^*}, C_{\va^*})$ in the MaxCut instance. This concludes the proof that if $\va^*$ is a pure perfect equilibrium, then $(B_{\va^*}, C_{\va^*})$ must be locally optimal.
\end{proof}

Next, we can show that this \PLS{}-reduction is tight. Define the subset $R$ in \Cref{defn:tight PLS-red} as all unanimous strategy profiles in $\Gamma$, that is, $R := \{\va : \psi(\va) = 0 \}$. By \Cref{lem:unanimity}, this set contains all locally optimal solutions of $\Gamma$. Next, we show the third condition of a tight \PLS{}-reduction.
\begin{lemma}
\label{lem:transition graph of triplet game}
    If there is an $\eps$-symbolic better-response path $P$ from a strategy profile $\va \in R$ to another $\va' \in R$ such that all profiles along that path are not in $R$, then there exists a vertex $v$ such that one can obtain $(B_\va', C_\va')$ from $(B_\va, C_\va)$ by flipping $v$.
\end{lemma}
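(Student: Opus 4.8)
The plan is to show that any such path can involve only a single non-unanimous triplet, and that the path necessarily flips exactly the corresponding vertex. First I would establish the invariant that $\psi \le 1$ holds at every node along the path. The crucial structural feature of the utility is the penalty asymmetry: the term $-\psi/\lambda$ appears only once $\psi \ge 2$. Consequently, starting from any profile with $\psi = 1$, if a player in a currently unanimous triplet switches, the majority of that triplet (and hence the induced partition and cut weight) is unchanged, but $\psi$ jumps to $2$ and the deviator incurs a \emph{zeroth-order} loss of $2/\lambda$. Since this loss is a nonzero constant in $\eps$, it dominates the lexicographic ($\lesseps$) comparison, so such a move is never an $\eps$-symbolic improvement, and $\psi$ can never exceed $1$ along the path.

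Next, since the premise guarantees that all internal nodes lie outside $R = \{\va : \psi(\va) = 0\}$, every internal node has $\psi \ge 1$, and combined with the invariant we obtain $\psi = 1$ internally. Because a single-player move alters only one triplet, the identity of the unique non-unanimous triplet cannot change without passing through a profile with $\psi = 0$ (forbidden internally) or $\psi = 2$ (ruled out by the invariant). Hence there is a single vertex $v$ whose triplet is non-unanimous at every internal node, while every other triplet retains its configuration from $\va$ throughout. In particular, the induced partitions $(B_\va, C_\va)$ and $(B_{\va'}, C_{\va'})$ can differ only at $v$.

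It then remains to show $v$ is genuinely flipped. I would enumerate the improving $\eps$-symbolic moves available to the three players of triplet $v$, classifying the triplet's configuration by its majority side. Writing $s$ for $v$'s side under $\va$ and $s'$ for the opposite side, the first move out of $\va \in R$ is improving precisely when it has a positive first-order incentive, which by the computation behind \eqref{eq:fo inc PLSh} happens if and only if flipping $v$ strictly increases the cut; thus $s$ is the strictly worse side. From there, a short case analysis shows that the only improving transitions form the chain ``unanimous on $s$'' $\to$ ``majority $s$'' $\to$ ``majority $s'$'' $\to$ ``unanimous on $s'$'': the minority player reverting to $s$ and a majority player reverting from $s'$ back to $s$ are both non-improving (the former has negative first-order incentive, the latter strictly decreases the cut), whereas flipping the majority to $s'$ is a zeroth-order improvement and restoring unanimity on $s'$ is a positive first-order move. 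Hence the only element of $R$ at which the path can terminate is the profile with $v$ unanimously on $s'$, that is, $(B_{\va'}, C_{\va'})$ is obtained from $(B_\va, C_\va)$ by flipping $v$.

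The hard part will be the bookkeeping in this last step: each ``improving / non-improving'' verdict requires expanding the relevant utility difference as a polynomial in $\eps$ and reading off its lexicographically leading coefficient, tracking both the cut-weight change (zeroth order) and the tremble-induced contributions from the two triplet partners and the $3(|V|-1)$ outside players (first order). The essential point that makes all these comparisons come out cleanly is the choice $\lambda = 6|V|$, which guarantees that the aggregate penalty contribution is strictly smaller than $1$ in magnitude and therefore can never overturn an integer cut-weight difference.
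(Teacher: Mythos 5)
Your proposal is correct and follows essentially the same route as the paper's proof: trace the path edge by edge and show it must be the three-step chain (break unanimity on the worse side via a first-order incentive, turn the majority via a zeroth-order cut gain, restore unanimity via a first-order incentive), with all other moves ruled out by the zeroth-order penalty $2/\lambda$ for creating a second non-unanimous triplet or by a strict cut-weight loss. Your explicit up-front invariant that $\psi \le 1$ holds throughout and that the identity of the unique non-unanimous triplet cannot change is a slightly cleaner packaging of what the paper argues locally at each step, but the substance is identical.
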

\begin{proof}
    Let $v$ be the vertex of the player that deviated at the first edge of $P$. Refer to the player as $v^{(i)}$. By the proof of \Cref{lem:better maxcut from player deviation}, we obtain that the player deviation must have incurred because the cut from flipping $v$ is a strict cut weight improvement over $(B_\va, C_\va)$. So let us consider the strategy profile $\va''$ in which $v^{(i)}$ has changed their strategy. The associated partition remains unchanged, and we have $\psi(\va'') = 1$ now. At this profile, the players of triplets not associated to $v$ will observe negative zero-order deviation incentives because deviating yields no partition change and yields a penalty term by introducing a second non-unanimous triplet. The other players of the triplet associated to $v$, on the other hand, observe a positive zero-order deviation incentive to join $v^{(i)}$ in order to turn the majority vote. Hence, the second edge of $P$ must be one of these two players switching their strategy. At that new profile, still no player of another triplet wants to deviate by analogous reasoning. The other two players of the triplet that have switched already, do not want to switch back since that yields a worse cut weight. Only the third triplet player sees (first-order) deviation incentives to switch, for the reason to avoid cut weight losses or penalty terms in the off-chance where another player deviates. Hence, the third edge of $P$ is the third player of the triplet joining the other two to obtain unanimity for $v$ again, landing at a strategy profile in $R$. By assumption on the path $P$, we obtain that that strategy profile must be $\va'$. And indeed, $(B_\va', C_\va')$ is obtained from $(B_\va, C_\va)$ solely by flipping $v$.
\end{proof}

This concludes the proof on \Cref{thm:tight plsh}. Informally, \Cref{lem:transition graph of triplet game} shows that $\eps$-symbolic better-response dynamics from an action profile $\va$ with unanimous triplets corresponds bijectively to, and is three times as long as, local search in the {\sc MaxCut/Flip} instance starting from the partition associated to $\va$. This implies that finding a pure perfect equilibrium inherits from {\sc MaxCut/Flip} that improvement paths ($\eps$-symbolic better-response dynamics) can have exponential length and that it is \PSPACE{}-complete to decide whether a solution can be reached within a fixed number of steps, \ie, \Cref{cor:shortBRpaths} and \Cref{cor:PSPACEcforepssymb}. 

Finally, we highlight that we exclusively worked with two-action games in this reduction as well as the previous one for \Cref{thm:short eps br paths}, yielding \Cref{cor:exppaths for proper}.

\section{Positive results for structured games}
\label{sec:positive}

In the previous section, we showed that computing pure perfect equilibria using better-response dynamics can take exponentially many steps. Therefore, one might naturally wonder whether there are structured classes of games, in which pure perfect equilibria can be computed in polynomial time using better-response dynamics. In this section, we show such positive result for two well-studied classes of games, namely symmetric matroid congestion games and symmetric network congestion games.

\subsection{Symmetric matroid congestion games}\label{sec:matroid}
Matroid congestion games are a broad class of congestion games, where the strategy space of each player consists of the bases of a matroid over the set of resources. Before we state the formal definition of such games, we recall some basic background about matroids. 

    \begin{definition}[Matroid]
        \label{def:matroid}
        A tuple $M = (\cR, \cI)$ is a matroid, if $\cR$ is a finite set of resources and $\cI$ is a nonempty family of subsets of $\cR$ with the property that 
        \begin{enumerate}
            \item if $I \in \cI$ and $J \subseteq I$ then $J \in \cI$, and 
            \item if $I, J \in \cI$ and $|J| < |I|$ then there exists $i \in I \setminus J$ with $J \cup \{i\} \in \cI$.
        \end{enumerate}
    \end{definition}

    Given a matroid $M = (\cR, \cI)$, we say that a subset of resources $I \subseteq \cR$ is independent if $I \in \cI$. Otherwise we call it dependent. A maximal independent subset of a matroid $M$ is called a basis of $M$ and its size is the rank of $M$, denoted as $\text{rk}(M)$.

    \begin{definition}[Matroid congestion game]
        \label{def:matroidcongestion}
        A matroid congestion game is a congestion game, where the strategy space of any player $i$ is the set of bases $B_i$ of a matroid $M_i = (\cR_i \subseteq \cR, \cI_i)$.
    \end{definition}

For a common set of resources $\cR$, a matroid $M = (\cR, \cI)$, and a set of bases $B$ of $M$, we denote a symmetric matroid congestion game as $\Gamma = (\cR, B, (d_r)_{r \in \cR})$, where the strategy space of each player is $B$.

Our approach to show fast convergence involves defining an  $\epsilon$-perturbed game and showing that best-response dynamics will converge to a Nash equilibrium of that game in polynomial time. To show this, we will allow players to randomize over their strategy space. Specifically, we let $\cX = \Delta(B)$ denote the set of mixed strategies over the common strategy space, $B$. For a mixed strategy profile $\vx = (\vx_1, \dots, \vx_n) \in \cX^n$ (and by slight abuse of notation) we define the expected cost $c_i$ of player $i \in [n]$ as
\begin{align}
    c_i(\vx_1, \dots, \vx_n) 
    & = \sum_{S_i \in B} \vx_i(S_i) \cdot \E_{S_{-i}} \left[\sum_{r \in S_i} d_r(n_r(S_i, S_{-i})) \right] \nonumber \\
    & = \sum_{S_i \in B} \vx_i(S_i) \cdot \sum_{r \in S_i} \E_{S_{-i}} \left[ d_r(1 + |\{j \neq i: r \in S_j\}| \right] \nonumber \\
    & = \sum_{S_i \in B} \vx_i(S_i) \cdot \sum_{r \in S_i}  \sum_{k=0}^{n-1}  d_r(1+k) \cdot \Pr_{S_{-i}} \left[ \sum_{j \neq i} \mathbbm{1}\{r \in S_j \} = k \right]. \label{eq:expectedcost}
\end{align}
More generally, we let $P_r(k)$ denote the probability that $\sum_{i \in [n]} \mathbbm{1}\{i \; \text{uses} \; r \} = k$. It is easy to show that in general $P_r(k)$ can be computed in polynomial time \cite{Papadimitriou05:Computing}. Specifically, given a mixed strategy profile, $\vx$, we can first compute the probability, $w_{r}(i)$, that player $i$ will use resource $r$. We have $w_{r}(i) = \sum_{S_i \in B} \vx_i(S_i) \mathbbm{1}\{r \in S_i\}$. Given this, we can compute $P_r(k)$ via simple dynamic programming. We first define $s_r(\ell, k)$ to be the probability that the total number of players, out of the first $\ell$ players, using this resource is $k$. Knowing that $s_r(0, 0) = 1$ and $s_r(0, 1) = 0$, we can define the recurrence relation as 
\begin{align}
   s_r(\ell, k) = s_r(\ell-1, k)(1-w_r(\ell)) + s_r(\ell-1, k-1)w_r(\ell). \label{eq:recurrence}
\end{align}
After polynomially many iterations we get the value of $s_r(n, k)$, which is equal to $P_r(k)$.

We are now ready to define our perturbed game. Given a symmetric, matroid congestion game, $\Gamma = (\cR, B, (d_r)_{r \in \cR})$, we define the $\epsilon$-perturbed game, $\Gamma^{(\epsilon)} = (\cR, \cX^{(\epsilon)}, (D_r)_{r \in \cR}))$, such that each player has strategy space $\cX^{(\epsilon)} = \{\vx \in \cX \;\; \text{such that} \;\; (\vx(S) = \epsilon \lor \vx(S) = 1-(|B|-1)\epsilon) \; \; \forall S \in B\}$ i.e., each player is restricted to playing $\eps$-pure strategies. With $D_r: (\cX^{(\epsilon)})^n \to \R$, we denote the expected delay of a resource $r$, 
defined as $D_r(\vx) = \E_S \left[d_r(|\{j \in [n]: r \in S_j\}|) \right] = \sum_{k=1}^{n} d_r(k) P_r(k)$. 

\begin{remark}
    We note that it is \emph{a priori} unclear that the perturbed game $\Gamma^{(\eps)}$ is a matroid congestion game, which is a property that we will need to prove our main result. We circumvent this issue by defining $\tilde{\Gamma}^{(\eps)}$, a game that is equivalent to $\Gamma^{(\eps)}$ (when players are only allowed to play $\eps$-pure strategies) and is also a matroid congestion game. Specifically, we let $\tilde{\Gamma}^{(\eps)} = (\cR, B, \tilde{d}_r^{(\eps)})$, where $\tilde{d}_r^{(\eps)}: \N \to \N[\eps]$, defined as 
    \begin{align}
        \tilde{d}_r^{(\eps)}(k) =\sum_{i=1}^{n} \tilde{P}_{r, k}(i) d_r(i), \label{eq:perturbeddelays}
    \end{align}
    is the expected delay of resource $r$, when exactly $k$ players are using $r$ in their $\eps$-pure strategy.
    
    To define $\tilde{P}_{r,k}$, we first introduce some additional notation. We let $B_r = \sum_{S \in B}\mathbbm{1}\{r \in S\}$ denote the number of occurrences of resource $r$ in the set of bases $B$. If a player chooses to play a subset of resources that contains $r$ with probability $1-|B|\eps$ then the probability with which this player is using resource $r$ is
    \begin{align*}
        h^{(\eps)}_r &= 1-(|B|-1)\eps + \left( B_r -1 \right) \eps \\
        & = 1- \left(B_r-|B| \right) \eps.
    \end{align*}
    Otherwise, the player is playing resource $r$ with probability $f_r^{(\eps)} =  B_r  \eps$.
    We define
    \begin{align}
        \tilde{P}_{r, k}(i) = \sum_{j= \max\{0, i-(n-k)\}}^{\min\{i, k\}}  \binom{k}{j}(h^{(\eps)}_r)^j (1-h^{(\eps)}_r)^{k-j} \cdot \binom{n-k}{i-j}(f_r^{(\eps)})^{i-j}(1-f_r^{(\eps)})^{n-k-(i-j)} \label{eq:tildeprob}
    \end{align}
    to be the probability that the sum of players playing resource $r$ is $i$, when exactly $k$ players are using $r$ in their $\eps$-pure strategy. It is not hard to verify that $\tilde{\Gamma}^{(\eps)}$ is indeed a symmetric matroid congestion game with potential $\tilde{\Phi}^{(\eps)}(S) = \sum_{r \in \cR}\sum_{k=1}^{n_r(S)} \tilde{d}_r^{(\eps)}(k)$.
\end{remark}
We are now ready to show that $\epsilon$-symbolic best-response dynamics in $\tilde{\Gamma}^{(\epsilon)}$ converge to a Nash equilibrium in polynomial time.

    \begin{theorem}
        Let $\Gamma$ be a symmetric matroid congestion game. Then after at most $n^2|\cR|\text{rk}(M)$ $\epsilon$-symbolic best response iterations in the perturbed game, $\tilde{\Gamma}^{(\epsilon)}$, a perfect equilibrium of $\Gamma$ is reached.  Moreover, each iteration can be implemented in time that is polynomial in the size of the input, assuming that we are given 1) a membership oracle (\ie, an oracle that given a subset $I \in 2^\cR$ decides whether $I \in \cI$) and 2) for each resource $r \in \cR$ the number $B_r = \sum_{S \in B}\mathbbm{1}\{r \in S\}$ of bases that contain $r$.
    \end{theorem}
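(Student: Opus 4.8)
The plan is to leverage the general principle that convergence of best-response dynamics in matroid congestion games is governed purely by the combinatorial (matroid) structure and is agnostic to the numerical values of the delays~\citep{Ackermann08:Impact}. Concretely, I would argue that running $\eps$-symbolic best-response dynamics in the perturbed game---where each player is confined to $\eps$-pure strategies centered on some basis $S \in B$---is \emph{identical} to running ordinary best-response dynamics in the auxiliary game $\tilde{\Gamma}^{(\eps)} = (\cR, B, \tilde{d}^{(\eps)}_r)$ introduced in the Remark above. Here an $\eps$-pure strategy of $\Gamma^{(\eps)}$ placing mass $1 - (|B|-1)\eps$ on the basis $S$ corresponds to the pure choice of $S$ in $\tilde{\Gamma}^{(\eps)}$, and since best responses in $\Gamma^{(\eps)}$ are always $\eps$-pure, the two games assign the same symbolic cost to every such profile. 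Granting this, the theorem reduces to (i) verifying that $\tilde{\Gamma}^{(\eps)}$ is a bona fide symmetric matroid congestion game, (ii) invoking the polynomial step bound of~\citet{Ackermann08:Impact}, and (iii) checking that each symbolic best-response step is polynomial-time computable.

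The crux---and the step I expect to be the main obstacle---is (i), namely establishing that $\tilde{\Gamma}^{(\eps)}$ is a congestion game at all. As flagged in the introduction, the subtlety is that in a perturbed game the expected delay incurred on a resource $r$ can \emph{a priori} depend on the \emph{identities} of the players whose perturbed strategies touch $r$, not merely on how many do; such a dependence would break the defining property of a congestion game. The observation that rescues us is \textbf{symmetry}: since every player shares the common basis set $B$, the probability that a given player actually uses resource $r$ takes only one of two values, namely $h^{(\eps)}_r = 1 - (|B| - B_r)\eps$ if that player's central basis contains $r$ and $f^{(\eps)}_r = B_r \eps$ otherwise. Consequently the distribution of the total load on $r$, and hence its expected delay, depends only on the number $k$ of players whose central basis contains $r$. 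This is exactly what is captured by the convolution-of-binomials formula for $\tilde{P}_{r,k}(i)$ and the induced delay $\tilde{d}^{(\eps)}_r(k) = \sum_i \tilde{P}_{r,k}(i)\, d_r(i)$, certifying that $\tilde{\Gamma}^{(\eps)}$ is a symmetric matroid congestion game with Rosenthal potential $\tilde{\Phi}^{(\eps)}(S) = \sum_{r \in \cR} \sum_{k=1}^{n_r(S)} \tilde{d}^{(\eps)}_r(k)$.

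For step (iii), a best response of a player in $\tilde{\Gamma}^{(\eps)}$ amounts to computing a minimum-cost basis of the matroid, where the weight of resource $r$ is the marginal symbolic delay $\tilde{d}^{(\eps)}_r(n_r(S_{-i}) + 1)$; because this weight is independent of which basis containing $r$ is chosen, the best response is precisely a minimum-weight basis. Each weight is a polynomial in $\eps$ of degree at most $n$, computable in polynomial time from the input values $B_r$ via the formula for $\tilde{P}_{r,k}$ (the $B_r$ are exactly what is needed to evaluate $h^{(\eps)}_r$ and $f^{(\eps)}_r$). A minimum-weight basis is then found greedily using the membership oracle, with the pairwise comparisons of symbolic weights resolved lexicographically under $\lesseps$ in polynomial time via~\Cref{lem:integer potential}. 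This yields the claimed per-iteration polynomial-time bound.

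Finally, for step (ii), the combinatorial convergence guarantee of~\citet{Ackermann08:Impact} for matroid congestion games depends only on the matroid structure and not on the particular delays, so it applies verbatim to $\tilde{\Gamma}^{(\eps)}$; formally, one fixes any concrete $\eps_0 > 0$ small enough that $\lesseps$-comparisons agree with numerical comparisons at $\eps_0$, so that the symbolic dynamics coincide with ordinary best-response dynamics of the matroid congestion game $\tilde{\Gamma}^{(\eps_0)}$. This caps the number of improvement steps at $n^2 |\cR|\,\text{rk}(M)$. The terminal profile is an $\eps$-pure Nash equilibrium of $\Gamma^{(\eps)}$, equivalently an $\eps$-perfect equilibrium of $\Gamma$, for all sufficiently small $\eps$, so taking $\eps \to 0^+$ delivers a perfect equilibrium of $\Gamma$ exactly as in~\Cref{thm:br-converges-to-perfect}, completing the argument.
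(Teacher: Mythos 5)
Your proposal is correct and follows essentially the same route as the paper: the Remark preceding the theorem already does the work of your step (i), showing via symmetry (the two-valued usage probabilities $h^{(\eps)}_r$ and $f^{(\eps)}_r$) that $\tilde{\Gamma}^{(\eps)}$ is a genuine symmetric matroid congestion game with delays $\tilde{d}^{(\eps)}_r(k)$, and the paper's proof then, like yours, combines the greedy minimum-weight-basis best response on symbolic polynomial weights of degree at most $n$ with the step bound of Ackermann et al., which is agnostic to the delay values. The only cosmetic difference is that the paper verifies the Rosenthal potential decreases by an explicit telescoping computation rather than your route of fixing a small numerical $\eps_0$ at which the symbolic comparisons are realized; both are valid ways to transfer the combinatorial convergence bound.
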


\begin{proof}
    We can assume without loss of generality that the best response of every player is a pure strategy in $\tilde{\Gamma}^{(\eps)}$, i.e., for a best-responding player $i$ there exists an $S_j \in B$ such that $\vx_j(S_j) = 1$ and $\vx_j(S_j') = 0$ for all $S_{j'} \in B \setminus S_j$.  By \eqref{eq:tildeprob}, we know that $\tilde{P}_r(k)$ is a polynomial in $\eps$ of degree at most $n$. Therefore, for concise games,  given $B_r$ for each $r \in \cR$, the cost of each player for a given strategy profile can be determined in polynomial time. Moreover, it is known that computing best responses can be done efficiently using a greedy algorithm \cite{edmonds1971matroids}.

    \sloppy For the proof of convergence, we will first show that $\tilde{\Phi}^{(\eps)}$ decreases (with respect to the ordering $\lesseps$) with every iteration. To see this, we can first rewrite the potential as $\tilde{\Phi}^{(\eps)}(S) = \sum_{i=1}^{n}\sum_{r \in S_i} \tilde{d}_{r}^{(\eps)}(n_{r, \leq i}(S))$, where $n_{r, \leq i}(S)$ is the number of players $j \leq i$ that use resource $r$.
    
    Now for a given strategy profile $S$ suppose that a player $i \in [n]$ has a profitable deviation $S'_i$. Without loss of generality we can take this player to be player $n$. Therefore, we get that
    \begin{align*}
        \tilde{\Phi}^{(\eps)}(S') - \tilde{\Phi}^{(\eps)}(S) &= \sum_{r \in S'_n} \tilde{d}_{r}^{(\eps)}(n_{r,\leq i}(S')) - \sum_{r \in S_n} \tilde{d}_{r}^{(\eps)}(n_{r, \leq i}(S)) \\ 
        & = \sum_{r \in S'_n} \tilde{d}_{r}^{(\eps)}(n_r(S')) - \sum_{r \in S_n} \tilde{d}_{r}^{(\eps)}(n_r(S)) \\
        & = c_n(S')-c_n(S).
    \end{align*}
    Also, from \citet{Ackermann08:Impact}, we know that in a matroid congestion game, the potential can only take $n^2|\cR|\text{rk}(M)$ many values, and therefore best-response dynamics in $\tilde{\Gamma}^{(\eps)}$ converge to a Nash equilibrium, $\vx$. Hence, the corresponding $\eps$-pure strategies, $\vx^{(\eps)}$ also form a Nash equilibrium in $\Gamma^{(\eps)}$. Lastly, we know that $\{\vx^{(\eps)}\}_{\eps \to 0+}$ defines a sequence of $\eps$-perfect equilibria of $\Gamma$, the limit point of which is a perfect equilibrium. This concludes our proof.
\end{proof}

\subsection{Symmetric network congestion games}\label{sec:network}

A network congestion game is a type of congestion game, defined on a graph, where the sets of resources available to each player correspond to paths in the graph. Concretely, we consider a graph $G = (V, E)$, two nodes $a_i, b_i \in V$ for each player $i \in [n]$, and a nondecreasing delay function $d_e: \N \to \N$ for each $e \in E$. A strategy, $S_i$ of player $i$ corresponds to a subset of edges that forms a valid path from $a_i$ to $b_i$. We denote the set of such paths $\cP_{a_i, b_i}$. As in general congestion games, the potential function associated with this class of games is $\Phi(S) = \sum_{e \in E}\sum_{i=1}^{n_e(S)}d_e(i)$.

In a symmetric network congestion game, every agent shares the same strategy space, which corresponds to all paths, between the same pair of vertices, $a, b \in V$.

As before, we will show that a perfect equilibrium can be found in polynomial time, by defining a perturbed game, and computing a Nash equilibrium of this game, this time by reducing the problem to a symbolic min-cost flow problem.

\begin{theorem}
 Let $\Gamma$ be a symmetric network congestion game. Then there exists an algorithm that outputs a perfect equilibrium of $\Gamma$ in time that is polynomial in the size of the input, assuming that we are given $B_e = \sum_{S \in \cP_{a, b}}\mathbbm{1}\{e \in S\}$ for each edge $e$. 
\end{theorem}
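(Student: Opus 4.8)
The plan is to mirror the construction used for symmetric matroid congestion games: first pass to the perturbed game $\Gamma^{(\eps)}$ in which every player is restricted to $\eps$-pure strategies over the common path set $\cP_{a,b}$, then recast it as an ordinary symmetric network congestion game $\tilde\Gamma^{(\eps)}$ whose delays are polynomials in $\eps$, and finally compute a potential minimizer of $\tilde\Gamma^{(\eps)}$ by a \emph{symbolic} min-cost flow computation \`a la \citet{Fabrikant04:Complexity}. Concretely, exactly as in \eqref{eq:perturbeddelays}--\eqref{eq:tildeprob} but with the edge counts $B_e$ in place of $B_r$, I would define perturbed delay functions $\tilde d_e^{(\eps)}(k)$ equal to the expected delay on edge $e$ when exactly $k$ players nominally route through $e$ in their $\eps$-pure strategy; since the game is symmetric, the numbers $B_e$ are all that is needed to evaluate the relevant probabilities. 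This yields a symmetric network congestion game $\tilde\Gamma^{(\eps)}$ with Rosenthal potential $\tilde\Phi^{(\eps)}(S)=\sum_{e\in E}\sum_{k=1}^{n_e(S)}\tilde d_e^{(\eps)}(k)$, and---just as in the matroid case---its $\eps$-pure Nash equilibria coincide with the $\eps$-pure Nash equilibria of $\Gamma^{(\eps)}$, hence with $\eps$-perfect equilibria of $\Gamma$.

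Next I would reduce the computation of a \emph{global} minimizer of $\tilde\Phi^{(\eps)}$ to a min-cost flow instance: replace each edge $e$ by $n$ parallel unit-capacity copies carrying the marginal costs $\tilde d_e^{(\eps)}(1),\dots,\tilde d_e^{(\eps)}(n)$, and seek a minimum-cost integral $a$--$b$ flow of value $n$. Since the original delays are non-negative, an optimal flow can be taken acyclic, so by flow decomposition it splits into $n$ unit $a$--$b$ paths, one per player; and provided the copies of each edge are filled in order of increasing index, the flow cost equals $\tilde\Phi^{(\eps)}$ of the induced profile, making a min-cost flow a potential minimizer and therefore a Nash equilibrium of $\tilde\Gamma^{(\eps)}$. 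The ``increasing index'' property is precisely the requirement that each $\tilde d_e^{(\eps)}(\cdot)$ be nondecreasing in the $\lesseps$ sense, which I would establish by stochastic dominance: raising $k$ to $k+1$ replaces one player who uses $e$ with the low probability $f_e^{(\eps)}=B_e\eps$ by one who uses it with the high probability $h_e^{(\eps)}=1-(|\cP_{a,b}|-B_e)\eps\ge f_e^{(\eps)}$, so the number of users of $e$ increases stochastically and, as the underlying $d_e$ is nondecreasing, so does the expected delay, uniformly for all sufficiently small $\eps$.

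Finally I would argue that the min-cost flow can be found symbolically in polynomial time. Each edge cost is a polynomial in $\eps$ of degree at most $n$ with polynomially many bits, computable via \Cref{lem:polynomial circuit}; comparisons of such costs reduce to the lexicographic test of \Cref{lem:integer potential}, and the only other operation needed is addition of polynomials. A standard strongly polynomial min-cost flow algorithm uses only the primitives $\{+,*,/,<\}$, so it runs verbatim over these symbolic costs---an instance of the general transfer principle of \Cref{theorem:stronglypolynomial}---or, alternatively, one can solve the resulting symbolic linear program directly. The optimal flow is a single path assignment that is optimal for all sufficiently small $\eps$, so the associated $\eps$-pure profiles $\{\vx^{(\eps)}\}_{\eps\to 0^+}$ form a sequence of $\eps$-perfect equilibria of $\Gamma$ whose limit---each player deterministically routing its assigned path---is a perfect equilibrium. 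I expect the main obstacle to be exactly the monotonicity of the perturbed delays: without it the edge-splitting reduction no longer computes a potential minimizer, as the per-edge cost ceases to be convex, so the stochastic-dominance step is the crux that legitimizes the min-cost flow formulation.
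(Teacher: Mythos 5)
Your proposal is correct and follows essentially the same route as the paper: pass to the perturbed game, fold the perturbation into symbolic delay functions $\tilde d_e^{(\eps)}$ using the counts $B_e$, reduce to a min-cost flow instance with $n$ unit-capacity parallel copies per edge \`a la \citet{Fabrikant04:Complexity}, and solve it symbolically (the paper uses a symbolic LP). The one place you go beyond the paper is in explicitly verifying, via stochastic dominance, that the perturbed delays $\tilde d_e^{(\eps)}(k)$ are nondecreasing in $k$ under $\lesseps$ --- a condition the edge-splitting reduction genuinely needs and which the paper leaves implicit --- so this is a welcome addition rather than a divergence.
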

\begin{proof}
    First, given a symmetric network congestion game, $\Gamma = (G = (V, E), \cP_{a, b}, \{d_e\}_{e \in E})$, where $\cP_{a, b}$ is the set of paths between vertices $a$ and $b$, we define the $\eps$-perturbed game $\Gamma^{(\eps)} = (G = (V, E), \cX^{(\eps)}_{a, b}, \{d_e\}_{e \in E})$, where $\cX^{(\eps)}_{a, b} = \{\vx \in \cX \;\; \text{such that} \;\; \vx(P) \geq \eps \;\; \forall P \in \cP_{a, b} \}$, \ie each player is using each path with probability at least $\eps$. We can define an equivalent game to $\Gamma^{(\eps)}$ by applying the perturbations directly to the delay functions. Specifically, we consider the game $\tilde{\Gamma}^{(\eps)} = (G = (V, E), \cP_{a, b}, \{\tilde{d}_e^{(\eps)}\}_{e \in E})$, (where $\tilde{d}_e^{(\eps)}$ are defined as in~\eqref{eq:perturbeddelays}), which is now a symmetric network congestion game.
    
    The next step is to reduce the problem of finding an Nash equilibrium in $\tilde{\Gamma}^{(\eps)}$ to a min-cost flow problem. The crux of this proof follows that of \citet{Fabrikant04:Complexity}, who show this reduction in the non-symbolic case. The idea is that given $\tilde{\Gamma}^{(\eps)}$ with graph $G = (V, E)$, we construct a graph, where each edge $e \in E$ is replaced with $n$ parallel edges, each with capacity $1$ and delays  $\tilde{d}_e^{(\eps)}(1), \dots, \tilde{d}_e^{(\eps)}(n)$. We note that given $B_e$ for each edge $e$, we can compute the total cost of each $e$ in polynomial time. We can now write the symbolic min-cost flow problem as a symbolic linear program, as follows
\begin{align}
    \text{minimize} \quad & \sum_{(u, v) \in E} f_{(u, v)} \cdot \tilde{d}_{(u, v)}^{(\eps)} \\
    \text{subject to} \quad 
        & f_{(u, v)} \leq 1 && \tag{capacity constraints} \\
        & f_{(u, v)} = - f_{(v, u)} && \tag{skew symmetry} \\
        & \sum_{v \in V} f_{(u, v)} = 0 \;\;  \forall u \neq a, b && \tag{flow conservation} \\
        & \sum_{v \in V} f_{(a, v)} = n \;\;  \text{and} \;\; \sum_{v \in V} f_{(v, b)} = n \tag{required flow constraints}
\end{align}
    where with $f_{(u, v)}$ we denote the flow of edge $(u, v)$. In terms of computational complexity, it is known that symbolic linear programs can be solved efficiently \cite{Farina18:Practical}.
    
    Since the solution to this linear program also minimizes the potential function $\tilde{\Phi}^{(\eps)}$, such a solution maps to a pure strategy profile, $\vx^{(\eps)}$, in $\tilde{\Gamma}^{(\eps)}$, where no player can unilaterally change their path to decrease their cost. Hence, $\{\vx^{(\eps)}\}_{\eps \to 0+}$ defines a sequence of $\eps$-perfect equilibria of $\Gamma$, the limit point of which is a perfect equilibrium.
\end{proof}

\subsection{Strongly polynomial algorithms and perturbed optimization}
\label{sec:strongly-poly}

A recurring question within the equilibrium refinement literature is this: if an algorithm can be executed on numerical inputs, can it also be run $\epsilon$-symbolically for an entire family of inputs when $\epsilon > 0$ is small enough? We have already encountered basic versions of this problem. The simplest example is implementing symbolic perfect best-response dynamics, which, on top of polynomial interpolation, rests on computing a maximum of a vector; this can be accomplished by a series of comparisons. A more interesting example pertains to symmetric network congestion games, which was covered a moment ago. There, the implementation rests on performing a series of additions and comparisons, which are again directly amenable to symbolic inputs---when $\epsilon$ is small enough. This begs the question: is there a more general class of algorithms for which one can directly guarantee symbolic implementation? We provide one such answer here by making a connection with \emph{strongly polynomial-time} algorithms.

A strongly polynomial-time algorithm is defined in the arithmetic model of computation. Here, the basic arithmetic operations---addition, subtraction, multiplication, division, and comparison---take one unit of time to execute, no matter the sizes of the operands. An algorithm is said to run in strongly polynomial time~\citep{Grotschel93:Geometric} if i) the number of operations in the arithmetic model of computation is bounded by a polynomial in the number of rationals given as part of the input, and ii) the space used by the algorithm is bounded by a polynomial in the size of the input.

We think of a strongly polynomial-time algorithm $\cA$ as a circuit with a polynomial number of gates $|\cG|$. The degree of a rational function is the sum of the degrees of its numerator and denominator. 

\strongpoly*

\begin{proof}
    Suppose that the output of a gate is given by a rational function $p(\epsilon)/q(\epsilon)$ when $\epsilon > 0$ is small enough so that the outputs consistently lie within a single piece. We let $p(\epsilon) = \alpha_0 + \alpha_1 \epsilon + \alpha_2 \epsilon^2 + \dots + \alpha_r \epsilon^r$ and $q(\epsilon) = \beta_0 + \beta_1 \epsilon + \beta_2 \epsilon^2 + \dots + \beta_{d_\cA - r} \epsilon^{d_\cA - r}$ for some $r \leq d_{\cA}$; we can always take $\beta_j = 1$ for some $j$ by scaling both the numerator and the denominator without affecting the underlying rational function. Consider now a series of distinct points $\epsilon_1, \dots, \epsilon_{d_\cA + 1}$, each of which can be described with $\poly(d_{\cA}, |\cG|)$, such that $q(\epsilon_j) \neq 0$ and $y_j = p(\epsilon_j)/q(\epsilon_j)$.\footnote{Whether $\epsilon_j$ is small enough for the piece corresponding to $p(\epsilon)/q(\epsilon)$ to be activated is moot for this argument.} The coefficients of $p(\epsilon)$ and $q(\epsilon)$ can be determined through any solution to the linear system $p(\epsilon_j) - q(\epsilon_j) y_j = 0$ for $j = 1, \dots, d_{\cA} + 1$. Any two rational functions $p/q$ and $p'/q'$ obtained through a solution to this linear system must be equal, excluding the roots of $q$ and $q'$. Indeed, the degree-$d_\cA$ polynomial $p q' - p' q$ must have $d_{\cA} + 1$ roots, which can only happen if it is the zero polynomial. In particular, each coefficient can be expressed with $\poly(d_\cA, |\cG|)$ bits since it is a solution to a linear system with $\poly(d_\cA, |\cG|)$ bit complexity---each $\epsilon_j$ and $y_j$ can be represented with $\poly(d_\cA, |\cG|)$ bits. This implies that each rational function $p(\epsilon)/q(\epsilon)$ has coefficients whose bit complexity is $\poly(d_\cA, |\cG|)$. As a result, we conclude that, by taking $\epsilon^* = 2^{p(d_\cA, |\cG|)}$ sufficiently small, all outputs in the comparison gates will lie in the same piece for any $\epsilon \leq \epsilon^*$. The output of the circuit as a rational function can be thus determined by selecting a sequence of distinct points $\epsilon_1, \dots, \epsilon_{d_\cA + 1} \leq \epsilon^*$ and applying fractional interpolation based on the (numerical) output of the circuit in the corresponding inputs.
\end{proof}
\section{Computing proper equilibria}\label{sec:computingproper}

We are concerned in this section with the complexity of computing proper equilibria in concise (normal-form) potential games.

For ease of notation, in what follows, we will drop the normalizing factor $\frac{1-\epsilon}{1-\epsilon^{m_i}}$ for each player $i$; since normalizing amounts to a uniform scaling of the utilities, this will not affect our algorithm, which only requires a relative comparison of actions' payoffs.

\begin{theorem}\label{thm:br-converges-to-proper}
    \Cref{alg:delta-br} returns a proper equilibrium of $\Gamma$ in finite time. Further, for concise games, \Cref{line:while,line:br} can be implemented in $\poly(n,m)$ time.
\end{theorem}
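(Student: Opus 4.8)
The plan is to follow the template of \Cref{thm:br-converges-to-perfect}, isolating the two places where the permutahedron perturbation of \Cref{def:proper-perturbed} forces a genuinely different treatment: the symbolic best response in \Cref{line:br} and the improving-deviation test in \Cref{line:while}. As in the perfect case, I regard a strategy profile arising during the dynamics as a tuple of \emph{vertices} of the perturbed strategy sets. Since a linear objective over a polytope is maximized at a vertex, and the vertices of $\cX_i^{(\epsilon)}$ are exactly the $m_i!$ permutations of $(\epsilon^0, \epsilon^1, \dots, \epsilon^{m_i-1})$, each update in \Cref{line:br} lands at such a vertex; consequently the reachable profiles form a finite set, which is what will later enable termination. (Dropping the normalizing factor $\tfrac{1-\epsilon}{1-\epsilon^{m_i}}$ is harmless: it rescales each player's strategy by a positive scalar, hence rescales $u_i$ and $\Phi$ by a positive scalar for small $\epsilon$, leaving every $\lesseps$-comparison intact.)

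The crux is to show that a symbolic best response over the permutahedron reduces to \emph{sorting}. Fix the opponents' vertex strategies $\vx_{-i}$; each coordinate of each $\vx_j$ is a monomial $\epsilon^{k}$ with $0 \le k \le m_j - 1$, so by the multilinearity of $u_i$ and \Cref{lem:polynomial circuit} each quantity $u_i(a_i, \vx_{-i})$ is a polynomial in $\epsilon$ of degree at most $n(m-1)$, computable in $\poly(n,m,\size(\Gamma))$ time. Because the probability vector $(\epsilon^0,\dots,\epsilon^{m_i-1})$ is strictly decreasing for $\epsilon \in (0,1)$, the rearrangement inequality shows that, for each fixed small $\epsilon$, the assignment maximizing $\langle \vu, \vx_i \rangle$ places the largest power $\epsilon^0$ on the action of largest utility, $\epsilon^1$ on the next, and so on. I would then observe that $\lesseps$ is a total preorder on the polynomials $\{u_i(a_i,\vx_{-i})\}_{a_i}$, so a \emph{single} permutation---the one sorting actions in decreasing $\lesseps$-order of utility, ties broken arbitrarily---simultaneously realizes this maximizer for all sufficiently small $\epsilon$. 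Hence \Cref{line:br} is implemented by computing the $m_i$ utility polynomials, sorting them with the $\lesseps$ comparator of \Cref{lem:integer potential}, and assigning decreasing powers of $\epsilon$; this is $O(m\log m)$ polynomial comparisons, i.e.\ $\poly(n,m)$ time, and \Cref{line:while} is the same comparison applied to the current profile's value against the best-response value.

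For convergence, I would first extend the potential identity to mixed strategies: since $\Phi(a_i,\vx_{-i}) - u_i(a_i,\vx_{-i})$ is independent of $a_i$ (this follows by multilinearity from the pure identity of \Cref{def:pot}), multilinearity gives $\Phi(\vx_i',\vx_{-i}) - \Phi(\vx_i,\vx_{-i}) = u_i(\vx_i',\vx_{-i}) - u_i(\vx_i,\vx_{-i})$. Thus each improving step strictly increases the polynomial $\Phi(\vx^{(\epsilon)})$ with respect to $\lesseps$. As the reachable profiles are finite, $\Phi(\vx^{(\epsilon)})$ takes finitely many values, so a strictly $\lesseps$-increasing sequence must terminate. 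At the terminal profile no player has a symbolic improving deviation---that is, for every player $i$ and every vertex $\tilde{\vx}_i^{(\epsilon)}$ it fails to hold that $u_i(\vx^{(\epsilon)}) \lesseps u_i(\tilde{\vx}_i^{(\epsilon)}, \vx_{-i}^{(\epsilon)})$---so there is an $\epsilon_0 > 0$ with $\vx^{(\epsilon)}$ a Nash equilibrium of $\Gamma^{(\epsilon)}$ for all $\epsilon \in (0,\epsilon_0)$. By \Cref{def:proper-perturbed}, such an equilibrium is an $\epsilon$-proper equilibrium of $\Gamma$, so $\{\vx^{(\epsilon)}\}_{\epsilon \to 0^+}$ is a sequence of $\epsilon$-proper equilibria whose limit point is a proper equilibrium per \Cref{def:proper}.

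I expect the sorting characterization of the symbolic best response to be the main obstacle---specifically, arguing that one permutation maximizes the linear objective uniformly over an interval $(0,\epsilon_0)$ of perturbation values rather than merely for each $\epsilon$ separately, and handling $\lesseps$-ties (where several power assignments are equally optimal) without disrupting either the per-iteration bound or the $\lesseps$-monotonicity of the potential. The remaining ingredients---degree and bit-size control via \Cref{lem:polynomial circuit,lem:integer potential}, and the finiteness-plus-monotonicity argument for termination---are essentially inherited from the perfect-equilibrium analysis.
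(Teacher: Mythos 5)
Your proposal is correct and follows essentially the same route as the paper's proof: best responses over the permutahedron are attained at vertices, the symbolic best response reduces to sorting the action-utility polynomials under $\lesseps$ (the paper asserts the sorting characterization directly where you invoke the rearrangement inequality, but the argument is the same), and termination follows from the $\lesseps$-monotone potential over a finite profile set, with the Kohlberg--Mertens perturbed-game characterization converting the terminal Nash equilibrium into an $\eps$-proper equilibrium. The uniformity-over-an-interval and tie-handling points you flag as potential obstacles are handled in the paper exactly as you anticipate, via lexicographic coefficient comparison.
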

\begin{proof}
    We begin by analyzing the per-iteration time complexity. First, we will argue that for any current joint strategy $\vx^{(\epsilon)} \in \cX^{(\epsilon)}$ of the perturbed game, an $\epsilon$-pure strategy best-response by any player, yields at least as much utility for that player as any mixed-strategy. 
    To show this, for any given player $i$ let $\veps_{i}^{(\pi)} =  (\epsilon^{\pi(0)}, \epsilon^{\pi(1)}, \ldots, \epsilon^{\pi(m_i-1)})$ be player $i$'s $\eps$-pure strategy corresponding to permutation $\pi$. Since any $\eps$-perturbed strategy must be a convex combination of $\{\veps_{i}^{\pi}: \pi \in S_{m_i-1}\}$, it follows that the maximum utility that player $i$ can guarantee when best responding to joint strategy $\vx^{(\eps)}$ is $\max_{\pi \in S_{m_i-1}} u_i(\veps_{i}^{(\pi)}, \vx_{-i}^{(\eps)})$. 
    Therefore, we know that throughout \Cref{alg:delta-br}, the probability mass that any player $i \in [n]$ assigns to any given $a_i \in \cA_i$ is of the form $\vx_i^{(\eps)}(a_i) = \eps^k$ for some $k \in \{0, \dots, m_i-1\}$. So for any such joint strategy, $\vx^{(\eps)} \in \cX^{(\eps)}$, and any two permutations, $\pi, \pi' \in S_{m_1-1}$, $p_i(\eps) := u_i(\veps_i^{(\pi)}, \vx_{-i}^{(\eps)})$ and $q_i(\eps) := u_i(\veps_i^{(\pi')}, \vx_{-i}^{(\eps)})$ are polynomials of degree at most $2m$. For concise games $p_i(\eps), q_i(\eps)$ can be evaluated in polynomial time using \Cref{lem:polynomial circuit}. To ascertain whether $p(\epsilon) \lesseps q(\epsilon)$, we can compare their coefficients in lexicographic order. This concludes the analysis of \Cref{line:while}.
    
    Now we show that \Cref{line:br} can be implemented efficiently too. For player $i$, we consider the $m_i$-dimensional utility vector $\vu_i(\vx_{-i}^{(\eps)}) = (u_i(a_i, \vx_{-i}^{(\eps)}))_{a_i \in \cA_i}$. Also, we consider the permutation $\pi_i \in S_{m_i-1}$ that corresponds to the sorted utility values of  $\vu_i(\vx_{-i}^{(\epsilon)})$, that is, $u_i(a_i, \vx_{-i}^{(\eps)}) \lesseps u_i(a_i', \vx_{-i}^{(\eps)})$ $\iff \pi_i(a_i) > \pi_i(a_i')$. Then the best-response of player $i$ in the perturbed strategy space must be the one-hot vector that assigns with probability mass $1$ to $\veps^{(\pi)}$. Indeed, if $u_i(a_i, \vx_{-i}^{(\eps)}) \lesseps u_i(a_i', \vx_{-i}^{(\eps)})$, then $\pi(a_i) > \pi(a_i')$ for any valid best-response $\eps^{(\pi)}$. So, \Cref{line:br} in \Cref{alg:delta-br} can be implemented in polynomial time by sorting the elements of $\vu_i(\vx_{-i}^{(\eps)})$.
    
    We recall that from \citet{Kohlberg86:strategic} we know that a Nash equilibrium of an $\eps$-perturbed game $\Gamma_\eps$ is an $\eps$-proper equilibrium of the original game $\Gamma$. Therefore, the proof of convergence follows a similar argument to the corresponding proof in \Cref{thm:br-converges-to-perfect}.
\end{proof}

\begin{theorem}\label{thm:proper-pls}
    Finding a proper equilibrium of a concise potential game is in \PLS.
\end{theorem}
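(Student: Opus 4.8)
The plan is to reuse the $\PLS$-membership template from \Cref{thm:perfect-pls}, with pure action profiles replaced by \emph{permutation profiles} indexing the vertices of the permutahedra in \Cref{def:proper-perturbed}. For the {\sc LocalOpt} instance (\Cref{def:pls}) I take the feasible set $\cA$ to be the set of profiles $\pi = (\pi_1, \ldots, \pi_n)$ with $\pi_i \in S_{m_i - 1}$; each $\pi_i$ has a binary encoding of size $O(m_i \log m_i)$, so $\pi$ is encoded in polynomially many bits as an element of $[2^k]$, and malformed encodings are sent to a fixed default profile. To $\pi$ I associate the symbolic $\eps$-pure strategy profile $\vx_\pi$ in which player $i$ plays the permutahedron vertex $(\eps^{\pi_i(0)}, \ldots, \eps^{\pi_i(m_i-1)})$, dropping the normalizing factor as at the start of the subsection.

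Next I build the value function. Substituting $\vx_\pi$ into the multilinear potential (\Cref{def:concise_potential}) replaces each player's block by a single monomial $\eps^{\pi_i(a_i)}$, so $\Phi(\vx_\pi)$ is a univariate polynomial in $\eps$ of degree at most $\sum_{i}(m_i-1) \le n(m-1)$ with polynomially many bits per coefficient; by \Cref{lem:polynomial circuit} it is computable in polynomial time. I then set $V(\pi) = \psi(\Phi(\vx_\pi))$ with $\psi$ the order-preserving embedding of \Cref{lem:integer potential}, so that $V(\pi) < V(\pi')$ iff $\Phi(\vx_\pi) \lesseps \Phi(\vx_{\pi'})$ and $V$ is efficiently computable.

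The step function $S$ is a symbolic better-response. Given $\pi$, for each player $i$ in turn I compute the sorting permutation $\pi_i^\star$ of the utility vector $\vu_i(\vx_{-i})$---by the per-iteration analysis in \Cref{thm:br-converges-to-proper} this is exactly player $i$'s best response over its permutahedron---and test whether $\Phi(\vx_{(\pi_i^\star, \pi_{-i})}) \gteps \Phi(\vx_\pi)$; if so I output the updated profile, and otherwise I return $\pi$. Because $\Phi$ is a potential, the change it records under a unilateral switch of player $i$ equals the change in that player's utility, so an improving best-response move is precisely a lexicographic increase of $V$. Hence any local optimum of $V$ is a profile at which no player can strictly improve by moving to any vertex of its permutahedron; as best responses are always attained at vertices, $\vx_\pi$ is then a symbolic Nash equilibrium of the perturbed game, i.e.\ an $\eps$-proper equilibrium of $\Gamma$ for all small $\eps > 0$ by \citet{Kohlberg86:strategic}, whose limit as $\eps \to 0^+$ is a proper equilibrium (\Cref{def:proper}). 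This exhibits the problem as a {\sc LocalOpt} instance, proving membership in $\PLS$.

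The only genuine departure from \Cref{thm:perfect-pls} is that each player now controls an entire permutation rather than a single dominant coordinate, so the two facts I must confirm are: (i) $\vx_\pi$ ranges exactly over the vertices of the permutahedron and best responses are attained there, so that optimizing over permutation profiles is without loss; and (ii) the potential evaluated on $\vx_\pi$ retains polynomially bounded $\eps$-degree and bit-complexity, so that \Cref{lem:polynomial circuit,lem:integer potential} apply verbatim. Both were effectively established in \Cref{thm:br-converges-to-proper}; I expect the mild bookkeeping around encoding permutation profiles into $[2^k]$ to be the only remaining routine obstacle.
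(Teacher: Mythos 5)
Your proposal is correct and takes essentially the same route as the paper: the paper's own proof of \Cref{thm:proper-pls} simply states that it follows the template of \Cref{thm:perfect-pls}, with the feasible set now being permutation profiles indexing permutahedron vertices and the symbolic best response computed by sorting as in \Cref{thm:br-converges-to-proper}. Your write-up is a faithful (and somewhat more explicit) instantiation of exactly that argument, including the correct degree bound $\sum_i (m_i-1)$ needed for \Cref{lem:polynomial circuit,lem:integer potential} to apply.
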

\begin{proof}
    The proof follows similarly to \Cref{thm:perfect-pls}.
\end{proof}

\section{Price of anarchy of perfect and proper equilibria}

Perfect equilibria---and refinements thereof---constitute a subset of Nash equilibria; from a worst-case perspective, this means that they can only improve the solution quality. One can quantify this through the \emph{price of anarchy} framework~\citep{Koutsoupias99:Worst}. Namely, for a game $\Gamma$ with nonnegative utilities and a solution concept $\solcon$, we define
\begin{equation}
    \label{eq:poa}
    \mathsf{PoA}_{\solcon} \defeq \frac{\max_{\vec{a} \in \cA} \SW(\vec{a}) }{ \inf_{ \vec{x} \in \solcon(\Gamma)} \SW(\vx)}.
\end{equation}
Above, $\SW(\vec{a}) = \sum_{i=1}^n u_i(\vec{a}) $ denotes the (utilitarian) social welfare and $\SW(\vx) = \E_{\vec{a} \sim \vx} \SW(\vec{a})$. It is further assumed that the set of outcomes of $\Gamma$ per the solution concept $\solcon$, denoted by $\solcon(\Gamma)$, is such that $\solcon(\Gamma) \neq \emptyset$. We only treat the non-trivial case in which not all utilities are zero; as a result, if $\inf_{ \vec{x} \in \solcon(\Gamma)} \SW(\vx) = 0$, then $\mathsf{PoA}_{\solcon} = \infty$. In what follows, $\mathsf{NashEq}(\Gamma)$, $\mathsf{PerfEq}(\Gamma)$, and $\mathsf{PropEq}(\Gamma)$ denote the set of Nash, perfect, and proper equilibria of $\Gamma$, respectively.

With this vocabulary at hand, we first point out that the simple example of~\Cref{fig:nash-vs-perfect} already establishes a stark separation between Nash and perfect equilibria:

\begin{proposition}
    There exists a $2 \times 2$ identical-interest game such that $\mathsf{PoA}_{\mathsf{NashEq}} = \infty$, whereas $\mathsf{PoA}_{\mathsf{PerfEq}} = 1$.
\end{proposition}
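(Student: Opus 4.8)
The plan is to take $\Gamma$ to be exactly the $2 \times 2$ identical-interest game of~\Cref{fig:nash-vs-perfect}, and to evaluate both prices of anarchy directly from the definition~\eqref{eq:poa}. Writing $p \defeq \vx_1(\textsf{R1})$ and $q \defeq \vx_2(\textsf{C1})$, the common payoff of both players is $u_1 = u_2 = p q$, so $\SW(\vx) = 2 p q$; the numerator $\max_{\vec a \in \cA} \SW(\vec a)$ therefore equals $2$, attained only at $(\textsf{R1}, \textsf{C1})$.

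First I would enumerate $\mathsf{NashEq}(\Gamma)$. Player $1$ strictly prefers $\textsf{R1}$ whenever $q > 0$ and is indifferent when $q = 0$, and symmetrically for player $2$; a short case analysis then shows that the only equilibria are the two pure profiles $(\textsf{R1}, \textsf{C1})$ and $(\textsf{R2}, \textsf{C2})$, with welfares $2$ and $0$ respectively. Since $(\textsf{R2}, \textsf{C2}) \in \mathsf{NashEq}(\Gamma)$ has $\SW = 0$, the denominator of~\eqref{eq:poa} vanishes, and hence $\mathsf{PoA}_{\mathsf{NashEq}} = \infty$ by the convention adopted after~\eqref{eq:poa}.

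Next I would show that $\mathsf{PerfEq}(\Gamma) = \{(\textsf{R1}, \textsf{C1})\}$. Fix any fully mixed profile, so that $p > 0$ and $q > 0$. Then $u_1(\textsf{R1}, \vx_2) = q > 0 = u_1(\textsf{R2}, \vx_2)$ and symmetrically $u_2(\textsf{C1}, \vx_1) > u_2(\textsf{C2}, \vx_1)$, so by~\Cref{def:eps-perfect} any $\eps$-perfect equilibrium must satisfy $\vx_1(\textsf{R2}) \le \eps$ and $\vx_2(\textsf{C2}) \le \eps$. Passing to the limit $\eps \to 0^+$ per~\Cref{def:PE} forces the unique perfect equilibrium to be $(\textsf{R1}, \textsf{C1})$, whose welfare is $2$. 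Thus $\inf_{\vx \in \mathsf{PerfEq}(\Gamma)} \SW(\vx) = 2 = \max_{\vec a \in \cA} \SW(\vec a)$, giving $\mathsf{PoA}_{\mathsf{PerfEq}} = 1$.

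There is no genuine obstacle here: the only step requiring any care is the perfect-equilibrium direction, namely verifying that the welfare-$0$ Nash equilibrium $(\textsf{R2}, \textsf{C2})$ is eliminated. This is immediate because under any full-support tremble $\textsf{R1}$ (resp.\ $\textsf{C1}$) strictly dominates $\textsf{R2}$ (resp.\ $\textsf{C2}$), so no sequence of $\eps$-perfect equilibria can converge to $(\textsf{R2}, \textsf{C2})$; this is precisely the reasoning already spelled out in the discussion accompanying~\Cref{fig:nash-vs-perfect}. All remaining steps are the trivial evaluations of $\SW$ at the two pure equilibria.
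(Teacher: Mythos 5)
Your proposal is correct and follows exactly the route the paper intends: the paper offers no separate proof beyond pointing to the game of~\Cref{fig:nash-vs-perfect}, whose caption already asserts that the only Nash equilibria are $(\textsf{R1},\textsf{C1})$ and $(\textsf{R2},\textsf{C2})$ and that only the former is perfect. Your write-up simply fills in the routine welfare computations and the strict-dominance argument under full-support trembles, all of which are sound.
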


One criticism of the example of~\Cref{fig:nash-vs-perfect} is that the worst-case Nash equilibrium is brittle: gradient descent initialized at random would converge to the optimal equilibrium almost surely. But, as we pointed out in our introduction, an analogous dichotomy persists even if one initializes gradient descent at random (\Cref{fig:quiver_perfect}), which can be formalized through the \emph{average} price of anarchy notion introduced by~\citet{Sakos24:Beating}.

\paragraph{Lower bounds for polynomial congestion games} While there has been a tremendous amount of interest in algorithmic game theory toward characterizing the price of anarchy with respect to Nash equilibria, less is known about equilibrium refinements from that perspective, with some notable exceptions highlighted in~\Cref{sec:related}. 

Here, we make use of the elegant framework of~\citet{Roughgarden14:Barriers} to establish lower bounds for both perfect and proper equilibria. As a proof of concept, we focus on congestion games, perhaps the most well-studied class of games from the perspective of price of anarchy; for consistency with prior work, in what follows, we switch to cost minimization as opposed to welfare maximization. The basic idea of~\citet{Roughgarden14:Barriers} is that one can harness hardness results for approximating the optimal social cost to obtain lower bounds on the price of anarchy of Nash equilibria. For our purposes, we leverage our previous results on the complexity of perfect and proper equilibria to establish similar lower bounds. On the hardness of approximation front, we rely on the tight lower bounds of~\citet{Paccagnan24:Congestion}, stated in~\Cref{theorem:Gairing}.

We first recall that in a congestion game we are given a finite set of resources $\cR$. Players are to select a subset of resources that they intent to use. We denote by $\cA_i \subseteq 2^\cR$ the action set of player $i \in [n]$. The cost of using a resource $r \in \cR$, $\ell_r : \mathbb{N} \to \R_{> 0}$, depends solely on the number of other players simultaneously using that resource. We denote by $n_r(S)$ the number of players using resource $r$ under the joint action profile $S \in \cA_1 \times \dots \times \cA_n$. The social cost is defined as
\begin{equation*}
    \mathsf{SC}(S) \defeq \sum_{i=1}^n \sum_{r \in \cR} \ell_r(n_r(S)).
\end{equation*}

A canonical example of a congestion game is a \emph{network congestion game}, wherein resources correspond to edges of a graph and players are to select a path. We will restrict our attention to congestion games such that the size of each action set $|\cA_i|$ is polynomial in the representation of the congestion game, so that a best response can be directly computed in polynomial time for both perfect and proper equilibria.

\begin{theorem}[\citep{Paccagnan24:Congestion}]
    \label{theorem:Gairing}
    Let $\ell : \mathbb{N} \to \mathbb{R}_{> 0}$ be a nondecreasing and semi-convex function, and
    \begin{equation*}
        \rho_\ell \defeq \sup_{\nu \in \mathbb{N}} \frac{ \E_{v \sim \mathsf{Poi}(\nu)} [v \ell(v)]  }{ \nu \ell(\nu) }.
    \end{equation*}
    For any $\epsilon > 0$, it is \NP-hard to distinguish between $\min_{\vec{a} \in \cA} \mathsf{SC}(\vec{a}) \leq p(|\cI|)$ and $\min_{\vec{a} \in \cA} \mathsf{SC}(\vec{a}) \geq (\rho_\ell - \epsilon) p(|\cI|)$, where $p(|\cI|)$ is some polynomial in the size of the input $|\cI|$.
\end{theorem}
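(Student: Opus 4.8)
This statement is quoted verbatim from \citet{Paccagnan24:Congestion}, so in our paper it is simply invoked; what follows is how I would reconstruct its proof. The plan is to exhibit a \emph{gap-preserving} reduction whose completeness/soundness ratio matches the Poisson quantity $\rho_\ell$ exactly. The starting point is the definition of $\rho_\ell$ itself, which already exhibits the worst case as a Poisson load: the ratio $\E_{v\sim\mathsf{Poi}(\nu)}[v\ell(v)]/(\nu\ell(\nu))$ compares the expected cost incurred when a resource's load is $\mathsf{Poi}(\nu)$ against the cost of the deterministic load $\nu$. I would first fix a finite maximizing mean $\nu^\star$ (and a truncation level $K$) attaining $\rho_\ell-\epsilon/2$; monotonicity and semi-convexity of $\ell$ guarantee that such a finite witness exists.

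Next I would construct a base gadget realizing the two extremal regimes. In the \emph{completeness} regime, a perfectly coordinated, integral assignment places a controlled integral load on each resource, giving per-resource cost $\nu^\star\,\ell(\nu^\star)$; in the \emph{soundness} regime, an uncoordinated, symmetric assignment makes the load on each resource converge, in the large-population limit, to $\mathsf{Poi}(\nu^\star)$. The quantitative heart of the argument is \emph{Poissonization}: having $N$ players each place weight $\nu^\star/N$ on a given resource makes its binomial load tend to $\mathsf{Poi}(\nu^\star)$ as $N \to \infty$, so the per-resource cost is exactly $\E_{v \sim \mathsf{Poi}(\nu^\star)}[v\,\ell(v)]$, whereas the coordinated optimum pays only $\nu^\star\ell(\nu^\star)$; their ratio is $\rho_\ell - \epsilon/2$ by the choice of $\nu^\star$.

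To upgrade the gadget to \NP-hardness, I would compose it with a hard gap problem obtained from the PCP theorem---for instance a Label-Cover instance amplified by parallel repetition to constant soundness, or (more aptly here) a Feige-style gap coverage problem whose extremal instances already display Poisson-type coverage. The variables and constraints of the underlying instance are wired into the resource and strategy structure so that a satisfying assignment forces the coordinated, integral configuration of social cost at most $p(|\cI|)$, while every assignment arising from an unsatisfiable instance is driven into the symmetric, Poisson-like configuration of cost at least $(\rho_\ell - \epsilon)\,p(|\cI|)$, where $p(|\cI|)$ counts the number of resources scaled by $\nu^\star\ell(\nu^\star)$.

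The main obstacle is the \emph{soundness} analysis. Unlike the price-of-anarchy statement, which need only lower-bound the cost of equilibria, here one must lower-bound $\min_{\vec a \in \cA}\mathsf{SC}(\vec a)$ over \emph{all} profiles; the reduction must therefore \emph{force} high cost on every assignment consistent with an unsatisfiable instance, not merely on self-interested ones. This demands a gadget whose only low-cost configurations correspond to globally consistent solutions, together with a matching argument showing that PCP soundness translates the per-resource load into the Poisson worst case up to the $\epsilon$ slack. Making the embedding gap-preserving with the \emph{precise} constant $\rho_\ell$---rather than some weaker factor---is the delicate part, and is exactly where the dual certificate witnessing $\rho_\ell$ must guide the gadget's design.
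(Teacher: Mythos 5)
The paper does not prove this statement at all: it is imported verbatim from \citet{Paccagnan24:Congestion} and used as a black box inside the price-of-anarchy argument (\Cref{theorem:poa-lower}), so there is no in-paper proof to compare your reconstruction against. You correctly recognized this, and your sketch of how the original proof should go is pointed in the right direction at a high level: the hardness of approximating $\min_{\vec a}\mathsf{SC}(\vec a)$ in congestion games does come from a gap-preserving reduction from a Feige-style covering problem, and the Poisson quantity $\rho_\ell$ does arise because in the soundness case the resource loads behave like balls-into-bins and hence converge to $\mathsf{Poi}(\nu)$, while the completeness case admits a coordinated integral assignment of load $\nu$ per resource. That said, what you have written is a proof plan rather than a proof: the entire difficulty — forcing \emph{every} profile consistent with a NO instance into the Poisson-cost regime, not just equilibrium profiles, and doing so with the exact constant $\rho_\ell-\epsilon$ rather than a weaker factor — is acknowledged but deferred in your last paragraph. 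Since the statement is a cited external result, that level of detail is acceptable for the purposes of this paper; just be aware that nothing you wrote substitutes for the actual argument in \citet{Paccagnan24:Congestion}, and if you intended to make the paper self-contained on this point you would need to carry out the soundness analysis you describe only in outline.
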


Above, we remind that a function $\ell : \mathbb{N} \to \mathbb{R}_{> 0}$ is called semi-convex if $\nu \ell(\nu)$ is convex; that is, $(\nu + 1) \ell(\nu + 1) - \nu \ell(\nu) \geq \nu \ell(\nu) - (\nu - 1) \ell(\nu - 1)$ for all $\nu \geq 2$. Further, $\mathsf{Poi}(\nu)$ denotes the Poisson distribution with parameter $\nu \in \N$.

When $\mathcal{L}_d$ contains all polynomials of degree $d$ with nonnegative coefficients, it follows that $\sup_{\ell \in \mathcal{L}} \rho_\ell$ is the \emph{$(d+1)$-Bell number}, which we denote by $\cB(d+1)$; it follows from existing asymptotic bounds that $\cB(d) \gtrsim d^{d^{1 - \epsilon}}$ for any $\epsilon > 0$~\citep{De81:Asymptotic}. By virtue of our $\PLS$ membership for both perfect (\Cref{theorem:polytope-perfect}) and proper equilibria (\Cref{thm:proper-pls}), we can now show the following.

\begin{theorem}
    \label{theorem:poa-lower}
    Consider the class of congestion games in which $\ell_r \in \mathcal{L}_d$ for every $r \in \cR$. If $\NP \neq \coNP$, we have $\mathsf{PoA}_{\mathsf{PerfEq}} \geq \cB(d+1) - \epsilon$ for any $\epsilon > 0$. Similarly, if $\NP \neq \coNP$, we have $\mathsf{PoA}_{\mathsf{PropEq}} \geq \cB(d+1) - \epsilon$ for any $\epsilon > 0$.
\end{theorem}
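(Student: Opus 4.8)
The plan is to instantiate the framework of Roughgarden (2014) for bounding the price of anarchy via inapproximability, using the $\PLS$ membership results established earlier to ensure that the relevant equilibria are polynomial-time computable. The key insight of that framework is a contrapositive argument: if the price of anarchy with respect to some solution concept $\solcon$ were strictly less than the inapproximability ratio, and if one could compute a member of $\solcon(\Gamma)$ in polynomial time, then one could approximate the optimal social cost to within that ratio—contradicting the hardness result of~\Cref{theorem:Gairing} (under $\NP \neq \coNP$).

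\textbf{First}, I would combine~\Cref{theorem:polytope-perfect} (for perfect equilibria) and~\Cref{thm:proper-pls} (for proper equilibria) with the assumption that each action set $|\cA_i|$ has polynomial size. Under this assumption, a best-response oracle is available in polynomial time, so by~\Cref{th:general} the relevant equilibrium refinements can be computed in $\PLS$; hence there is a (deterministic, by standard arguments) polynomial-time procedure producing a perfect (resp.\ proper) equilibrium $\vx \in \mathsf{PerfEq}(\Gamma)$ (resp.\ $\mathsf{PropEq}(\Gamma)$). The social cost $\mathsf{SC}(\vx)$ of this computed equilibrium is then, by definition of price of anarchy, at most $\mathsf{PoA}_{\solcon} \cdot \min_{\vec a} \mathsf{SC}(\vec a)$.

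\textbf{Next}, I would run the contrapositive. Suppose for contradiction that $\mathsf{PoA}_{\mathsf{PerfEq}} < \cB(d+1) - \epsilon$ for some fixed $\epsilon > 0$. Take $\ell \in \mathcal{L}_d$ approaching the supremum, so that $\rho_\ell \geq \cB(d+1) - \epsilon/2$, and consider a hard instance family from~\Cref{theorem:Gairing}. Given such an instance, compute a perfect equilibrium $\vx$ in polynomial time and output $\mathsf{SC}(\vx)$ as an estimate of the optimum. Since $\mathsf{SC}(\vx) \leq \mathsf{PoA}_{\mathsf{PerfEq}} \cdot \min_{\vec a} \mathsf{SC}(\vec a) < (\cB(d+1) - \epsilon) \min_{\vec a}\mathsf{SC}(\vec a)$, this estimate is strictly below the $(\rho_\ell - \epsilon')$-threshold in the YES/NO gap, letting us distinguish the two cases of~\Cref{theorem:Gairing}; this would place the promise problem in $\Pol$, yielding $\NP = \coNP$ (the hardness is for a problem whose complement is also hard, or equivalently one decides an $\NP$-hard distinguishing problem in polynomial time), a contradiction. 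The identical argument applies verbatim to proper equilibria via~\Cref{thm:proper-pls}, giving $\mathsf{PoA}_{\mathsf{PropEq}} \geq \cB(d+1) - \epsilon$.

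\textbf{The main obstacle} I anticipate is a subtlety in the social-cost accounting: the inapproximability gap in~\Cref{theorem:Gairing} is stated for the \emph{minimum} social cost $\min_{\vec a}\mathsf{SC}(\vec a)$, whereas the price-of-anarchy denominator involves the worst equilibrium's cost, and one must confirm that the equilibrium we compute—being a fractional/mixed object for proper equilibria—has a well-defined expected social cost comparable on the same polynomial scale $p(|\cI|)$. One must also verify that the computed refinement lies within the same game instance (no perturbation changes the optimum) and that the asymptotic bound $\cB(d) \gtrsim d^{d^{1-\epsilon}}$, hence $\cB(d+1) = d^{\Omega(d)}$, transfers the lower bound into the $d^{\Omega(d)}$ form quoted in~\Cref{tab:results}. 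These are bookkeeping matters, and the structural heart of the argument is simply the reduction sketched above.
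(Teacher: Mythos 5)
Your overall plan---instantiating Roughgarden's framework by combining the $\PLS$ membership of perfect/proper equilibria with the inapproximability bound of~\Cref{theorem:Gairing}---is exactly the paper's argument. However, there is a genuine error in the central complexity-theoretic step. You write that since the equilibrium refinements ``can be computed in $\PLS$,'' there is ``a (deterministic, by standard arguments) polynomial-time procedure producing a perfect (resp.\ proper) equilibrium,'' and you then conclude that the gap problem would be in $\Pol$. This inference is false: $\PLS$ membership does not yield a polynomial-time algorithm, and indeed computing a pure Nash (hence perfect) equilibrium of a congestion game is $\PLS$-complete even under the polynomial-action-set restriction, so no such polynomial-time procedure is known or believed to exist. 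Your contradiction therefore never materializes as stated.

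The correct way to close the argument---and the one the paper uses---is the following. If $\mathsf{PoA}_{\mathsf{PerfEq}} < \cB(d+1) - \epsilon$, then the $(\rho_\ell - \epsilon')$-gap problem of~\Cref{theorem:Gairing} \emph{reduces to} the $\PLS$ problem of finding a (pure) perfect equilibrium: any solution to the $\PLS$ instance has social cost within the gap and hence certifies which side of the promise holds. This makes a problem in $\PLS$ $\NP$-hard. By the classical result of Johnson, Papadimitriou, and Yannakakis, an $\NP$-hard problem in $\PLS$ (more generally, in $\FNP$ under their definition) implies $\NP = \coNP$---not $\Pol = \NP$. This is precisely why the theorem is conditioned on $\NP \neq \coNP$ rather than on $\Pol \neq \NP$; your version of the argument would, if it worked, only need the weaker hypothesis, which should itself have been a warning sign. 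The rest of your bookkeeping (taking $\ell$ near the supremum defining $\cB(d+1)$, restricting to polynomially many actions so that best responses are efficient, and noting the bound holds already for pure refinements so no expectation issues arise) is fine and matches the paper.
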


In fact, those lower bounds hold even for \emph{pure} perfect and \emph{pure} proper equilibria. \Cref{theorem:poa-lower} follows because in the contrary case, a problem in $\PLS$ would also be $\NP$-hard (by~\Cref{theorem:Gairing}), which is precluded subject to $\NP \neq \coNP$~\citep{Johnson88:How}. It is likely that the lower bound in~\Cref{theorem:poa-lower} can be pushed further through an explicit construction, as is the case for Nash equilibria; indeed, for Nash equilibria there is a certain gap between its price of anarchy and the best approximation ratio that can be achieved through a polynomial-time algorithm~\citep{Paccagnan24:Congestion}. We leave this question for future research.

\section{Equilibrium refinements for games in extensive form}\label{sec:efg}

This section introduces extensive-form games and equilibrium refinements therein.

\subsection{Notation and background}


Deciding which strategy each player should play in an extensive-form game of perfect recall can be encoded as a \emph{tree-form decision problem}. A tree-form decision problem for any player $i \in [n]$ is represented as a rooted tree, where each root-to-leaf path alternates between two types of nodes: \emph{decision points}, denoted as $j \in \cJ_i$ (which correspond to the infosets of the game), and \emph{observation points} (or \emph{sequences}), denoted as $\sigma \in \Sigma_i$. The root node of the tree, denoted as $\varnothing \in \Sigma_i$, is always an observation point. The edges of decision points are called  \emph{actions}, and the player must choose one of them before continuing. We denote the set of actions at decision point $j$ as $\cA_j$. At observation points, the labels are called \emph{observations}, one of which the player observes. We let $d_i = |\Sigma_i|$ denote the number of sequences in the tree.  The set of decision points that follow an observation point $\sigma \in \Sigma_i$, is denoted as $C_\sigma$. An observation point is denoted as $ja$, where $j$ is the the parent decision point and $j$ is the action that connects the two points. For any node $s$, we denote its parent node as $p_s$. Finally, we will use $\succeq$ to denote the partial order induced by the tree, \eg, $\Root \prec j$ for every decision point $j$.


A {\em sequence-form strategy}~\cite{Romanovskii62:Reduction,Koller94:Fast,Stengel96:Efficient} $\vx_i \in \R^{\Sigma_i}$ is a vector obeying the linear constraint system
\begin{align*}
    \vx_i[\varnothing] = 1, \;\;\; \vx_i[p_j] = \sum_{a \in A_j} \vx_i[ja] \;\; \forall j \in \cJ_i, \;\;\; \vx_i[\sigma] \geq 0 \;\; \forall \sigma \in \Sigma_i.
\end{align*}
We write $\cX_i$ to denote the set of all possible sequence form strategies for player $i$. Finally, for our purposes, an {\em extensive-form game} is a polytope game where each player's strategy set $\cX_i$ is a sequence-form strategy set.\footnote{Technically, this class of games is actually slightly more expressive than extensive-form games: extensive-form games would be equivalent to the class of games in which, additionally, utility functions are given explicitly as a (weighted) sum of monomials.}

\subsection{Computing extensive-form perfect equilibria}
\begin{definition} ($\eps$-perturbed EFG for extensive-form perfect equilibria)\label{def:qpe-perturbed-game}
    Let $\Gamma$ be an $n$-player extensive-form game and $\cX_i$ be the strategy set of player $i$. For some $\eps >0$, we let $\Gamma^{(\eps)}$ denote the perturbed sequence form game with the same structure as $\Gamma$ with the additional following restricted set of players' strategies. For player $i \in [n]$ we the perturbed sequence-form mixed strategies as 
    \begin{align*}
        \cX_i^{(\eps)} = \{\vx_i \in \cX_i: \vx_i[ja] \geq \eps \cdot \vx_i[p_j] \;\; \forall \sigma \in \Sigma_i\}.
    \end{align*}
\end{definition}
That is, in a perturbation for extensive-form perfect equilibria, every action $a$ is played at every decision point $j$ with probability at least $\eps$.

An extensive-form perfect equilibrium is now, as usual, defined as the limit of Nash equilibria of the above kind of $\eps$-perturbed games. The following can be easily shown by performing a bottom-up pass through the sequence-form tree:
\begin{theorem}\label{th:efpe}
    Best responses in $\cX_i^{(\eps)}$  can be computed in time polynomial in $d$ and the representation size of the utility vector. As a consequence, computing an EFPE of a potential game is in $\PLS$. 
\end{theorem}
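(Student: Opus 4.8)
The plan is to exhibit an efficient best-response oracle for the perturbed sequence-form polytope $\cX_i^{(\eps)}$ and then invoke \Cref{th:general} to conclude $\PLS$ membership. First I would reduce the best-response problem to maximizing a linear objective over $\cX_i^{(\eps)}$: after fixing the opponents' (symbolic) strategies $\vx_{-i} \in \cX_{-i}^{(\eps)}$, the multilinear utility $u_i$ becomes linear in player $i$'s sequence form, so it can be written as $\langle \vu, \vx_i \rangle$ for a utility vector $\vu$ whose entries are polynomials in $\eps$; by \Cref{lem:polynomial circuit} these coefficients are computable symbolically and have polynomially bounded degree and bit-complexity.

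The core of the argument is that this constrained linear program decouples across decision points and is solved by backward induction. Concretely, I would process decision points in bottom-up (leaves-first) order, maintaining for each sequence $\sigma$ a value $W(\sigma) \in \Q[\eps]$ equal to the best achievable contribution of the subtree rooted at $\sigma$ per unit of reach. At a decision point $j$ whose actions lead to sequences $\{ ja : a \in \cA_j \}$ with already-computed subtree values $W(ja)$, the constraint $\vx_i[ja] \ge \eps\,\vx_i[p_j]$ forces each action to carry conditional probability at least $\eps$; since the objective is linear, writing $\beta(a \mid j) = \eps + \gamma_a$ with $\gamma_a \ge 0$ and $\sum_a \gamma_a = 1 - |\cA_j|\eps$ shows the optimum places the residual mass $1 - (|\cA_j|-1)\eps$ on the action maximizing $W(ja)$ and exactly $\eps$ on every other action. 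The decision-point value is then $Q(j) = \max_a \{\ldots\}$ under this rule, and $W(\sigma) = \vu(\sigma) + \sum_{j \in C_\sigma} Q(j)$. A final top-down pass multiplies the chosen conditional distributions to recover the optimal $\vx_i \in \cX_i^{(\eps)}$.

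The delicate point---and the step I expect to be the main obstacle---is the symbolic bookkeeping. Each $W(\sigma)$ and $Q(j)$ is a polynomial in $\eps$, and selecting the maximizing action at $j$ requires comparing these polynomials with $\lesseps$; this is done lexicographically on their coefficients via \Cref{lem:integer potential}. Two things must be controlled: (i) the degrees and bit-sizes of all these polynomials stay polynomially bounded as products of conditional probabilities accumulate up the tree---guaranteed because each level multiplies by factors of degree at most one and the opponents contribute only polynomially bounded degree through \Cref{lem:polynomial circuit}; and (ii) all $\lesseps$ comparisons made throughout the traversal are simultaneously valid, i.e., there is a single $\eps_0 > 0$ below which every selected action is genuinely optimal. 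Since there are only polynomially many comparisons, each witnessed by a nonzero leading coefficient, such an $\eps_0$ exists, and the computed $\vx_i$ is an exact best response for all sufficiently small $\eps$.

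Finally, I would assemble the pieces. The $\eps$-perturbed game $\Gamma^{(\eps)}$ inherits the potential function of $\Gamma$, since restricting the domain to $\cX^{(\eps)}$ preserves the defining potential identity, so it is a potential game equipped with the efficient linear-optimization oracle just constructed. \Cref{th:general} then places the computation of a symbolic exact Nash equilibrium of $\Gamma^{(\eps)}$ in $\PLS$, and---exactly as in the proof of \Cref{thm:br-converges-to-perfect}---the limit of such equilibria as $\eps \to 0^+$ is an EFPE of $\Gamma$, yielding the claimed membership.
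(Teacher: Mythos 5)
Your proposal is correct and follows essentially the same route as the paper, whose proof consists only of the remark that the claim "can be easily shown by performing a bottom-up pass through the sequence-form tree" followed by an appeal to the general framework of \Cref{th:general}. You have simply filled in the details of that bottom-up pass — the vertex structure of the truncated simplex at each decision point, the symbolic polynomial bookkeeping, and the lexicographic comparisons — all of which are sound.
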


In extensive-form games, one can also define notions of equilibrium refinements by viewing the game as a polytope game with strategy set $\cX_i$ for each player, and then applying the definitions in the main body for perfect and proper equilibria in polytope games. These give the notions, respectively, of normal-form perfect and normal-form proper equilibria. In extensive-form games, it turns out that extensive-form perfect equilibria are {\em incomparable} to both of these notions~\cite{Mertens95:Two}.

Similarly, a symbolic best-response for the perturbation that arises under QPEs can also be computed efficiently using the perturbation of~\citet{Gatti20:Characterization}, implying the following result.

\begin{theorem}\label{th:qpe}
    Computing a QPE of a potential game is in $\PLS$.
\end{theorem}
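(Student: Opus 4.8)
The plan is to follow the exact template that established \PLS{} membership for EFPEs in \Cref{th:efpe}, replacing the perturbation of \Cref{def:qpe-perturbed-game} with the sequence-form perturbation that characterizes quasi-perfect equilibria. Concretely, I would first invoke the characterization of \citet{Gatti20:Characterization}, which expresses any QPE as a limit point, as $\eps \to 0^+$, of Nash equilibria of a perturbed game $\Gamma^{(\eps)}$ whose strategy sets $\cX_i^{(\eps)}$ are again carved out of the sequence-form polytope $\cX_i$ by a system of linear lower-bound constraints parameterized by $\eps$. With this characterization in hand, the entire argument reduces --- via \Cref{th:general} --- to exhibiting an efficient algorithm for symbolic linear optimization over $\cX_i^{(\eps)}$: given a utility vector $\vu$ whose entries are polynomials in $\eps$ of bounded degree, compute $\arg\max_{\vx_i \in \cX_i^{(\eps)}} \langle \vu, \vx_i \rangle$ symbolically, with the answer itself a bounded-degree polynomial vector in $\eps$. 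This is precisely the simpler, direct route alluded to in the footnote to the theorem, bypassing the optimal-spanning-set machinery behind \Cref{cor:nfpe-optimal spanning set}.

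The heart of the proof is therefore this linear-optimization step, which I would carry out by a single bottom-up (leaves-first) pass over the tree-form decision problem, in direct analogy with the EFPE case. Processing decision points $j \in \cJ_i$ in an order consistent with the tree's partial order $\succeq$, I would maintain for each sequence $\sigma$ a continuation value $V(\sigma) \in \Q[\eps]$ representing the best achievable $\eps$-perturbed value in the subtree rooted at $\sigma$. At a decision point $j$, the actions $a \in \cA_j$ are ranked according to their continuation values $V(ja)$ using the lexicographic order $\lesseps$ (comparable in polynomial time by \Cref{lem:integer potential}), and the conditional distribution over $\cA_j$ is then fixed by pushing as much mass as the perturbation's lower bounds permit onto the best action, exactly as the QPE perturbation prescribes. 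The crucial feature that distinguishes QPE from EFPE --- that each player treats their \emph{own} future play as optimal rather than trembling --- is precisely what makes $V(\sigma)$ well-defined recursively: the value fed up from a subtree is the pure optimum of that subtree rather than a perturbed convex combination, so the recursion closes cleanly. By \Cref{lemma:interpolation}, each $V(\sigma)$ remains a polynomial in $\eps$ of degree $O(d)$ with polynomially bounded bit complexity, so every comparison and arithmetic operation along the pass is efficient.

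The main obstacle I anticipate is verifying that the QPE perturbation --- whose lower bounds are imposed at the sequence-form level rather than through the clean relative bound $\vx_i[ja] \geq \eps \cdot \vx_i[p_j]$ of \Cref{def:qpe-perturbed-game} --- still decomposes across the tree so that the greedy, node-local mass assignment above is in fact globally optimal over $\cX_i^{(\eps)}$. I must confirm that, even with absolute (sequence-level) lower bounds combined with the flow constraints $\vx_i(p_j) = \sum_{a \in \cA_j} \vx_i[ja]$, the linearity of the objective together with the recursive structure of the sequence-form polytope lets the optimization separate into independent per-decision-point subproblems solved by greedily filling available parent mass in order of the $\lesseps$-ranked continuation values; this is what licenses the single bottom-up sweep. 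A secondary point is that the interaction between the self-rationality of QPE and the symbolic bounds must not create ties or cancellations that $\lesseps$ cannot resolve, but since all continuation values are low-degree polynomials and \Cref{lem:integer potential} furnishes a total order refining $\lesseps$, any residual ties can be broken consistently without affecting correctness. Once the bottom-up oracle is in place, \Cref{th:general} immediately yields that symbolic best-response dynamics on the potential game with strategy sets $\cX_i^{(\eps)}$ terminate at a symbolic Nash equilibrium of $\Gamma^{(\eps)}$ within \PLS{}, whose $\eps \to 0^+$ limit is a QPE by \citet{Gatti20:Characterization}, completing the argument.
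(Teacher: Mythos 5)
Your proposal follows essentially the same route as the paper, which likewise obtains \PLS{} membership for QPEs by taking the sequence-form perturbed-game characterization of \citet{Gatti20:Characterization}, showing that a symbolic best response over the perturbed polytope can be computed by a bottom-up traversal of the tree-form decision problem, and then invoking \Cref{th:general}; the paper's own proof is no more detailed than yours. One small caution: the value fed up from a subtree must be the optimum \emph{subject to the lower bounds inside that subtree} (an affine function of the incoming mass, with the surplus above the required minimum routed according to \emph{marginal} continuation values), not the ``pure optimum'' as one clause of your second paragraph suggests --- but your description of the greedy mass assignment and the obstacle you explicitly flag already point to the correct version of the recursion.
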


\section{Further omitted proofs}
\label{sec:furtherproofs}

Finally, we provide the proof of~\Cref{lemma:efficient-eval}, restated below.

\efficmult*

\begin{proof}
    We show the following claim inductively. For each gate $G$, let $L_G$ be the description length of the sub-DAG induced by $G$, including any rational constants or inputs in that sub-DAG. Let $P_G$ be the product of denominators of all rational constants and inputs in that sub-DAG. Clearly both $L_G$ and $\log_2 P_G$ are polynomial in the input length, and $\log_2 P_G \le L_G$. We claim that the output of every gate $G$ can be expressed as $v_G = a_G/P_G$, where $a_G$ is an integer and $|v_G| \le 2^{L_G}$. This would complete the proof immediately. We prove the claim inductively. The base case, for input or rational constant nodes, holds by definition. For the inductive case, let $g$ and $h$ be the two inputs of $G$.

    If $G$ is an addition gate, we have $|v_G|\le |v_g| + |v_h| \le 2^{L_g} + 2^{L_h} \le 2^{1 + \max\{L_g, L_h\}} \le 2^{L_G}$, and the denominator of $v_G$ is at most $P_G$ because the inputs to $G$ can be expressed with denominators $P_g$ and $P_h$ respectively, which by definition both divide $P_G$.

    If $G$ is a multiplication gate, then by definition we have $P_G = P_g P_h$ and $L_G > L_g + L_h$.  Therefore, we have $|v_G| = |v_g| |v_h| \le 2^{L_g} 2^{L_h} \le 2^{L_G}$ and $v_G = v_g v_h = a_g a_h / P_g P_h = a_g a_h / P_G$, as desired.
\end{proof}

\end{document}